\documentclass[11pt]{article}
\usepackage[letterpaper, portrait, margin=1in]{geometry}
\usepackage[utf8]{inputenc}
\usepackage{nicefrac}
\usepackage{natbib}

\usepackage{float}

\usepackage{amsmath, mathtools, amsthm, amssymb, thm-restate}
\usepackage{algorithm, algpseudocode, yhmath}

\usepackage[dvipsnames]{xcolor}

\usepackage{hyperref}

\usepackage[textwidth=4em,textsize=tiny]{todonotes}

\theoremstyle{plain}
\newtheorem{theorem}{Theorem}
\newtheorem{lemma}[theorem]{Lemma}
\newtheorem{fact}[theorem]{Fact}
\newtheorem{corollary}[theorem]{Corollary}
\newtheorem{proposition}[theorem]{Proposition}

\theoremstyle{definition}
\newtheorem{definition}[theorem]{Definition}

\theoremstyle{remark}

\usepackage[most]{tcolorbox}
\newtcolorbox{idea}[1][]
{
colbacktitle=cyan,
colback=cyan!10,
arc=1pt,
boxrule=1pt,
title=#1 
}

\newtcolorbox{update}[1][]
{
colbacktitle=gray,
colback=gray!10,
arc=1pt,
boxrule=1pt,
title=#1 
}

\newtcolorbox{question}[1][]
{
coltitle=black,
colbacktitle=yellow,
colback=yellow!10,
arc=1pt,
boxrule=1pt,
title=#1 
}

\newtcolorbox{note}[1][]
{
coltitle=black,
colbacktitle=green,
colback=green!10,
arc=1pt,
boxrule=1pt,
title=#1 
}

\newcommand{\be}{\boldsymbol{e}}

\newcommand{\bA}{\boldsymbol{A}}

\newcommand{\bT}{\boldsymbol{T}}

\newcommand{\sA}{\mathcal{A}}

\newcommand{\sD}{\mathcal{D}}

\newcommand{\sL}{\mathcal{L}}

\newcommand{\sT}{\mathcal{T}}

\newcommand{\iscq}{\texttt{ISC}}
\newcommand{\cisq}{\texttt{cIS}}
\newcommand{\ciscq}{\texttt{cISC}}

\newcommand{\graph}{G}
\newcommand{\sgraph}{H}
\newcommand{\vertexset}{V}
\newcommand{\edgeset}{E}
\newcommand{\vertexcount}{n}
\newcommand{\edgecount}{m}
\newcommand{\vertex}{v}
\newcommand{\altvertex}{u}
\newcommand{\edge}{e}
\newcommand{\neighbour}{{\sf N}}

\newcommand{\samplededges}{S}

\newcommand{\approxedgecount}{\widehat{\edgecount}}

\newcommand{\collisioncount}{r}

\newcommand{\iscollision}{\texttt{IsCollision}}
\newcommand{\collision}{k}
\newcommand{\preresamplesize}{t}
\newcommand{\birthdayestimate}{\texttt{BirthdayEstimate}}
\newcommand{\ensemblecolor}{\texttt{EnsembleColor}}
\newcommand{\ensemblecolorpartite}{\texttt{EnsembleColorPartite}}
\newcommand{\approxsubgraph}{\texttt{ApproxSubhypergraph}}
\newcommand{\mulestimation}{\texttt{MulEstimation}}
\newcommand{\fractionofedges}{p}
\newcommand{\approxfractionofedges}{\widehat{\fractionofedges}}
\newcommand{\edgeestimationISC}{\texttt{EdgeEstimationPartite}}
\newcommand{\edgeestimation}{\texttt{EdgeEstimation}}
\newcommand{\approxsubseq}{\texttt{ApproxSubsequence}}
\newcommand{\threshold}{\tau}
\newcommand{\iscoll}{{coll}}
\newcommand{\iterations}{\kappa}
\newcommand{\thresholdconst}{\zeta}
\newcommand{\subsequence}{\texttt{subseeq}}
\newcommand{\event}{\mathcal{E}}
\newcommand{\resamplesize}{s}

\newcommand{\birthdayevent}{\mathcal{E}_\mathrm{bir}}
\newcommand{\subseqevent}{\mathcal{E}_\mathrm{sub}}

\newcommand{\qsetno}{s}
\newcommand{\hyperlen}{d}
\newcommand{\instances}{t}
\newcommand{\onto}[2]{f(#1, #2)}

\newcommand{\dyes}{\sD_{\mathrm{yes}}}
\newcommand{\dno}{\sD_{\mathrm{no}}}

\newcommand{\ocvset}{\mathbf{K}}
\newcommand{\oclbl}{\boldsymbol{\ell}}
\newcommand{\oceset}{\be}

\newcommand{\lbthres}{\threshold}

\newcommand{\alg}{\sA}
\newcommand{\augalg}{\overline{\alg}}
\newcommand{\numq}{q}

\newcommand{\oracle}{\fbrac{\ocvset,\oclbl,\oceset}}

\newcommand{\field}[1]{\mathbb{#1}}
\newcommand{\R}{\field{R}}
\newcommand{\Nat}{\field{N}}

\newcommand{\func}[3]{{#1} : {#2} \rightarrow {#3}}

\newcommand{\sequence}[2]{{#1}_1,{#1}_2,...,{#1}_{#2}}
\newcommand{\fbrac}[1]{\left({#1}\right)}
\newcommand{\sbrac}[1]{\left\{{#1}\right\}}
\newcommand{\tbrac}[1]{\left[{#1}\right]}
\newcommand{\ang}[1]{\langle{#1}\rangle}
\newcommand{\abs}[1]{\left|{#1}\right|}
\newcommand{\size}[1]{\left|{#1}\right|}
\newcommand{\constant}{c}

\DeclareMathOperator*{\E}{\field{E}}
\DeclareMathOperator{\Var}{\mathrm{Var}}

\newcommand{\indicator}{\mathbf{1}}

\newcommand{\approxerror}{\varepsilon}
\newcommand{\confidence}{\delta}

\newcommand{\smallo}[1]{o\fbrac{{#1}}}

\newcommand{\bigomega}[1]{\Omega\fbrac{{#1}}}

\newcommand{\poly}[1]{\mathrm{poly}\fbrac{#1}}

\usepackage{cleveref}
\usepackage{booktabs}   
\usepackage{array}      
\usepackage{tabularx}   
\usepackage{makecell}   
\usepackage{soul}
\title{Power of Conditional Sampling on Hypergraphs}
\author{
Debarshi Chanda
\footnote{Indian Statistical Institute, Kolkata, India}
\and 
Buddha Dev Das 
\footnotemark[1]
\and 
Arijit Ghosh
\footnotemark[1]
\and 
Gopinath Mishra
\footnote{
Institute of Mathematical Sciences, Chennai, India
}
}

\date{}

\begin{document}

\maketitle
\thispagestyle{empty}

\begin{abstract}
Let $G=(V,E)$ be a $d$-uniform hypergraph with $|V|=n$ and $|E|=m$. 
Estimating $m$ from uniformly random hyperedge samples requires $\Theta(\sqrt{m})$ samples by a 
standard birthday-paradox argument. We investigate the power of conditional sampling in this setting. 
We consider the following oracle. Given a subset $X\subseteq V$, the oracle returns a hyperedge chosen 
uniformly at random from all hyperedges induced by $X$, provided that at least one such hyperedge 
exists. More generally, for an integer $1\le s\le d$, we consider a hierarchy of conditional sampling
oracles: given $s$ pairwise disjoint subsets $X_1,\ldots,X_s\subseteq U$, the oracle returns a hyperedge 
chosen uniformly at random from all hyperedges that are contained in $X_1\cup\cdots\cup X_s$ and 
intersect each set $X_i$, $i\in[s]$, provided that at least one such hyperedge exists.

For every $1\le s\le d$, we give a randomized algorithm that outputs
an estimate $\widehat{m}$ satisfying
$\widehat{m}\in[(1-\varepsilon)m,(1+\varepsilon)m]$ with probability
at least $2/3$, using only
$O_{s,d}(\operatorname{poly}(1/\varepsilon,\log n))$ conditional
samples. Note that the exponent of the $\log n$ factor in the query
complexity is an absolute constant, independent of both $s$ and $d$.
Consequently, conditional sampling yields an exponential separation
in query complexity over algorithms that rely solely on uniformly
random hyperedge samples.

We also study the corresponding decision version, in which the oracle
only reports whether a qualifying hyperedge exists. For every
$1\le s<d$, we prove that any algorithm using the $s$-colorful
decision oracle requires polynomially many queries, namely
$\widetilde{\Omega}_{s,d}\bigl(n^{\frac{d-s}{d-s+1}}\bigr)$. Thus, for every
$s<d$, access to uniformly random witness hyperedges (random
certificates) is exponentially more powerful than mere existence
queries. When $s=d$, one could achieve the query complexity with the decision oracle
also becomes $\operatorname{poly}(1/\varepsilon,\log n)$, but the
exponent of the $\log n$ factor is known to be $\Theta(d)$. The colorful hierarchy
thus exhibits a phase transition at $s=d$: for $s<d$, random
certificates are essential for polylogarithmic-query estimation,
whereas at $s=d$, decision queries alone suffice, yet random
certificates still provably reduce the exponent of the $\log n$
factor from $\Theta(d)$ to an absolute constant.

Let $G=(V,E)$ be a $d$-uniform hypergraph with $|V|=n$ and $|E|=m$. Estimating $m$ from uniformly random hyperedge samples requires $\Theta(\sqrt{m})$ samples by a birthday-paradox argument. We investigate the power of conditional sampling through a hierarchy of oracles: for an integer $1\le s\le d$, given pairwise disjoint subsets $X_1,\ldots,X_s\subseteq V$, the $s$-colorful oracle returns a hyperedge chosen uniformly at random from all hyperedges contained in $X_1\cup\cdots\cup X_s$ that intersect every $X_i$, provided one exists. For $s=1$, this samples a uniformly random hyperedge induced by $X_1$.

For every $1\le s\le d$, we give a randomized algorithm that outputs $\widehat{m}\in[(1-\varepsilon)m,(1+\varepsilon)m]$ with probability at least $2/3$ using $O_{s,d}(\operatorname{poly}(1/\varepsilon,\log n))$ conditional samples, where the exponent of $\log n$ is an absolute constant independent of both $s$ and $d$. Conditional sampling thus yields an exponential separation over uniform hyperedge sampling.

We also study the decision version, in which the oracle only reports whether a qualifying hyperedge exists. For every $1\le s<d$, we prove that any algorithm using the $s$-colorful decision oracle requires $\widetilde{\Omega}_{s,d}(n^{\frac{d-s}{d-s+1}})$ queries. Hence, for $s<d$, access to uniformly random witness hyperedges (random certificates) is exponentially more powerful than existence queries. For $s=d$, decision queries alone achieve $\operatorname{poly}(1/\varepsilon,\log n)$ query complexity, but the exponent of $\log n$ is known to be $\Theta(d)$. The colorful hierarchy thus exhibits a phase transition at $s=d$: for $s<d$, random certificates are essential for polylogarithmic-query estimation, whereas at $s=d$ decision queries suffice, yet random certificates still provably reduce the exponent of $\log n$ from $\Theta(d)$ to an absolute constant.
\end{abstract}

\paragraph{AI use statement.}
AI tools were used solely to improve the paper's readability. All mathematical ideas and proofs presented in this work are the authors' own. The authors take full responsibility for the content of the submitted manuscript.



\newpage
\thispagestyle{empty}
\tableofcontents
\newpage
\setcounter{page}{1}

\section{Introduction}

 Edge estimation, and its generalization to hyperedge estimation via query oracles, is one of the central problems in query complexity and property testing, and has been studied extensively over the past two decades
\cite{feige04, avg_deg_danaron08, BerettaT24, beretta2025, BishnuCDGM25, DBLP:journals/toct/DellL21, DellLM22, DellLM24, harpeled18, BhattacharyaBGM24, CDGM26, addanki_et_al:LIPIcs.ESA.2022.2, Xichen19, AdarHotamLevi/Arxiv/2601.21457,Censor-HillelEW25}. Besides being a natural computational primitive, hyperedge estimation has led to the development of several powerful (algorithmic) techniques. Consequently, it has become a standard benchmark for understanding the power of different query oracles.

Let $G=(V,E)$ be a $d$-uniform hypergraph with $|V|=n$ and $|E|=m$. The \emph{hyperedge estimation problem} asks to estimate the number of hyperedges $m$ using query access to $G$. More precisely, given an approximation parameter $\varepsilon\in(0,1)$, the goal is to design a randomized algorithm that outputs an estimate $\widehat{m}$ satisfying
$
(1-\varepsilon)m \le \widehat{m} \le (1+\varepsilon)m
$
with constant success probability while minimizing the number of queries made to the underlying oracle.

The query complexity of hyperedge estimation depends crucially on the type of oracle available to the algorithm. If the algorithm has access only to uniformly random hyperedge samples, then estimating $m$ requires $\widetilde{\Theta}(\sqrt{m})$ samples by the classical birthday-paradox argument~\cite{Mitzenmacher_Upfal_2005}. Using \footnote{ $\widetilde{\Theta}(\cdot)$, we suppress $\poly{\log\vertexcount,\approxerror}$ factors, and Similarly, we define $\widetilde{O}(\cdot)$ and $\widetilde{\Omega}(\cdot)$.} Consequently, obtaining substantially more efficient algorithms requires access to richer query primitives.

Among the many query models studied in the literature, \emph{conditional
sampling} has emerged as one of the most powerful and versatile
primitives~\cite{ChakrabortyFischerGoldhirshMatsliah/ITCS/2013/CondSamp,DBLP:journals/siamcomp/CanonneRS15/Conditional,CanonneRS14,BhattacharyyaC18,CanonneCKLW21,Bhattacharyya0P24}. Instead of sampling a uniformly random hyperedge from the entire hypergraph, a conditional sampling oracle allows the algorithm to specify a collection of structural constraints and returns a uniformly random hyperedge satisfying those constraints. 


\subsection{Models and our results}

A variety of \emph{global query oracles} have been studied for graph and hypergraph estimation. One of the natural global query oracle for graphs is the \emph{independent set} (IS) oracle \cite{harpeled18, Xichen19, AdarHotamLevi/Arxiv/2601.21457}, which, given a subset of vertices, reports whether the induced subgraph contains an edge. For graphs, a more powerful query oracle is the \emph{bipartite independent set} (BIS) oracle \cite{DBLP:journals/toct/DellL21, harpeled18}, which, given two disjoint subsets of vertices, reports whether there exists an edge with one endpoint in each subset. Very recently, \cite{CDGM26} introduced the \emph{independent set oracle with random certificates}, which additionally returns a uniformly random witness edge whenever one exists. For general $d$-uniform hypergraphs, \cite{BhattacharyaBGM19, DellLM22, BhattacharyaBGM24} considered the \emph{$d$-colorful decision oracle}, which, given $d$ pairwise disjoint subsets $V_1,\ldots,V_d$, reports whether there exists a hyperedge intersecting every subset.

These oracle models have proved remarkably powerful for edge and hyperedge estimation. However, despite these developments, there has been no systematic study of the hierarchy of colorful query oracles. Existing work focuses on a few isolated points in this hierarchy—namely the IS oracle, the BIS oracle, the IS oracle with random certificates, and the $d$-colorful decision oracle—leaving the intermediate colorful query models essentially unexplored.






Motivated by this gap, as well as developments in the property testing of probability distributions through the lens of conditional sampling \cite{ChakrabortyFischerGoldhirshMatsliah/ITCS/2013/CondSamp,DBLP:journals/siamcomp/CanonneRS15/Conditional,BhattacharyyaC18}, we introduce a unified hierarchy of \emph{$s$-colorful} query oracles, parameterized by an integer $1\le s\le d$. Given pairwise disjoint subsets $V_1,\ldots,V_s\subseteq V$, the
\emph{$s$-colorful random certificate oracle}, denoted by $\ciscq(V_1,\ldots,V_s)$, returns a uniformly random hyperedge that is
contained in $V_1\cup\cdots\cup V_s$ and intersects every subset $V_i$,
$i\in[s]$, provided that such a hyperedge exists. We also consider the corresponding \emph{$s$-colorful decision oracle}, denoted by $\cisq(V_1,\ldots,V_s)$, which simply reports whether such a hyperedge exists.  The hierarchy unifies all previously studied oracle models: for graphs ($d=2$), $\cisq{}$ for $s=1$ and $s=2$ corresponds to the IS and BIS oracles, respectively, while $\ciscq(V_1)$ coincides with the independent set oracle with random certificates. For general $d$, the case $s=d$ coincides with the previously studied $d$-colorful decision oracle.

Our work is motivated by the following three fundamental questions.

\begin{enumerate}
\item[\textbf{Q1.}] How much more powerful is the $s$-colorful random certificate oracle $\ciscq$ than uniformly random hyperedge sampling for hyperedge estimation?

\item[\textbf{Q2.}] For a fixed $1\le s\le d$, how much additional algorithmic power does the $s$-colorful random certificate oracle $\ciscq{}$ provide over the corresponding decision oracle $\cisq{}$?

\item[\textbf{Q3.}] How does the relative power of $\ciscq{}$ and $\cisq{}$ evolve as $s$ increases from $1$ to $d$? In particular, does a qualitative change occur at some $s$?
\end{enumerate}

We answer all three questions. Our first result shows that a polylogarithmic number of $s$-colorful random certificate queries suffices to estimate the number of hyperedges.
\begin{restatable}[]{theorem}{mainUB}\label{thm:main-intro}
For every integer $1\le s\le d$ and $\approxerror \in (0, 1/3)$, there exists a randomized algorithm that, given access to the $s$-colorful random certificate oracle $\ciscq$, outputs a $(1\pm\varepsilon)$-approximation of the number of hyperedges with probability at least $2/3$ using $O_{s,d} \fbrac{\varepsilon^{-4}\log^2 n\log\log n\log^2(1/\varepsilon)}$ $s$-colorful random certificate queries in expectation. \footnote{By $O_{s,d}(\cdot)$, we denote $\constant_{s,d}O(\cdot)$, where $\constant_{s,d}$ is a constant depending only on $s$ and $d$. We also define $\Omega_{s,d}$ analogously.}
\end{restatable}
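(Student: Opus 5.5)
The plan combines random subsampling of the vertex set with birthday-paradox collision counting carried out inside the oracle. The macros already defined in the paper (\birthdayestimate, \approxsubgraph, \ensemblecolor, \edgeestimationISC, \mulestimation) suggest this is the intended structure, so I follow it.

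\textbf{Step 1 (colouring reduction).} First I reduce general $s$ to a colourful setting. Colour each vertex independently and uniformly with one of $s$ colours, giving classes $V_1,\ldots,V_s$. A hyperedge $e$ meets every class with some probability $p_{s,d}>0$ that depends only on $s$ and $d$, since $|e|=d\ge s$. A query $\ciscq(X\cap V_1,\ldots,X\cap V_s)$ then returns a uniform sample among the surviving hyperedges inside $X$.

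To get concentration, I repeat the colouring $O_{s,d}(\varepsilon^{-2}\log(1/\varepsilon))$ times and average; this is the \ensemblecolor{} ensemble. For $s=1$ no colouring is needed. Each colouring yields a hypergraph $G'$ with $\E[m(G')]=p_{s,d}\,m$. Estimating $m(G')$ to within $(1\pm\varepsilon)$ for every colouring and averaging gives $m$ up to $(1\pm\varepsilon)$. The variance of a single colouring is bounded by $m$ times the maximum co-degree contribution. I would handle this either by a second-moment bound, or by noting that we only need the average over many colourings and invoking Chebyshev on the ensemble.

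\textbf{Step 2 (geometric vertex subsampling down to $\sqrt{\cdot}$-size).} For a fixed colouring, I define nested random subsets $V=X_0\supseteq X_1\supseteq\cdots$, where each vertex survives independently with probability $1/2$. A hyperedge then survives $j$ rounds with probability $2^{-dj}$, so $m_j:=|E(X_j)|$ satisfies $\E[m_j]=2^{-dj}m_0$.

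At each level I can run a birthday estimate (\birthdayestimate): draw $t$ samples from $\ciscq$ restricted to $X_j$ and count collisions $r$. The estimate $\binom{t}{2}/r$ is accurate once $t\gtrsim \varepsilon^{-2}\sqrt{m_j}$ (more precisely $\E[r]=\binom t2/m_j$ and we need $\E[r]=\Theta(\varepsilon^{-2})$).

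So I walk down levels until the collision count at sample size $t=\mathrm{poly}(1/\varepsilon,\log n)$ first exceeds a threshold $\tau=\Theta(\varepsilon^{-2})$. That level $j^\ast$ has $m_{j^\ast}=\mathrm{poly}(1/\varepsilon)$ and is found after $O(\log n)$ levels. Binary search over levels, using the monotonicity of $m_j$ in $j$, brings the cost to $O(\log\log n)$ levels, which accounts for the $\log\log n$ factor. The estimate is then $\widehat m=2^{dj^\ast}\widehat{m}_{j^\ast}$.

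\textbf{Step 3 (the obstacle: multiplicative error across levels).} Scaling by $2^{dj}$ blows up the variance, because a vertex of high degree can dominate $m_j$. Handling this is the main difficulty. Instead of one jump, I would telescope: estimate the ratios $p_j=m_{j+1}/m_j$ one level at a time, as in \mulestimation. Sample $O(\varepsilon^{-2}\log n\cdot\log\log n)$ hyperedges uniformly from $E(X_j)$ with the oracle, then subsample $X_{j+1}$ and check which samples survive. The empirical survival fraction $\widehat p_j$ estimates $p_j$.

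Since $X_{j+1}$ is random, $p_j$ concentrates around $2^{-d}$ only when $m_j$ is large relative to the maximum degree. I would therefore make the estimator unbiased conditional on $X_j$ by estimating the actual realized ratio: fix $X_{j+1}$ first, then sample from $E(X_j)$ and test membership in $E(X_{j+1})$. If $p_j$ is small, I would fall back to lower survival probabilities per level, so that each $p_j\ge 2^{-d}/2$ with good probability; I expect this to hold whenever $m_j\gg \mathrm{poly}(d)$.

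The product over $L=O(\log n)$ levels then needs per-level accuracy $\varepsilon/L$, so each level costs $O(\varepsilon^{-2}\log^2 n)$ samples. Combined with the $\varepsilon^{-2}\log^2(1/\varepsilon)$ overhead from the colouring ensemble and a $\log\log n$ factor from the union bound over levels, this gives the claimed $O_{s,d}(\varepsilon^{-4}\log^2 n\log\log n\log^2(1/\varepsilon))$ bound in expectation. The bound is in expectation because the stopping level $j^\ast$ is random.
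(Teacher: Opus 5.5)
Your outer architecture matches the paper: a random $s$-colouring ensemble, a nested vertex-subsampling chain, and estimation of the \emph{realized} ratios $p_j=m(X_{j+1})/m(X_j)$ by checking which uniform certificates from $X_j$ survive in $X_{j+1}$. The step that makes it work, however, is missing. Nothing in your plan guarantees $p_j\ge c_d>0$, and your expectation that $p_j\ge 2^{-d}/2$ with good probability once $m_j\gg\mathrm{poly}(d)$ is false. If every hyperedge of $X_j$ contains one vertex $v$, then $p_j=0$ with probability $1/2$ however large $m_j$ is. Lowering the survival probability only shrinks $p_j$. No fixed survival probability $\rho$ rescues a plain chain either: bringing the star down to constant size takes about $\ln m/((d-1)\ln(1/\rho))$ levels, and $v$ survives all of them with probability only about $m^{-1/(d-1)}$, independently of $\rho$. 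So the chain collapses with high probability, either to $0$ or (if a few disjoint hyperedges are added) to a ratio so small that estimating it multiplicatively needs $\Omega(1/p_j)$ samples.

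The missing idea is the paper's estimate-then-verify step:
\begin{itemize}
\item At each level, draw $O(4^d\log(1/\delta))$ \emph{independent} candidate subsamples $H^1,\dots,H^t$ of the current hypergraph $H$. Since each $m(H^i)\in[0,m(H)]$ has mean $m(H)/2^d$, with probability $1-\delta$ some candidate retains at least a $2^{-(d+1)}$ fraction. This needs only independence across candidates, not concentration within one.
\item Draw $O(4^d\log(1/\delta))$ uniform certificates from $H$ via \ciscq{} and estimate every candidate's retained fraction, simultaneously, to additive accuracy $2^{-(d+3)}$. Keep one whose estimate is at least $2^{-(d+2)}$; its true ratio is then at least $2^{-(d+3)}$.
\item The telescoping identity holds for whatever sequence is selected, so the selection bias is harmless.
\end{itemize}
You stop when a constant-threshold collision test reports $m\le\zeta$, and run \birthdayestimate{} only there, at cost $O(1/\varepsilon)$. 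The binary search over levels and the jump estimate $2^{dj^\ast}\widehat m_{j^\ast}$ are not needed.

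Second, your accounting does not give $\log^2 n$. With per-level accuracy $\varepsilon/L$ and confidence $1/\Theta(L)$, a level costs $O_d(\varepsilon^{-2}L^2\log L)$ samples. So $L=\Theta(\log n)$ levels cost $O_d(\varepsilon^{-2}\log^3 n\log\log n)$. Your Step 3 quotes both $\varepsilon^{-2}\log n\log\log n$ and $\varepsilon^{-2}\log^2 n$ per level, which correspond to different accuracies. To get $\log^2 n$ you need accuracy $\varepsilon/\Theta(\sqrt{\kappa})$ per level plus a cancellation argument. Conditional on the selected sequence, the relative errors $x_j=\widehat p_j/p_j-1$ are independent and mean zero. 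Hence $\sum_j x_j=O(\varepsilon)$ by Hoeffding and $\sum_j x_j^2=O(\varepsilon^2)$, instead of the worst case $\sqrt{\kappa}\,\varepsilon$. This gives $O_d(\varepsilon^{-2}\kappa^2\log\kappa)$ queries per instance.

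Because $\kappa$ is random, you also need the moment bound $\mathbb{E}[\kappa^2\log\kappa]=O(\log^2 n\log\log n)$. The paper obtains it by comparing $\kappa$ with the maximum of $n$ geometric vertex-deletion times. The $\log\log n$ is the $\log\kappa$ from the per-level union bound, not from a binary search. Your Step 1 is fine: Hoeffding or Chebyshev over independent colourings, each with count in $[0,m]$, suffices without any co-degree analysis.
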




 \Cref{thm:main-intro} answers \textbf{Q1} in the affirmative and demonstrates the power of conditional sampling for hyperedge estimation. While estimating the number of hyperedges from uniformly random hyperedge samples requires $\widetilde{\Theta}(\sqrt{m})$ queries by the classical birthday-paradox argument, our algorithm uses only a polylogarithmic number of $s$-colorful random certificate queries. This yields an exponential separation between the query complexity of hyperedge estimation using uniformly random hyperedge sampling and using $s$-colorful random certificate queries. We note that our algorithm treats all values of $s$ and $d$ in a unified manner.


Since every $s$-colorful random certificate query reveals at least as much information as the corresponding decision query, it is immediate that $\ciscq$ is at least as powerful as $\cisq$. A natural question is whether this additional information leads to a genuine algorithmic advantage. The following theorem answers \textbf{Q2} by showing that, for every $1\le s<d$, any algorithm using only the decision oracle $\cisq$ must make a polynomial number of queries to estimate the number of hyperedges.

\begin{theorem}\label{thm:lower-intro}
Let $1\le s<d$. Any randomized algorithm that, given access only to the $s$-colorful decision oracle $\cisq$, outputs a $(1\pm\varepsilon)$-approximation of the number of hyperedges with probability at least $2/3$ must make
$\widetilde{\Omega}_{s,d}\!\left(n^{\frac{d-s}{d-s+1}}\right)$
$s$-colorful decision queries in the worst case.
\end{theorem}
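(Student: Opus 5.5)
We prove \Cref{thm:lower-intro} with Yao's minimax principle. We build two distributions $\dyes$ and $\dno$ over $d$-uniform hypergraphs on $n$ vertices whose hyperedge counts differ by a constant factor. We then show that any deterministic algorithm making $\numq = \smallo{n^{\frac{d-s}{d-s+1}}/\poly{\log n}}$ queries to $\cisq$ sees, with probability $1-\smallo{1}$, the same answer sequence under both distributions.

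\textbf{Construction.} Set $k=d-s+1$. Take a random hidden set $\ocvset\subseteq V$ of size $\lbthres\approx n^{1/k}$, chosen uniformly.
\begin{itemize}
\item In $\dno$, the hypergraph is a fixed ``background'' hypergraph $B$, made symmetric by randomly relabelling the vertices.
\item In $\dyes$, we add every $d$-set that contains at least $k$ vertices of $\ocvset$ (or a suitable random subfamily of such sets). The parameters are tuned so that these extra hyperedges are $\Omega(m)$ in number.
\end{itemize}
The point of using $k=d-s+1$ is a pigeonhole fact. With only $s$ colour classes, a single query can force a witness hyperedge to contain at least $k$ vertices of $\ocvset$ only if $\ocvset$ meets the query sets in a very specific way. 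Roughly, one of the sets $V_i$ must contain many $\ocvset$-vertices while being nearly free of background hyperedges.

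For the background $B$, the natural choice is a complete $(s)$-partite-like structure, or a dense random hypergraph. It should make every $\cisq$ query on large sets answer YES regardless of $\ocvset$. Then the only informative queries are those whose union spans few background hyperedges, which essentially means that some $V_i$ is small or the sets are of a restricted shape.

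\textbf{Indistinguishability.} Fix a query $(V_1,\ldots,V_s)$ and call it \emph{distinguishing} if its answers under $\dyes$ and $\dno$ differ. The answers can differ only if the $\dno$ answer is NO and some new hyperedge fits the query. A NO answer under the background forces the union $V_1\cup\cdots\cup V_s$ to be structurally small in the relevant sense, say one class of size at most $t$. For a new hyperedge to fit, that constrained part must nonetheless capture $\ocvset$-vertices in the right pattern. Bounding the probability over the random $\ocvset$, we aim for a per-query estimate of the form
\[
\Pr[\text{query distinguishes}] \le \widetilde{O}\fbrac{\fbrac{\abs{V_i}\lbthres/n}^{k}}\ \text{ summed appropriately},
\]
and, after optimizing the background density, $\Pr \le \widetilde{O}\fbrac{n^{-\frac{d-s}{d-s+1}}}$. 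A union bound over the $\numq$ adaptive queries then finishes the argument. Adaptivity is harmless here: as long as no query has distinguished so far, the transcript is identical under both distributions, so the algorithm's query sequence is fixed by the $\dno$ run.

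\textbf{Main obstacle.} The hardest step is choosing the background and $\ocvset$-parameters so that the two requirements hold at once:
\begin{enumerate}
\item $\ocvset$ contributes a constant fraction of the hyperedges;
\item every NO-query under the background catches the $\ocvset$-pattern only with probability $n^{-(d-s)/(d-s+1)}$.
\end{enumerate}
The exponent $(d-s)/(d-s+1)$ suggests a balance between the number of vertices exposed by an informative query, roughly $n^{1/k}$, and the chance $(\lbthres/n)$ per vertex, raised to the $d-s$ ``free'' slots. I would first try to verify this exponent for $s=1$, $d=2$, the known IS lower bound of order $\sqrt n$, and then generalize by induction on the number of slots beyond $s$.
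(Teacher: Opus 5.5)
\textbf{There is a genuine gap.} It lies exactly at the step you call the main obstacle. You never fix the background or the parameters, and the planted structure you sketch cannot work for any choice of them.

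\emph{Why the sketched construction fails.} Let $F$ be your planted family, i.e.\ all $d$-sets with at least $k=d-s+1$ vertices in the random set $K$. Suppose $|F|=\Theta(m)$, where $m$ is the background's edge count. Consider a single non-adaptive query: $V_1$ a $t$-set, and $V_2,\ldots,V_s$ singletons.
\begin{itemize}
\item Its candidate witnesses are the $\binom{t}{k}$ sets $\{v_2,\ldots,v_s\}\cup S$ with $S\subseteq V_1$ and $|S|=k$.
\item Under the random relabelling, each candidate is a background edge with probability $m/\binom{n}{d}$. So for $t=c\,(n^d/m)^{1/k}$ with a small constant $c$, the background answers NO with probability $1-O(c^k)$, whatever the background is.
\item The query contains a planted witness whenever $|V_1\cap K|\ge k$. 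This has probability about $(t|K|/n)^k/k!=\Theta(c^k|F|/m)=\Theta(c^k)$, independently of the background.
\end{itemize}
So $O(1)$ queries already distinguish your two distributions. With your choice $|K|\approx n^{1/k}$ you get $|F|\approx n^s$, and at the largest informative size $t\approx n^{(d-s)/k}$ one has exactly $t|K|/n=\Theta(1)$. For $d=2$, $s=1$ this is the familiar fact that a random $\Theta(\sqrt n)$-set contains two vertices of a planted $\sqrt n$-clique with constant probability.

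Hence your per-query bound $\widetilde O(n^{-(d-s)/(d-s+1)})$ is false. A spread-out planted family is caught about as often as its expected number of witnesses in a query, and that expectation matches the background's.

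\emph{The missing idea: concentrate the extra hyperedges on very few vertices.} In the paper, both distributions share a \emph{random} $d$-partite background: a random partition $A_1,\ldots,A_d$, with each colorful $d$-tuple an edge with probability $\Theta(d^dm/n^d)$. The denser distribution adds a set $B\subseteq A_1$ of only $\Theta(m\log n/n^{d-1})$ vertices, each joined to every tuple in $A_2\times\cdots\times A_d$.
\begin{itemize}
\item The two graphs differ only on hyperedges through $B$, so a query matters only if it contains a vertex of $B$. For sets of size at most $\tau=\widetilde\Theta\bigl((n^d/m)^{1/(d-s+1)}\bigr)$ this has probability about $\tau|B|/n=\widetilde O\bigl((m/n^d)^{(d-s)/(d-s+1)}\bigr)$. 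The planted hyperedges come in huge bundles through single vertices, so they are seen far less often than their count suggests.
\item Conversely, a query with some set larger than $\tau$ spans $\gtrsim\tau^{d-s+1}\approx n^d/m$ colorful candidates, and hence a background edge with high probability. Such a large set meets every colour class, so by Hall's theorem it cannot lie in a subfamily that blocks colorful tuples.
\item Taking $m=\Theta(n^{d-1})$ yields the bound $n^{(d-s)/(d-s+1)}$.
\end{itemize}

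\emph{Adaptivity.} Your adaptivity remark is not enough with a fixed, relabelled background. The per-query bound must also hold for large sets the algorithm assembles from earlier answers. The paper obtains this from the fresh randomness that remains in the unexplored part of the random background at good nodes. It uses an augmented oracle that reveals a random $\tau$-subset of any large query set, together with a coupling argument between the two distributions.
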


\Cref{thm:lower-intro}, together with \Cref{thm:main-intro}, establishes an exponential separation between the $s$-colorful random certificate oracle $\ciscq$ and the corresponding decision oracle $\cisq$ for every $1\le s<d$.

A natural question is whether this exponential separation persists for all values of $s$. The answer is no. When $s=d$, \cite{BhattacharyaBGM24,DellLM22} showed that the number of hyperedges can  be estimated using only $O(\varepsilon^{-2} \log^{O(d)}n)$ $d$-colorful decision queries. Thus, unlike the case $s<d$, random certificates are no longer necessary to obtain a polylogarithmic-query algorithm, albeit with a dependence on $d$ in the exponent of the $\log n$ factor. Combined with our lower bound for $s<d$,  the query complexity of hyperedge estimation in the decision model undergoes a transition from polynomial to polylogarithmic precisely at $s=d$, thereby providing an answer to \textbf{Q3}.

\subsection{Comparison with prior work}
\label{ssec:comparison_with_prior_work}

Our work also fits within the framework of hyperedge estimation under the \emph{global query access} model, extending a growing body of work on graph and hypergraph estimation using global query oracles
\cite{DBLP:journals/toct/DellL21, DellLM22, DellLM24, harpeled18, BhattacharyaBGM24, CDGM26, addanki_et_al:LIPIcs.ESA.2022.2, Xichen19, AdarHotamLevi/Arxiv/2601.21457}.

For graphs ($d=2$), \cite{harpeled18,Xichen19} studied the \emph{independent set} (IS) oracle, which coincides with our $1$-colorful decision oracle $\cisq$. They showed that estimating the number of edges requires $\widetilde{\Theta}(\sqrt{n})$ IS queries. Subsequently, \cite{harpeled18,addanki_et_al:LIPIcs.ESA.2022.2} considered the more expressive \emph{bipartite independent set} (BIS) oracle, corresponding to our $2$-colorful decision oracle. The latter work gave a $\widetilde{{O}}(\varepsilon^{-5}\log^5 n)$-query algorithm for edge estimation.

For general $d$-uniform hypergraphs, \cite{BishnuGKMS23,DellLM22}
introduced the \emph{$d$-colorful decision oracle} $\cisq$. Over a series of
works, increasingly efficient
algorithms were developed for hyperedge estimation using $d$-colorful
decision queries, culminating in the
$O_d(\varepsilon^{-2}\log^{3d+5} n)$-query algorithm
of~\cite{DellLM22,BhattacharyaBGM24}. Subsequently, Dell, Lapinskas, and
Meeks~\cite{DellLM24} proved a lower bound of
$\Omega_d\bigl(\log^{d(1-o_d(1))} n\bigr)$ $d$-colorful decision
queries, matching the upper bound up to a constant factor in the
exponent of $\log n$. Consequently, at $s=d$, the exponent of the
$\log n$ factor in the decision model is $\Theta(d)$, whereas the
exponent in \Cref{thm:main-intro} is an absolute constant.

Very recently, \cite{CDGM26} introduced the random certificate analogue of the IS oracle, which is precisely our $1$-colorful random certificate oracle $\ciscq$, and showed that edge estimation can be performed using only $\widetilde{O}(\varepsilon^{-2}\log^2 n)$ queries. Together with the $\widetilde{\Omega}(\sqrt{n})$ lower bound for the IS oracle due to \cite{Xichen19}, this establishes an exponential separation between the random certificate and decision variants in the graph setting.

\begin{table}[ht]
\centering
\begin{tabular}{llll}
\toprule
Oracle & Setting & Query complexity & Reference\\
\midrule
IS decision & Graphs & $\widetilde{\Theta}(\sqrt n)$ & \cite{Xichen19}\\
BIS decision & Graphs & $\widetilde{O}(\log^5 n)$ & \cite{addanki_et_al:LIPIcs.ESA.2022.2}\\
IS random certificate & Graphs & $\widetilde{O}(\log^2 n)$ & \cite{CDGM26}\\
\midrule
$s$-colorful decision & $ s=1$ & $\widetilde{\Theta}_{d}\!\left(n^{{d}/{4}}\right)$ & \cite{DellLM24}\\
$s$-colorful decision & $1\le s<d$ & $\widetilde{\Omega}_{s,d}\!\left(n^{\frac{d-s}{d-s+1}}\right)$ & \Cref{thm:lower-intro}\\
$d$-colorful decision & $s=d$ & $O_d\!\left(\log^{3d+5} n\right)$ & \cite{DellLM22,BhattacharyaBGM24}\\
$d$-colorful decision & $s=d$ & $\Omega_d\!\left(\log^{d(1-o_d(1))} n\right)$ & \cite{DellLM24}\\
$s$-colorful random certificate & $1\le s\le d$ & $\widetilde{O}_{s,d}\!\left(\log^{2} n\right)$ & \Cref{thm:main-intro}\\
\bottomrule
\end{tabular}
\caption{Comparison of our results with prior work for hyperedge
estimation. The dependence on $\varepsilon$ is suppressed throughout;
$\widetilde{O}(\cdot)$ hides $\operatorname{poly}(\log\log n)$
factors.}
\label{table:compare}
\end{table}

\Cref{table:compare} compares our results with prior work. Our work unifies these developments. We provide the first systematic study of the hierarchy of colorful query oracles. On the algorithmic side, we establish polylogarithmic-query algorithms for every $s$-colorful random certificate oracle $\ciscq$, thereby generalizing the recent result of \cite{CDGM26}. On the lower-bound side, we prove polynomial lower bounds for the corresponding $s$-colorful decision oracles $\cisq$ whenever $s<d$, generalizing the $\widetilde{\Omega}(\sqrt{n})$ lower bound for the IS oracle. Finally, we identify a sharp phase transition at $s=d$, where the query complexity of hyperedge estimation in the decision model drops from polynomial to polylogarithmic. 


We remark that, for $s=1$, \cite{DellLM24} established a stronger lower bound of $\widetilde{\Omega}(n^{d/4})$. This matches our bound in \Cref{thm:lower-intro} for graphs and $3$-uniform hypergraphs, while being stronger for larger values of $d$. Nevertheless, our lower bound applies to the entire hierarchy of colorful decision oracles, ruling out polylogarithmic-query algorithms even when $s=d-1$.



A natural question arising from our work is whether the \emph{uniform
randomness} of the returned certificate is essential for obtaining
polylogarithmic-query algorithms when $s<d$. Our algorithm relies crucially
on the uniformity of the returned hyperedge to obtain unbiased estimates,
and this reliance is provably necessary: an oracle that returns an
\emph{arbitrary} certificate can be simulated using only $O_{s,d}(\log n)$ $s$-colorful decision ($\cisq{}$) queries by binary search, and
hence, by the lower bound in \Cref{thm:lower-intro}, cannot yield a
polylogarithmic-query algorithm for any $s<d$. Thus, the exponential speedup
of \Cref{thm:main-intro} comes from the distribution of the returned
witness. Interestingly, uniformity of the witness does not always confer
such power: for approximate model counting, a SAT oracle returning a
uniformly random satisfying assignment is provably no more powerful than an
NP oracle~\cite[Theorem~1.2]{ChakrabortyCKM/ICALP/2023/SATvsNPOracle}. We
return to this comparison through the lens of fine-grained complexity at
the end of \Cref{ssec:connections_to_related_areas}.

\subsection{Connections to related areas}
\label{ssec:connections_to_related_areas}

Our work is closely connected to several well-studied themes in query complexity and sublinear algorithms. The relationship between global query oracles and \emph{group testing} has long been recognized, with the IS, BIS, and colorful decision oracles naturally viewed as structured group tests over the edge (or hyperedge) set \cite{Dorfman/AnnalsofMathStat/1943/GroupTestingFirst,AldridgeJS/Book/2019/GroupTestingIT,harpeled18,DellLM22}. In this viewpoint, hyperedges play the role of defective items, while the underlying vertex sets impose structural constraints on the families of hyperedges that can be queried. Our hierarchy of colorful query oracles provides a unified framework for studying these structured group-testing primitives and their random-certificate counterparts.

\paragraph{Decision vs Counting:} Our work is also related to the classical theme of \emph{decision versus counting}, initiated by the seminal work of \cite{Stockmeyer/stoc/83/ComplxtyApprxCnt,Stockmeyer/siamcomp/85/ApprxAlgoforSharpP}, which showed that approximate counting can often be reduced to decision queries. In our setting, we investigate a related question by comparing the power of the $s$-colorful decision oracle with its random certificate counterpart. Our results show that, for every $1\le s<d$, access to uniformly random witness hyperedges yields an exponential separation in query complexity, whereas such separation disappears when $s=d$.

\paragraph{Conditional Sampling with Structural Constraints:} Furthermore, the $s$-colorful random certificate oracle may be viewed as a structured form of \emph{conditional sampling}, a powerful primitive that has led to exponential improvements for numerous testing and estimation problems over probability distributions
\cite{ChakrabortyFischerGoldhirshMatsliah/ITCS/2013/CondSamp,DBLP:journals/siamcomp/CanonneRS15/Conditional,CanonneRS14,BhattacharyyaC18,CanonneCKLW21,Bhattacharyya0P24}. Instead of conditioning on an arbitrary subset of the domain, our oracle conditions on the set of hyperedges intersecting a prescribed collection of pairwise disjoint vertex subsets. Theorem~\ref{thm:main-intro} demonstrates that this structured form of conditional sampling also leads to an exponential improvement over uniformly random hyperedge sampling. Finally, our work is closely related to the recent study of the \emph{power of random certificates} in structured query models \cite{RonTsur/TOCT/2016/PowerofAnExample,DBLP:journals/siamcomp/CanonneRS15/Conditional,ChakrabortyCKM/ICALP/2023/SATvsNPOracle}.


\paragraph{Fine-grained Complexity:} Finally, our results connect to
the fine-grained complexity framework of Dell and
Lapinskas~\cite{DBLP:journals/toct/DellL21} and Dell, Lapinskas, and
Meeks~\cite{DellLM22}. In this framework, the witnesses of a
computational problem (satisfying assignments, $d$-cliques, $d$-SUM
solutions) form the hyperedges of a $d$-uniform hypergraph, and a
decision procedure for restricted instances corresponds to a colorful
decision oracle; their central insight is that approximate counting
reduces to polylogarithmically many such decision calls when each of
the $d$ coordinates of a witness can be restricted to its own
prescribed set. We refine this picture along two axes. First, the
parameter $s$ measures the granularity of control: an $s$-colorful
oracle models a solver supporting only $s$ distinct restriction
classes, so that for $s<d$ distinct coordinates must share classes.
\Cref{thm:lower-intro} shows that this coarsening is severe, and not an
artifact of known techniques: for every $s<d$, approximate counting
requires polynomially many decision calls, so the reductions
of~\cite{DBLP:journals/toct/DellL21,DellLM22} make essential,
information-theoretic use of coordinate-wise restrictions. Second,
many natural solvers do not merely decide but \emph{search}: when a
witness exists, they return one. The $s$-colorful random certificate
oracle models such a solver whose returned witness is uniformly
distributed, and \Cref{thm:main-intro} shows that this search access
suffices for polylogarithmic-query counting for every $1\le s\le d$,
including the entire regime $s<d$ where decision access provably does
not. Thus, for approximate counting of size-$d$ witnesses, search with
uniformly random certificates is exponentially more powerful than
decision whenever the solver supports fewer than $d$ restriction
classes, a fine-grained analogue of the classical
decision-versus-search phenomenon. It is instructive to contrast this
with the result of~\cite{ChakrabortyCKM/ICALP/2023/SATvsNPOracle}
discussed above. In both settings, a decision oracle is upgraded to
return a uniformly random witness whenever one exists: the
$s$-colorful decision oracle $\cisq{}$ is upgraded to $\ciscq{}$ in
our setting, and the NP oracle is upgraded to a uniform-witness SAT
oracle in theirs. The outcomes are opposite: in our setting the
upgrade reduces the query complexity of approximate counting
exponentially for every $s<d$, whereas for approximate model counting
it yields no reduction at all. The difference lies in the strength of
the underlying decision oracle. For model counting, the decision
oracle is already as powerful as the search
oracle~\cite[Theorem~1.2]{ChakrabortyCKM/ICALP/2023/SATvsNPOracle},
leaving no gap for a witness to close, whereas for every $s<d$ the
$s$-colorful decision oracle is provably weak (\Cref{thm:lower-intro}),
and it is precisely this gap that uniformly random witnesses close.
The power of search over decision for approximate counting is thus
governed not by the witness distribution alone but by whether decision
access is already sufficient in the given query model.

\subsection{Notations}
We denote $[n]$ to denote the set of integers $\{ 1, 2, \ldots, n\}$. Given a vertex $\vertex$, we denote $\neighbour({\vertex})$ to be the set of neighbours of the vertex $\vertex$. Consequently, given a set of vertices $U \in \vertexset$, we denote $\neighbour{(U)} = \bigcup_{\vertex \in U} \neighbour(\vertex)$ to be the union of the neighbours of the vertices in $U$.

Throughout, we use $ \graph = ( \vertexset, \edgeset )$ to denote a $\hyperlen$-uniform hypergraph (or simply $\hyperlen$-hypergraph) where $ \vertexset $ is the set of $n$ vertices and $ \edgeset  \subseteq \binom{n}{d} $ is a family of $\edgecount$ hyperedges. For any subhypergraph $\sgraph$ of $\graph$, we denote its vertex set by $\vertexset(\sgraph)$, its set of hyperedges by $\edgeset(H)$, and its hyperedge count by $\edgecount(\sgraph)$.

\paragraph{$s$-partite $d$-hypergraph} Let $V_1, \ldots, V_\qsetno$ be mutually disjoint subsets of $\vertexset$. We define the induced $s$-partite subhypergraph on sets, denoted by $\graph \tbrac{V_1, \ldots, V_\qsetno}$ with vertex set $\cup_{i = 1}^s V_i$, hyperedge set is
$$ 
    E(G\tbrac{V_1, \ldots, V_s})  \left\{\edge \in \edgeset : \edge \cap V_{i} \neq \emptyset \text{ for all } i \in [\qsetno] \mbox{ and, }  e = \bigcup_{i\in [s]} (e \cap V_{i}) \right\}.
$$

\subsection{Organization of the paper}

We begin with Section~\ref{section: tech overview}, which provides a technical overview of our upper bound (Theorem~\ref{thm:main-intro}) and lower bound (Theorem~\ref{thm:lower-intro}). Section~\ref{section: s-par d-hyper} presents our algorithm for hyperedge estimation in an $s$-partite $d$-hypergraph. We then extend this algorithm to general $d$-hypergraphs in Section~\ref{section: full algorithm}. Section~\ref{Section: Lower Bounds} presents our lower bound, while Section~\ref{section: application} discusses applications of our results. Finally, Section~\ref{sec:conclusion_and_open_problems} concludes the paper and discusses open problems.

\section{Technical Overview}\label{section: tech overview}

\subsection{Upper Bounds}

We outline the proof of our main upper bound (Theorem~\ref{thm:main-intro}).
The algorithm has two layers: an outer reduction from general
$\hyperlen$-hypergraphs to $\qsetno$-partite instances, and an inner
estimator for $\qsetno$-partite $\hyperlen$-hypergraphs
(Theorem~\ref{theorem: s-par d-hyper}). The inner estimator combines three
ingredients: a collision-based estimator for hypergraphs with few hyperedges,
a sparsification step in which random certificates are used to \emph{verify}
that a randomly subsampled subhypergraph retains enough hyperedges, and a
telescoping-product argument that assembles the final estimate.

\paragraph{Why the existing graph techniques fail} For graphs ($\hyperlen = 2$), edge
estimation with global queries typically proceeds by decomposing the graph
into bipartite subgraphs: a uniformly random bipartition retains each edge
with probability exactly $1/2$ and, since the survival events of any two
distinct edges are pairwise independent, a Chebyshev bound shows that the
number of surviving edges concentrates around half the edge count. This is
the route taken in~\cite{harpeled18} and~\cite{CDGM26}. For
$\hyperlen$-hypergraphs, pairwise independence is lost: two hyperedges may
share up to $\hyperlen - 1$ vertices, so their survival events under a random
coloring or vertex subsampling are strongly correlated, and the second-moment
argument for a \emph{single} random decomposition breaks down. The known
hypergraph techniques~\cite{BhattacharyaBGM19, DellLM22} circumvent this
issue only in the fully colorful regime (\cisq{} with
$\qsetno = \hyperlen$), where a hyperedge can be pinned down coordinate by
coordinate, and do not extend to the case $\qsetno < \hyperlen$. Our
algorithm sidesteps the lack of independence altogether. The key observation
is that independence is only needed \emph{across} repetitions, not
\emph{within} a decomposition: instead of proving that one random
decomposition concentrates, we generate many independent candidate
decompositions and then use uniformly random certificates to \emph{test}
which candidate is good; since the candidate counts are bounded random
variables, Hoeffding's bound applies to their average with no assumptions on
within-decomposition correlations. This estimate-then-verify strategy is what allows a
single algorithm to handle all $1 \leq \qsetno \leq \hyperlen$.

\paragraph{Reduction to multipartite hypergraphs} Since a \ciscq{} query
conditions on $\qsetno$ pairwise disjoint vertex classes, we first reduce the
problem to $\qsetno$-partite hypergraphs. The algorithm uniformly and
independently partitions the vertex set $V$ into $\qsetno$ color classes
$V_1, \ldots, V_\qsetno$. A fixed hyperedge survives in the induced
$\qsetno$-partite subhypergraph if and only if its $\hyperlen$ vertices
receive all $\qsetno$ colors, which happens with probability
$f(\qsetno, \hyperlen)/\qsetno^{\hyperlen}$, where
$f(\qsetno, \hyperlen) = \sum_{i=0}^{\qsetno}(-1)^i\binom{\qsetno}{i}(\qsetno - i)^{\hyperlen}$
is the number of onto functions from $[\hyperlen]$ to $[\qsetno]$. Because of
the correlations described above, a single random partition need not
concentrate. However, each partition yields a hyperedge count in
$[0, \edgecount]$, and distinct partitions are mutually independent.
Repeating the partition
$t = O_{\qsetno, \hyperlen}\fbrac{\log(\delta^{-1})/\approxerror^{2}}$ times
therefore generates an ensemble of independent $\qsetno$-partite instances
such that, by a Hoeffding bound applied across the ensemble, the average
hyperedge count, rescaled by the survival probability, is a
$(1 \pm \approxerror)$-approximation of $\edgecount$ with probability
$1 - \delta$. It thus suffices to estimate the hyperedge count of each
instance up to accuracy $\Theta(\approxerror)$ with confidence
$1 - \Theta(1/t)$, and to take a union bound over the ensemble.

\paragraph{Small hyperedge count estimation via collision statistics} For a
$\qsetno$-partite $\hyperlen$-hypergraph $\sgraph$ with partite sets
$V_1, \ldots, V_\qsetno$, the query $\ciscq(V_1, \ldots, V_\qsetno)$ returns
a uniformly random hyperedge of $\sgraph$, so standard birthday-paradox
arguments~\cite{Mitzenmacher_Upfal_2005} apply. They yield two procedures: a tester that
distinguishes $\edgecount(\sgraph) \leq \threshold$ from
$\edgecount(\sgraph) \geq 8\threshold$ using
$O(\sqrt{\threshold}\log(\delta^{-1}))$ queries, and, whenever
$\edgecount(\sgraph) \leq \threshold$, an estimator that outputs a
$(1 \pm \approxerror)$-approximation of $\edgecount(\sgraph)$ using
$O(\sqrt{\threshold}\log(\delta^{-1})/\approxerror)$ queries. We invoke
these procedures only with $\threshold$ set to a large constant $\zeta$, so
each call costs $O(\log(\delta^{-1})/\approxerror)$ queries. The details can
be found in Section~\ref{section: tc} and Appendix~\ref{appendix: tc}.

\paragraph{Sparsification with verified selection} To reduce a hypergraph
$\sgraph$ with many hyperedges to the small-count regime, we sparsify: each
vertex of each partite set is kept independently with probability $1/2$, so
that a hyperedge survives with probability exactly $2^{-\hyperlen}$. Once
again, no concentration holds for a single trial. Instead, we draw
$t' = O(4^{\hyperlen}\log(\delta^{-1}))$ independent candidate
subhypergraphs $\sgraph^1, \ldots, \sgraph^{t'}$. Each count
$\edgecount(\sgraph^i)$ lies in $[0, \edgecount(\sgraph)]$ and the candidates
are mutually independent, so a Hoeffding bound shows that, with probability
$1 - \delta$, at least one candidate satisfies
$\edgecount(\sgraph^i) \geq \edgecount(\sgraph)/2^{\hyperlen + 1}$.

The random certificates now enter crucially: we must identify a good
candidate without being able to count. We draw
$O_d(\log(\delta^{-1}))$ random hyperedges from $\sgraph$ using
\ciscq{} queries and, for each $i \in [t']$, compute the fraction
$\approxfractionofedges(\sgraph^i)$ of sampled hyperedges that survive in
$\sgraph^i$. Since the certificates are uniformly distributed,
$\approxfractionofedges(\sgraph^i)$ is an unbiased estimator of the true
retained fraction $\edgecount(\sgraph^i)/\edgecount(\sgraph)$, and by a
Hoeffding bound every $\approxfractionofedges(\sgraph^i)$ deviates from its
mean by less than $1/2^{\hyperlen + 2}$ additively, simultaneously for all
$i$, with probability $1 - \delta$. Consequently, selecting any candidate
with $\approxfractionofedges(\sgraph^i) \geq 1/2^{\hyperlen + 2}$ is sound
in both directions: the good candidate guaranteed above passes the test, and
any candidate that passes satisfies
$
    \frac{\edgecount(\sgraph^i)}{\edgecount(\sgraph)} \geq \frac{1}{2^{\hyperlen + 3}}.
$
We call a subhypergraph satisfying this property an \emph{approximation
preserving subhypergraph}; the lower bound on the retained fraction is what
guarantees the quality of the final estimator. We emphasise that the decision
oracle \cisq{} offers no analogue of this verification step: it can certify
that a candidate contains \emph{some} hyperedge, but not that it retains a
constant \emph{fraction} of them. Indeed, as discussed in
Section~\ref{ssec:comparison_with_prior_work}, an oracle returning an \emph{arbitrary}
certificate can be simulated by $O_{\qsetno,\hyperlen}(\log n)$ adaptive
\cisq{} queries, so by Theorem~\ref{thm:lower-intro} it cannot support this
step either: uniformity of the certificate is exactly what is being used.

\paragraph{Constructing an approximation preserving sequence} The
construction proceeds via an iterative loop, starting with
$\sgraph_0 = \sgraph$. At each iteration $j$, we run threshold checking on
$\sgraph_j$ with the constant threshold $\zeta$: if the check reports
$\edgecount(\sgraph_j) \leq \zeta$, we halt and set
$\sgraph_\kappa = \sgraph_j$; otherwise, we sparsify $\sgraph_j$ and select
an approximation preserving subhypergraph $\sgraph_{j + 1}$ as above. Since
the number of iterations is itself random, we run iteration $j$ with
confidence parameter $\delta/(3 \cdot 2^{j})$; the resulting failure
probabilities form a geometric series, so a union bound over \emph{all}
iterations costs only $\delta$ in total, regardless of when the loop halts.
This yields a nested sequence of subhypergraphs with
$\vertexset(\sgraph) = \vertexset(\sgraph_0) \supseteq \ldots \supseteq \vertexset(\sgraph_\kappa)$
where, with probability $1 - \delta$, each $\sgraph_j$ satisfies
$
    \frac{\edgecount(\sgraph_j)}{\edgecount(\sgraph_{j - 1})} \geq \frac{1}{2^{\hyperlen + 3}}
$
and $\edgecount(\sgraph_\kappa) \leq 8\zeta = O(1)$.

To control $\kappa$, observe that every vertex survives each round
independently with probability $1/2$, so the round in which a fixed vertex is
removed is a geometric random variable, and $\kappa$ is at most the maximum
of $n$ such variables. Hence $\Pr[\kappa > k] \leq n \cdot 2^{-k}$, which
gives $\E[\kappa] = O(\log n)$, $\E[\kappa^2] = O(\log^2 n)$, and
$\E[\kappa^2 \log \kappa] = O(\log^2 n \cdot \log \log n)$; these are the
quantities that drive the query complexity below.

\paragraph{Telescoping estimation} For each $j \in [\kappa]$, define
$\fractionofedges_j = \frac{\edgecount(\sgraph_j)}{\edgecount(\sgraph_{j - 1})}$.
The total number of hyperedges follows a telescoping product:
\begin{equation*}
    \edgecount(\sgraph) = \frac{\edgecount(\sgraph_\kappa)}{\prod_{j = 1}^\kappa \fractionofedges_j}
\end{equation*}

Instead of estimating each $\edgecount(\sgraph_j)$ from scratch, we estimate
only the ratios $\fractionofedges_j$: we sample random hyperedges from
$\sgraph_{j-1}$ via \ciscq{} queries and record the fraction that survive in
$\sgraph_j$. Because
$\fractionofedges_j \geq 1/2^{\hyperlen + 3} = \Omega_\hyperlen(1)$ by
construction of the approximation preserving sequence, a multiplicative
$(1 \pm \approxerror_j)$-approximation of $\fractionofedges_j$ requires only
$O_d(\log(\delta^{-1})/\approxerror_j^{2})$ queries, independent
of $n$ and of $\edgecount(\sgraph_j)$. We take
$\approxerror_j = \approxerror/\Theta(\sqrt{\kappa})$ for every $j$, which
yields estimates satisfying $\widehat{\fractionofedges}_j = (1 + \varepsilon_j) \fractionofedges_j$ with the desired probability.~\footnote{A natural choice is to set $\varepsilon_j=\varepsilon/\Theta(\kappa)$. Then it yields a query
complexity of $\widetilde{O}(\log^3 n)$, motivating a more careful choice of
the parameter.}


\paragraph{Query complexity $\widetilde{O}(\log^2 n)$}
Within a single $\qsetno$-partite instance, the sequence construction costs
$O(\log(\delta^{-1}) + \kappa^2)$ queries and the birthday estimator
$O(1/\approxerror)$ queries; the dominant cost is ratio estimation. There are
$\kappa$ ratios, each estimated to accuracy
$\approxerror/\Theta(\sqrt{\kappa})$ with confidence $1/O(\kappa)$, at a
cost of $O_\hyperlen(\kappa \log \kappa/\approxerror^{2})$ queries per
ratio, hence $O_\hyperlen(\kappa^{2}\log\kappa/\approxerror^{2})$ in total.
Taking expectations over $\kappa$ using
$\E[\kappa^2 \log \kappa] = O(\log^2 n \cdot \log\log n)$ gives an expected
cost of $O_\hyperlen(\log^{2} n \cdot \log\log n/\approxerror^{2})$ per
instance. One logarithm is thus the length of the approximation preserving
sequence, and the other arises because each ratio must be estimated to
accuracy $\approxerror/\sqrt{\kappa}$. Both phenomena are governed solely by
the vertex-halving rate $1/2$, so the exponent of the $\log n$ factor is an
absolute constant: the parameters $\qsetno$ and $\hyperlen$ enter only
through the constants $2^{\hyperlen + 3}$, $4^{\hyperlen}$ and
$f(\qsetno, \hyperlen)$ hidden in $O_{\qsetno, \hyperlen}(\cdot)$. Finally,
the outer reduction runs the instance estimator
$t = O_{\qsetno,\hyperlen}(1/\approxerror^{2})$ times with confidence
parameter $\Theta(1/t)$, contributing a factor
$O(\log(1/\approxerror)/\approxerror^{2})$ and yielding the bound of
Theorem~\ref{thm:main-intro}. This is in sharp contrast to the
decision-oracle setting at $\qsetno = \hyperlen$, where the exponent of the
$\log n$ factor is $\Theta(\hyperlen)$~\cite{BhattacharyaBGM24, DellLM24}.

\subsection{Lower Bounds}
Let $\alpha:=\frac{d-s}{d-s+1}$. We prove the lower bound in two regimes based on the number of hyperedges. In the dense regime, where $m\ge c\,n^{d-1}/(2d)^{d-1}$, we prove a lower bound of $\widetilde{\Omega}((n^d/m)^\alpha)$ queries. In the sparse regime, where $m< c\,n^{d-1}/(2d)^{d-1}$, we prove a lower bound of $\widetilde{\Omega}(m^{\alpha/(d-1)})$ queries. These two bounds match, up to polylogarithmic factors, when $m=\Theta(n^{d-1})$. So the worst case query complexity is $\widetilde{\Omega}(n^\alpha)$, as stated in \Cref{thm:lower-intro}.

\subparagraph*{Dense regime.}
We prove the lower bound of $\widetilde\Omega((n^d/m)^\alpha)$ queries for $s$-colorful $\cisq$ oracle access, where $s<d$. The main idea is to hide a small set of vertices, each of which is part of many hyperedges. We show that finding even one such vertex takes many queries.

We prove the bound using Yao's min-max principle. We define two distributions, $\mathcal D_{\mathrm{Yes}}$ and $\mathcal D_{\mathrm{No}}$, over hypergraphs $G$. In both distributions, the vertex set $V(G)$ is split into $d$ parts $A_1,\ldots,A_d$ uniformly at random. Every colorful $d$ tuple $(a_1,\ldots,a_d) \in A_1\times\cdots\times A_d$ is made a hyperedge independently, with probability $\Theta(d^dm/n^d)$.

To build $G\sim\mathcal D_{\mathrm{No}}$, we add one more step. We form a set $B\subseteq A_1$ by placing each vertex of $A_1$ into $B$ independently, with probability $\Theta((2d)^dm\log n/n^d)$. For every vertex $v\in B$, we add all hyperedges of the form $\langle v,a_2,\ldots,a_d\rangle$ where $a_i\in A_i$ for each $i\ge 2$. Hence, $B$ forms the small set of high degree vertices to establish the indistinguishability argument.

We choose the constants above so that the following hold with probability $1-o(1)$. First, $|V(G)|=n$ and $|E(G)|=\Theta(m)$ under both distributions. Second, a graph from $\mathcal D_{\mathrm{Yes}}$ has at most $m/2$ hyperedges. Third, a graph from $\mathcal D_{\mathrm{No}}$ has $|B|=\Theta(m\log n/n^{d-1})$, and so has at least $m$ hyperedges. This means a graph from $\mathcal D_{\mathrm{No}}$ has at least twice as many hyperedges as a graph from $\mathcal D_{\mathrm{Yes}}$.

  Let $q := \widetilde{\Theta}\big((n^d/m)^{\alpha}\big)$. We want to show that
  any algorithm making $o(q)$ queries can not hit a vertex in the dense subset $B$. However, this can not hold if the query sets can be arbitrarily large. On the other hand, due to the construction of the graph distributions, any large query sets contain an edge with high probability. We leverage these observations by considering two possible cases for the queries. Given a $\cisq$ query $(Q_1,\ldots,Q_s)$, and a fixed threshold $\tau := \Theta((n^d/m)^{1/(d-s+1)})$, we consider the following cases: either $|Q_i|\le\tau$ for all $i$, or $|Q_i|>\tau$ for some $i$.

If $|Q_i|>\tau$ for some $i$, we show that the query sets either contain an edge or there exists a small certificate to show that they are not independent.  If each of the remaining query sets contain at least one colorful $(s-1)$-tuple (i.e. $(s-1)$ vertices of different colors consisting of one from each of the remaining query sets), then the value of $\tau$ ensures that these query sets contains at least $\Theta(n^d/m)$ possible colorful tuples. This is the reciprocal of the hyperedge probability. Hence, we can use a Chernoff argument to establish that any such query throughout the algorithm contains an edge with probability $1 - \smallo{1}$.

On the other hand, if there are a subset of the query sets that does not contain a colorful tuple, we show that all query sets in this subset contain less than $\tau$ vertices. Observe that the coloring $A_1,\ldots,A_d$ is chosen at random, and it does not depend on the query. So a large query set meets every color class with probability $1-o(1)$. By an argument using the Hall's marriage theorem, we show that any such set can not be part of such subsets. Based on these two results, we restrict the size of the query sets to $\tau$ for the remaining part of the proof.


We now bound the number of queries needed to tell $\mathcal D_{\mathrm{Yes}}$ apart from $\mathcal D_{\mathrm{No}}$ with probability at least $2/3$. Since $B=\emptyset$ under $\mathcal D_{\mathrm{Yes}}$, this task requires at least as many queries as are needed to touch a vertex of $B$ under $\mathcal D_{\mathrm{No}}$. Here, touching $B$ means the query's vertex set meets $B$. Since only small queries are useful, the chance that one such query touches $B$ is at most
\[
p = O\!\left(\tau\cdot\frac{m\log n}{n^d}\right)
= \widetilde O\!\left(\left(\frac{m}{n^d}\right)^\alpha\right).
\]
So the number of queries needed to touch $B$ is at least $1/p=\widetilde{\Omega}(q)$, which is $\widetilde\Omega((n^d/m)^\alpha)$. For this argument to work, we need $|B|=\Theta(m\log n/n^{d-1})$ to be $\Omega(\log n)$. This lets $|B|$ concentrate using Chernoff bounds. This is exactly why we need $m\ge c\,n^{d-1}/(2d)^{d-1}$.

\paragraph*{From non-adaptive to adaptive algorithms.}
The argument above only holds for non-adaptive algorithms. A
non-adaptive algorithm fixes all its queries in advance, before seeing
any answers. An adaptive algorithm can choose a query based on earlier
answers. For example, it might learn the colors of the vertices it has
already explored, and use them to grow a query set past $\tau$ while
avoiding hyperedges, specifically to search for $B$ more cheaply. 

To handle this case, we describe an \emph{augmented oracle} that simulates
the $\cisq{}$ oracle but reveals more information. At each stage, the
algorithm maintains a knowledge triplet $(K,\ell,e)$, where $K$ is the
set of explored vertices, $\ell$ labels each vertex of $K$ by the set
among $A_1,\ldots,A_d,B$ containing it, and $e$ records the hyperedges
among the vertices of $K$. Given a query $(Q_1,\ldots,Q_s)$, the
augmented oracle updates $(K,\ell,e)$ as follows. If some subset of
the query sets cannot form a colorful tuple, it reveals a minimal such
subset. If $|Q_i|\le\tau$ for all $i$, it reveals all queried vertices.
If $|Q_i|>\tau$ for some $i$, it samples a subset $R_i\subseteq Q_i$
of size $\tau$ uniformly at random and reveals these vertices; if
$(R_1,\ldots,R_s)$ is an independent set, the oracle fails and reveals
the entire graph. The answer to the original $\cisq{}$ query can always
be determined from the updated triplet. Hence, it is enough to prove
the lower bound for algorithms using the augmented oracle.

We view such an algorithm as a tree of depth $q$ where $q$ is the number of queries made by the algorithm. We say a node is
\emph{good} if no vertex of $B$ has been revealed so far. The key
observation is that, given the algorithm is at a good node, the
conditional distribution of the unexplored graph retains sufficient
randomness. The unexplored vertices are still assigned to
$A_1,\ldots,A_d$ and to $B$ independently at random. Moreover, every colorful $d$-tuple containing an unexplored vertex is still a
hyperedge independently with probability $\Theta(d^d m/n^d)$, exactly as in the construction. Hence, no matter how a
query was chosen based on earlier answers, the algorithm has not seen
the randomness that decides its answer. Using this, we show that each
query takes the algorithm from a good node to a bad node with
probability $o(1/q)$. If all query sets have size at most $\tau$, the
query hits $B$ with probability $o(1/q)$. A query
set larger than $\tau$ meets every color class with high probability,
so by Hall's marriage theorem it cannot be part of a minimal subset
that fails to form a colorful tuple. Finally, if some query set is
larger than $\tau$, the sampled sets $(R_1,\ldots,R_s)$ contain
$\Omega(\log^{d-s+1} n)$ potential hyperedges in expectation, so by a
Chernoff bound the oracle fails with probability $o(1/q)$. By a union bound over all
$q$ queries, the algorithm reaches a bad node with probability $o(1)$.

To complete the argument, we couple $\mathcal{D}_{\mathrm{Yes}}$ and
$\mathcal{D}_{\mathrm{No}}$: we draw a graph from
$\mathcal{D}_{\mathrm{Yes}}$, and obtain a graph from
$\mathcal{D}_{\mathrm{No}}$ by choosing $B\subseteq A_1$ and adding the
corresponding hyperedges. As long as no vertex of $B$ is revealed, the
algorithm behaves identically on both graphs. Hence, every good node is
reached with probability no larger under $\mathcal{D}_{\mathrm{No}}$
than under $\mathcal{D}_{\mathrm{Yes}}$, and the algorithm cannot
distinguish the two distributions with probability more than $o(1)$.
The formal analysis is presented in \Cref{Section: Lower Bounds}.

\subparagraph*{Sparse regime.}
The lower bound of $\widetilde\Omega(m^{\alpha/(d-1)})$, for the case $m< c\,n^{d-1}/(2d)^{d-1}$, follows by a reduction from the dense regime.

Let $n_2 := \big((2d)^{d-1}m/c\big)^{1/(d-1)}$, so that $m=c\,n_2^{d-1}/(2d)^{d-1}$. Let $H_2$ be an instance of the dense regime on $n_2$ vertices. We build $H_1 := H_2\sqcup H_0$, where $H_0$ is a rank $d$ hypergraph on $\Theta(n)$ new vertices, disjoint from $H_2$, respecting the same color partition, and containing no hyperedges. Then $H_1$ has $\Theta(n)$ vertices and the same number of hyperedges as $H_2$, matching the input size a sparse regime algorithm is assumed to handle. Since $m< c\,n^{d-1}/(2d)^{d-1}$, we get $n_2^{d-1}=(2d)^{d-1}m/c<n^{d-1}$, so $n_2<n$. Thus $(n_2,m)$ satisfies the requirement of the dense regime.

Every $s$-colorful $\cisq$ query on $H_1$ can be answered using exactly one query on $H_2$: restrict each queried set to its intersection with $V(H_2)$. The answer is unchanged, since $H_0$ has no hyperedges and none exist between $H_0$ and $H_2$.

Now suppose an algorithm violates the claimed sparse bound. Using the simulation above, it gives an algorithm on $H_2$ using $o((n_2^d/m)^\alpha)$ queries. Substituting $n_2=\big((2d)^{d-1}m/c\big)^{1/(d-1)}$, this is exactly the bound ruled out by the dense regime. Hence, a contradiction. This establishes the lower bound holds for the sparse regime.

\section{\texorpdfstring{Hyperedge Estimation of $\qsetno$-Partite $\hyperlen$-Hypergraph}{Hyperedge Estimation of s-Partite d-Hypergraph}}
In this section, we design a randomized algorithm to estimate the number of hyperedges in a $\qsetno$-partite $\hyperlen$-hypergraph. Specifically, we will show that,

\begin{restatable}[Hyperedge estimation of $\qsetno$-partite $\hyperlen$-hypergraph]{theorem}{spardhyper}\label{theorem: s-par d-hyper}
   For every integer $1\le s\le d$, $\approxerror \in (0, 1/3)$ and $\confidence \in (0, 1)$, there exists a randomized algorithm that, given access to the $s$-colorful random certificate oracle $\ciscq$ of a $\qsetno$-partite $\hyperlen$-hypergraph $\sgraph$, outputs a $(1\pm\varepsilon)$-approximation of the number of hyperedges of $\sgraph$ with probability at least $ 1 - \confidence$ using $O_{d}(\varepsilon^{-2}\log^2 n\log\log n\log (\confidence^{-1}))$ $\ciscq$ queries.
\end{restatable}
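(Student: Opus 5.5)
We follow the three-stage scheme of the technical overview: threshold testing and birthday estimation for small counts, a verified sparsification loop that builds an approximation preserving sequence, and telescoping estimation of the ratios. Throughout, $\ciscq(V_1,\ldots,V_\qsetno)$ restricted to the current vertex sets is a uniform sampler for the current $\qsetno$-partite subhypergraph. This is because every subhypergraph we form has the shape $\sgraph[V_1',\ldots,V_\qsetno']$ with $V_i'\subseteq V_i$, so a hyperedge of $\sgraph$ is retained if and only if all its vertices are kept. We assume the collision-based tools of Section~\ref{section: tc}:
\begin{itemize}
\item a tester separating $\edgecount\le\zeta$ from $\edgecount\ge 8\zeta$ using $O(\log\confidence^{-1})$ queries;
\item an estimator that, when $\edgecount\le 8\zeta$, gives a $(1\pm\approxerror)$-approximation using $O(\log\confidence^{-1}/\approxerror)$ queries.
\end{itemize}

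\textbf{Step 1 (sequence construction).} Set $\sgraph_0=\sgraph$. In iteration $j$, run the tester with confidence $\confidence/(3\cdot 2^{j})$, and halt if it reports a small count. Otherwise:
\begin{itemize}
\item draw $t'=O(4^{\hyperlen}\log(2^j/\confidence))$ independent candidates by keeping each vertex independently with probability $1/2$;
\item draw $O(4^{\hyperlen}\log(t'2^j/\confidence))$ certificates from $\sgraph_j$;
\item select any candidate whose empirical retained fraction is at least $2^{-(\hyperlen+2)}$.
\end{itemize}
A good candidate exists by the averaging argument. Each retained count is at most $\edgecount(\sgraph_j)$ and has mean $2^{-\hyperlen}\edgecount(\sgraph_j)$, so Markov applied to $\edgecount(\sgraph_j)-\edgecount(\sgraph^i)$ shows that one candidate has retained fraction at least $2^{-(\hyperlen+1)}$ with probability $\Omega(2^{-\hyperlen})$. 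Independence across candidates then amplifies this. Hoeffding applied to the certificates, together with a union bound over the $t'$ candidates, gives two-sided soundness: every selected $\sgraph_{j+1}$ satisfies $\fractionofedges_{j+1}\ge 2^{-(\hyperlen+3)}$. The geometric confidence schedule keeps the total failure probability at most $\confidence/3$, whatever the value of $\kappa$.

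\textbf{Step 2 (bounding $\kappa$).} Each vertex survives a round independently with probability $1/2$, so once all vertices are gone the loop must halt. Hence $\Pr[\kappa>k]\le n2^{-k}$, which gives $\E[\kappa^2\log\kappa]=O(\log^2n\log\log n)$.

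\textbf{Step 3 (estimation).} Estimate $\edgecount(\sgraph_\kappa)$ by the birthday estimator within $1\pm\approxerror/3$. Then estimate each ratio $\fractionofedges_j$ by sampling $N_j=O(2^{\hyperlen}\kappa\log(\kappa/\confidence)/\approxerror^2)$ certificates from $\sgraph_{j-1}$ and counting how many survive in $\sgraph_j$. Since $\fractionofedges_j\ge 2^{-(\hyperlen+3)}$, a multiplicative Chernoff bound gives $\widehat{\fractionofedges}_j=(1+\approxerror_j)\fractionofedges_j$ with $|\approxerror_j|\le\approxerror/(c\sqrt\kappa)$, simultaneously for all $j$ with probability at least $1-\confidence/3$.

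\textbf{Main obstacle.} The hard step is showing that $\prod_j(1+\approxerror_j)$ stays within $1\pm\approxerror/3$ when each error is only $\approxerror/\sqrt\kappa$ rather than $\approxerror/\kappa$. The naive bound loses a factor of $\sqrt\kappa$. Our plan is to exploit the random sign of the errors:
\begin{enumerate}
\item Write $\log\widehat{\fractionofedges}_j-\log\fractionofedges_j=Z_j$, where each $Z_j$ has mean $O(\approxerror^2/\kappa)$ (the second-order bias) and is sub-Gaussian with variance proxy $O(\approxerror^2/(\kappa\log(\kappa/\confidence)))$.
\item Condition on the sequence. The $Z_j$ are then independent, since they come from fresh certificates.
\item Azuma/Hoeffding then gives $|\sum_j Z_j|\le\approxerror/6+O(\approxerror^2)$ with probability at least $1-\confidence/3$.
\end{enumerate}
With this choice, the per-ratio cost is what produces $\kappa^2\log\kappa$ overall. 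Therefore $\widehat\edgecount=\widehat{\edgecount}(\sgraph_\kappa)/\prod_j\widehat{\fractionofedges}_j$ lies within $(1\pm\approxerror)\edgecount(\sgraph)$.

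\textbf{Query count.} The sequence construction uses $O_\hyperlen(\kappa\log\confidence^{-1}+\kappa^2)$ queries. Ratio estimation uses $O_\hyperlen(\kappa^2\log\kappa\log\confidence^{-1}/\approxerror^2)$ queries. Taking expectations over $\kappa$ gives the stated $O_\hyperlen(\approxerror^{-2}\log^2n\log\log n\log\confidence^{-1})$ expected bound. If a worst-case bound is needed, we truncate at $\kappa=O(\log(n/\confidence))$ and pay $\confidence$ in failure probability.
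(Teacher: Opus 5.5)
Your proposal follows the paper's proof essentially step for step: collision tester and birthday estimator at a constant threshold, the \ensemblecolorpartite{}/\approxsubgraph{} verify-then-select loop with confidences $\confidence/(3\cdot 2^{j})$, and telescoping with per-ratio accuracy $\approxerror/\Theta(\sqrt{\iterations})$ plus a Hoeffding bound on the sum of mean-zero relative errors. Where you deviate, you are more careful than the paper in three places. You take a union bound over the $t'$ candidates. You keep the $\iterations\log\confidence^{-1}$ term in the sequence cost. You work with $\sum_j\log(1+x_j)$ and an explicit second-order term, which replaces the paper's sandwich $1-\sum_i x_i/4\le\prod_i(1+x_i/4)\le 1+\sum_i x_i$; that sandwich is false for signed $x_i$. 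One technicality: $\log\approxfractionofedges_j$ is $-\infty$ when no certificate survives, so it is not sub-Gaussian. Run Hoeffding or Bernstein on $x_j=\approxfractionofedges_j/\fractionofedges_j-1$ instead, and use $\abs{\log(1+x)-x}\le x^2$ on the event $\max_j\abs{x_j}\le\approxerror/(c\sqrt{\iterations})$; this costs only $\sum_j x_j^2\le\approxerror^2/c^2$.

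The genuine gap is Step~2. It is inherited from the paper's proof of Lemma~\ref{lemma: kappa}, which argues the same way.

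\begin{itemize}
\item The claim ``each vertex survives a round independently with probability $1/2$'' holds for a raw candidate. But $\sgraph_{j+1}$ is a candidate that \emph{passed} the certificate test, and passing is an increasing event in the keep-indicators, so survival is biased upward.
\item Concretely, suppose every hyperedge of $\sgraph_j$ contains a fixed vertex $z$. No candidate that drops $z$ can pass, so $z$ is never removed. Then $\iterations$ is not dominated by a maximum of $n$ geometric removal times; the loop still halts there, but for a reason your argument does not capture.
\item Worse, ``select any'' allows rules under which your bound is false. Suppose you take the passing candidate with the most vertices, on an input with few vertices but more than $8\thresholdconst$ hyperedges. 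Because $t'$ grows with $j$, a keep-everything candidate eventually appears, and with positive probability the loop returns $\sgraph_j$ forever, so $\E[\iterations]=\infty$.
\end{itemize}

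A short repair runs as follows.

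\begin{itemize}
\item Select the \emph{first} passing candidate. Given the certificate multiset $S$, it is then distributed as a uniform candidate conditioned on passing.
\item For every $S$, passing has probability at least $2^{-\hyperlen}-2^{-(\hyperlen+2)}$. This is reverse Markov on the kept fraction of $S$, since each hyperedge is kept with probability exactly $2^{-\hyperlen}$.
\item Write $N_j=\size{\vertexset(\sgraph_j)}$. Conditioning on passing inflates the Chernoff tail by at most $2^{\hyperlen+2}$, so $\Pr[N_{j+1}\ge 3N_j/4]\le 2^{\hyperlen+2}e^{-N_j/8}$. Hence $\E[N_{j+1}\mid\sgraph_j]\le\tfrac78 N_j$ whenever $N_j\ge C\hyperlen$, for an absolute constant $C$.
\item Below that size, $\edgecount(\sgraph_j)\le\binom{C\hyperlen}{\hyperlen}=2^{O(\hyperlen)}$. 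Taking $\thresholdconst\ge\binom{C\hyperlen}{\hyperlen}$, which is absorbed into $O_\hyperlen$, makes the tester stop up to its failure probability.
\item A supermartingale argument then gives $\Pr[\iterations>k]\le n(7/8)^{k}+2^{-k}$, which is all your query-count computation needs.
\end{itemize}
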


In Section~\ref{section: tc}, we present the threshold checking procedures. Section~\ref{section: sparsification} is organised into three parts that describe the procedure to generate an approximation preserving sequence. Section~\ref{section: approx preserve subhyper} introduces the sequential procedures \ensemblecolorpartite{} (Lemma~\ref{lemma: ensemblegraphpartite}) and \approxsubgraph{} (Lemma~\ref{lemma: approxsubhypergraph}) to generate an approximation preserving subhypergraph. Section~\ref{section: approx preserve sequence} presents the procedure \approxsubseq{} (Lemma~\ref{lemma: subsequence}) to generate an approximation preserving subsequence. Finally, in Section~\ref{section: s-par d-hyper} we describe the procedure to estimate the hyperedge count in a $s$-partite $d$-hypergraph.

\subsection{Threshold checking}\label{section: tc}

Let a fixed threshold ($ \threshold \geq 1$) be given. We consider the following threshold-checking problem: given a $\qsetno$-partite $\hyperlen$-hypergraph $ \sgraph $, determine whether $ \edgecount( \sgraph ) < \threshold $. If this condition holds, the goal is to estimate the $ (1 \pm \approxerror ) $ approximation of $ \edgecount( \sgraph )$.

To address this task, we present two algorithms \iscollision{} and \birthdayestimate{} based on the birthday paradox techniques \citep[Chapter 5]{Mitzenmacher_Upfal_2005}. \iscollision{} distinguishes between the cases $ \edgecount( \sgraph ) < \threshold $ or $ \edgecount( \sgraph ) \geq 8\threshold $ (Lemma \ref{lemma: collision detection}). \birthdayestimate{} outputs an $ (1 \pm \approxerror) $ approximation of $ \edgecount( \sgraph ) $, if $ \edgecount( \sgraph ) < \threshold $ (Lemma \ref{lemma: birthdayestimate}). The corresponding guarantees are stated below. For details, refer to Appendix~\ref{appendix: tc}.

\begin{restatable}[Correctness of \iscollision{}]{lemma}{isollison}
    For a $\qsetno$-partite $\hyperlen$-hypergraph $ \sgraph $  and a threshold $ \threshold \geq 1 $, the following statements hold for \iscollision{} with probability at least $ 1 - \confidence$:\label{lemma: collision detection}
\begin{enumerate}
    \item[(i)] If $ \edgecount( \sgraph ) \leq  \threshold $, then $\iscollision{}$ outputs $1$.
    \item[(ii)] If $ \edgecount( \sgraph ) \geq 8 \threshold $, then $\iscollision{}$ outputs $0$.
\end{enumerate}
The number of \ciscq{} queries made is $ O( \sqrt{ \threshold } \log ( \confidence^{ -1} ))$
\end{restatable}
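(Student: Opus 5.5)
The plan is to realize \iscollision{} as a repeated birthday-paradox test. By definition, the query $\ciscq(V_1,\ldots,V_\qsetno)$ on the partite sets of $\sgraph$ returns a uniformly random hyperedge of $\sgraph$. One trial draws $k = \lceil c\sqrt{\threshold}\,\rceil$ independent samples this way, for a suitable absolute constant $c$, and records whether two draws coincide. We repeat the trial $r = O(\log(\confidence^{-1}))$ times. If the fraction of trials with a collision exceeds a fixed cutoff $\theta$, we output $1$; otherwise we output $0$. The total query count is $kr = O(\sqrt{\threshold}\log(\confidence^{-1}))$, as claimed.

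The core is a two-sided estimate on the collision probability $P_m$ of a single trial when $\edgecount(\sgraph)=m$.

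For the lower bound, take $m \le \threshold$. The probability that all $k$ draws are distinct is $\prod_{i<k}(1-i/m) \le \exp(-k(k-1)/(2m))$. With $k \ge c\sqrt{\threshold}$ and $m\le\threshold$, this is at most $e^{-c^2/4}$ roughly, so $P_m \ge 1-e^{-c^2/4}$. If $m<k$ a collision is certain, which is consistent with this bound.

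For the upper bound, take $m \ge 8\threshold$. A union bound over pairs gives $P_m \le \binom{k}{2}/m \le k^2/(16\threshold)\approx c^2/16$.

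We then choose $c$ so the two bounds separate. For example, $c=2$ gives $P_m\ge 1-e^{-1}\approx 0.63$ in case (i) and $P_m\le 0.25+o(1)$ in case (ii). Rounding must be handled: $k=\lceil 2\sqrt\threshold\rceil\le 2\sqrt\threshold+1\le 3\sqrt\threshold$. For small $\threshold$ this weakens the bound to $9/16$, so I would use the exact $\binom{k}{2}/m\le k(k-1)/(16\threshold)$, or tune $c$ (say $c=1.5$, or a larger cutoff). The target is a constant gap, say $P_m\ge 0.6$ versus $P_m\le 0.4$, with cutoff $\theta=1/2$.

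Finally, the $r$ trials are independent because the oracle samples are independent. A Hoeffding bound on the empirical collision frequency shows it lies within $0.1$ of $P_m$ with probability at least $1-\exp(-2r\cdot 0.01)\ge 1-\confidence$ once $r=\lceil 50\ln(\confidence^{-1})\rceil$. This gives (i) and (ii).

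The only delicate step is fixing constants so that the ceiling in $k$ does not destroy the gap for small $\threshold$. If the constants for $\threshold\ge1$ prove awkward, I would alternatively count the number of colliding pairs among $k$ samples, which has mean $\binom{k}{2}/m$, and use thresholds on that count. In both formulations the argument stays elementary.
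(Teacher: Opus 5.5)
Your proposal is correct and follows essentially the same route as the paper: repeated birthday trials with $\Theta(\sqrt{\threshold})$ samples each, the bound $\prod_{i<k}(1-i/m)\le \exp(-k(k-1)/(2m))$ for case (i), and a threshold/majority vote over $O(\log(\confidence^{-1}))$ independent trials. Two small differences are worth noting. For case (ii) you use the union bound $\binom{k}{2}/m$, which is simpler than the paper's $1-x\ge e^{-x-x^2}$ estimate. You also handle the ceiling explicitly: with $c=2$ and exact $k(k-1)$ one gets $P_m\ge 1-e^{-1}$ versus $P_m\le 3/8$ for every $\threshold\ge 1$. That is more careful than the paper's choice $t=\sqrt{2\threshold}$, which only gives its claimed $1/e$ bound for large $\threshold$.
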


Suppose that a collision is detected, i.e., \iscollision{} outputs $1$. We can obtain a
$ (1 \pm \approxerror)$-approximation of $\edgecount( \sgraph )$ using additional
\ciscq{} queries. This estimation procedure is implemented in \birthdayestimate{}.

\begin{restatable}[Correctness of \birthdayestimate{}]{lemma}{birth}\label{lemma: birthdayestimate}
For a $\qsetno$-partite $\hyperlen$-hypergraph $ \sgraph $, threshold $ \threshold \geq 1 $ and approximation error $ \approxerror > 0 $. If $ \edgecount( \sgraph ) \leq  \threshold $, then \birthdayestimate{} outputs an $ (1 \pm \approxerror) $ estimate of $ \edgecount( \sgraph ) $ with probability at least $1 - \confidence$. The number of \ciscq{} query made is $ \Theta\fbrac{  \approxerror^{-1} \sqrt{\threshold} \log \fbrac{ \confidence^{-1}} } $.
\end{restatable}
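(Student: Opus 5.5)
We prove Lemma~\ref{lemma: birthdayestimate} with the standard collision-counting estimator. Since $\sgraph$ is $\qsetno$-partite with partite sets $V_1,\ldots,V_\qsetno$, the query $\ciscq(V_1,\ldots,V_\qsetno)$ returns a uniformly random hyperedge of $\sgraph$. Repeated queries are therefore i.i.d.\ uniform samples from $\edgeset(\sgraph)$, a set of size $\edgecount=\edgecount(\sgraph)$.

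\textbf{Single-round estimator.} Draw $\resamplesize$ samples $e_1,\ldots,e_\resamplesize$ and let $\collisioncount$ be the number of pairs $i<j$ with $e_i=e_j$. Then
\[
\E[\collisioncount]=\binom{\resamplesize}{2}\frac{1}{\edgecount},
\]
so $\binom{\resamplesize}{2}/\collisioncount$ is a natural estimate of $\edgecount$. We control $\collisioncount$ by Chebyshev. Write $\collisioncount=\sum_{i<j} X_{ij}$, where $X_{ij}$ indicates $e_i=e_j$.
\begin{itemize}
\item Pairs of pairs that share an index, or are disjoint, are pairwise uncorrelated, because the events $e_i=e_j$ and $e_j=e_k$ are independent under uniform sampling.
\item Hence $\Var[\collisioncount]\le \E[\collisioncount]=\binom{\resamplesize}{2}/\edgecount$.
\end{itemize}
Chebyshev then gives
\[
\Pr\bigl[\abs{\collisioncount-\E\collisioncount}>\tfrac{\approxerror}{2}\E\collisioncount\bigr]\le \frac{4}{\approxerror^2\E[\collisioncount]}.
\]
Because $\edgecount\le\threshold$, we have $\E[\collisioncount]\ge \binom{\resamplesize}{2}/\threshold$. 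It suffices to make this at least $12/\approxerror^2$, i.e.\ $\resamplesize=\Theta(\sqrt{\threshold}/\approxerror)$. With this choice, one round outputs $\binom{\resamplesize}{2}/\collisioncount\in[(1-\approxerror)\edgecount,(1+\approxerror)\edgecount]$ with probability at least $2/3$. The multiplicative error on $\edgecount$ is about $\approxerror/2$ relative error on $\collisioncount$, which is fine since $\approxerror<1/3$; adjust constants as needed.

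\textbf{Amplification.} Repeat the round independently $\Theta(\log(\confidence^{-1}))$ times and output the median. By a Chernoff bound on the number of failed rounds, the median is a $(1\pm\approxerror)$-approximation with probability at least $1-\confidence$. The total number of queries is $\Theta(\approxerror^{-1}\sqrt{\threshold}\log(\confidence^{-1}))$, matching the statement.

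\textbf{Main obstacle.} The only delicate point is the variance bound: we must check that the covariance terms vanish, not merely that they are small. For a pair sharing one index, $\Pr[e_i=e_j=e_k]=1/\edgecount^2=\Pr[e_i=e_j]\Pr[e_j=e_k]$, so these terms are exactly zero. The remaining care is the case $\collisioncount=0$. It falls inside the Chebyshev failure event, so outputting anything there is harmless. The lemma only requires $\edgecount\le\threshold$; smaller $\edgecount$ only increases $\E[\collisioncount]$, so the bound applies uniformly.
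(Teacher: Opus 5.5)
Your proposal is correct. Its core matches the paper: both write the collision count as $\sum_{i<j}X_{ij}$ and note that covariances of pairs sharing an index vanish exactly because $\Pr[e_i=e_j=e_\ell]=1/m^2$ under uniform sampling. Both then conclude $\Var(X)\le\E[X]$ and apply Chebyshev.

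The difference is how the $\log(1/\delta)$ factor enters, and here your route is the sound one.

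The paper's \birthdayestimate{} takes a single batch of $s=\Theta(\varepsilon^{-1}\sqrt{\tau}\log(1/\delta))$ samples and applies Chebyshev once, claiming $18m/(\varepsilon^2 s(s-1))\le\delta$. With that $s$ and $m\le\tau$, this quantity is only $O(1/\log^2(1/\delta))$. That is at most $\delta$ only when $\delta$ is bounded below by a constant. This covers the paper's actual use ($\delta=1/16$), but not the lemma as stated for general $\delta$.

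You instead use $\Theta(\sqrt{\tau}/\varepsilon)$ samples per round to get success probability $2/3$, then take the median of $\Theta(\log(1/\delta))$ independent rounds. This rigorously gives the $1-\delta$ guarantee with the same total of $\Theta(\varepsilon^{-1}\sqrt{\tau}\log(1/\delta))$ queries, using only Chebyshev plus a Chernoff bound on the number of good rounds.

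The one caveat is that you analyze a median-of-rounds variant, not the paper's literal single-batch procedure that returns $\binom{s}{2}/r$. This is harmless downstream, since only the accuracy and query-count guarantees are ever invoked. To prove the lemma for the single-batch version with small $\delta$, you would need an exponential (Poisson-type) tail bound on the collision count rather than Chebyshev.
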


\subsection{Graph sparsification}\label{section: sparsification}
In this section, we will describe a procedure to generate an approximation preserving sequence (formally defined below) $\sgraph_0, \ldots, \sgraph_\kappa$ with probability at least $1-\confidence$. Moreover, the expected query complexity of the procedure is $O(\log^2 n)$.

\begin{definition}[Approximation preserving subhypergraph]
Let $\sgraph^1, \sgraph^2$ be subhypergraphs of $\sgraph$ such that  $\vertexset(\sgraph^2) \subseteq \vertexset(\sgraph^1)$, we say that $\sgraph^2$ is an \emph{approximation preserving subhypergraph} of $\sgraph^1$ if
\[
   \frac{m(\sgraph^2)}{m(\sgraph^1)}\ge \frac{1}{2^{\hyperlen + 3}}.
\]

\end{definition}

\begin{definition}[Approximation preserving sequence]
For a sequence of subhypergraphs $\sgraph_0, \sgraph_1, \ldots, \sgraph_\kappa$ with $ \vertexset(\sgraph_0) \supseteq \vertexset(\sgraph_1) \ldots \supseteq \vertexset(\sgraph_\kappa)$.
We say that $\sgraph_0, \sgraph_1, \ldots, \sgraph_\kappa$ is an \emph{approximation preserving sequence} if $\sgraph_{i-1}$ is an approximation preserving subset of $\sgraph_i$
for all $i \in [\kappa]$.
\end{definition}

Formally, we will prove the following lemma in this section. 
\begin{lemma}[Approximate sequence lemma]\label{lem:apx-seq} \label{lem: sequence}
There exists an algorithm that outputs an approximation preserving sequence  $\sgraph_0, \sgraph_1, \ldots, \sgraph_\kappa$  with probability at least $ 1 - \confidence$. The expected value of $\kappa$ is $O(\log \vertexcount)$, and the expected number of \ciscq{} queries made is $ O( \log^2 \vertexcount)$.
\end{lemma}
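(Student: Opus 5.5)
The algorithm for Lemma~\ref{lem:apx-seq} is the iterative loop from the technical overview, and the proof analyses it in three parts: correctness of one step, a union bound over all steps, and the tail of $\kappa$. Set $\sgraph_0=\sgraph$. In iteration $j$ (starting at $j=0$) I run \iscollision{} on $\sgraph_j$ with the constant threshold $\zeta$ and confidence $\confidence/(3\cdot 2^{j+1})$. If it outputs $1$, I halt with $\kappa=j$. Otherwise I sparsify $\sgraph_j$ and select $\sgraph_{j+1}$.

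The sparsification step works as follows. I draw $t'=O(4^d\log(2^{j}/\confidence))$ independent subhypergraphs $\sgraph_j^1,\dots,\sgraph_j^{t'}$, each keeping every vertex of every partite set independently with probability $1/2$. Each hyperedge survives with probability exactly $2^{-d}$, so $X_i:=\edgecount(\sgraph_j^i)/\edgecount(\sgraph_j)$ lies in $[0,1]$ and has mean $2^{-d}$. It suffices to show that at least one $X_i$ is at least $2^{-(d+1)}$, and this needs no concentration within a single trial. Since $X_i\le 1$, reverse Markov gives $\Pr[X_i\ge 2^{-(d+1)}]\ge 2^{-(d+1)}$. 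With $t'=O(2^{d}\log(1/\confidence'))$ independent trials, all of them fail with probability at most $\confidence'$; this is within the stated $4^d$ budget.

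For verification I draw $r=O(4^d\log(t'/\confidence'))$ samples by calling $\ciscq$ on the partite sets of $\sgraph_j$. These are uniform hyperedges of $\sgraph_j$, so for every $i$ the empirical surviving fraction $\approxfractionofedges(\sgraph_j^i)$ is within $2^{-(d+2)}$ of $X_i$ simultaneously, by Hoeffding and a union bound over $i$. I select any $i$ with $\approxfractionofedges\ge 2^{-(d+2)}$. A good candidate passes this test, and any passing candidate has $X_i\ge 2^{-(d+2)}-2^{-(d+2)}$. That last bound is $0$, so the slack is too crude. I therefore tighten the tolerance to $2^{-(d+3)}$ and the test threshold to $3\cdot 2^{-(d+3)}$; a passing candidate then has $X_i\ge 2^{-(d+2)}\ge 2^{-(d+3)}$, which is what the definition of approximation preserving requires. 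If \iscollision{} answered correctly, a non-halting $\sgraph_j$ has $\edgecount(\sgraph_j)>\zeta\ge 1$, so $\sgraph_j$ is nonempty and the sampling makes sense.

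The failure probabilities come from three sources per iteration: \iscollision{}, the existence of a good candidate, and the verification. Each gets budget $\confidence/(3\cdot2^{j+1})$, and summing over all $j\ge 0$ gives at most $\confidence$, whatever the random stopping time turns out to be. On the good event every consecutive pair is approximation preserving, and the vertex sets are nested by construction.

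For $\kappa$, couple the process with independent coin flips, one per vertex per round. The selected candidate at round $j$ is a subset of the round-$j$ survivors of some candidate. That dependence on selection is the subtle point. So I bound things crudely: a vertex lies in $\sgraph_k$ only if it survived a chosen candidate in each of $k$ rounds. Taking a union over candidates loses a factor of $t'$ per round, which is bad.

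The cleaner route is to avoid unions entirely. Once $\sgraph_k$ has no hyperedges, \iscollision{} halts, since with $0\le\zeta$ case (i) applies. Also, $\sgraph_{k}$ being approximation preserving through $k$ steps forces $\edgecount(\sgraph_k)\ge \edgecount(\sgraph)2^{-(d+3)k}$, but that bounds only one direction. For the upper tail I use the fact that a fixed hyperedge survives a chosen candidate only if it survived in that candidate. Conditional on the history, the probability that a fixed hyperedge survives the selected candidate is at most the probability it survives some candidate, which is at most $t'2^{-d}$, and this is insufficient.

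I would therefore modify the selection so that candidate $i$ is chosen as the first passing index, and argue by Markov on the expected edge count. The selected candidate passes verification, so on the good event its fraction is at most $\approxfractionofedges+2^{-(d+3)}$, but there is still no upper bound. The main obstacle is bounding $\E[\kappa]$ despite the selection bias. The plan is to show instead that $\edgecount(\sgraph_{j+1})\le \edgecount(\sgraph_j)$ always holds, and that the number of vertices shrinks: any candidate retains, in expectation, half the vertices, and with $t'$ candidates the maximum retained count is at most $(1/2+o(1))|V(\sgraph_j)|$ whp while $|V(\sgraph_j)|\gg\log t'$. This gives $|V|$ geometric decay, so $\kappa=O(\log n)$ with exponential tails, and hence $\E[\kappa]=O(\log n)$ and $\E[\kappa^2]=O(\log^2 n)$.

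The query count per iteration is $O_d(\log(2^j/\confidence))=O_d(j+\log\confidence^{-1})$. Summing over $j\le\kappa$ gives $O(\kappa^2+\kappa\log\confidence^{-1})$, whose expectation is $O(\log^2 n)$ for constant $\confidence$.
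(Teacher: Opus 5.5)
Your per-iteration analysis is correct, and in places more careful than the paper's: the reverse-Markov existence argument, the union bound over candidates in the verification step, and the tolerance $2^{-(d+3)}$ with threshold $3\cdot 2^{-(d+3)}$ (the paper's tolerance $2^{-(d+2)}$ does not actually yield part (ii) of Lemma~\ref{lemma: addestimation}). You are also right that selection breaks the independence that the paper's proof of Lemma~\ref{lemma: kappa} assumes when it treats $V(H_{j+1})$ as a fresh half-sample of $V(H_j)$. For example, a vertex lying in every hyperedge is kept by every passing candidate and is never removed.

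The gap is in your replacement bound on $\kappa$. The union bound over the $t'$ candidates controls $\max_i |V(H_j^i)|$ only with failure probability $t'e^{-2\eta^2|V(H_j)|}$. So it says nothing once $|V(H_j)|=O(\log t')$, and $t'=\Theta_d(j+\log\delta^{-1})$ grows with $j$. At that point nothing forces $m(H_j)\le 8\zeta$: $m(H_j)$ can be of order $\binom{\Theta(\log t')}{d}$, while $\zeta$ is a constant. You give no bound on the remaining rounds.

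No selection-agnostic argument can close this. Suppose you take the passing candidate with the most hyperedges, which your text permits. On a complete hypergraph the process descends to about $\log_2 t'_j$ vertices and then stalls. Some candidate keeps every vertex with probability $1-(1-2^{-|V(H_j)|})^{t'_j}$, which tends to $1$ as $t'_j$ grows, and $m(H_j)$ there exceeds $8\zeta$ for large $n$. The loop then runs forever with probability bounded away from zero. So neither $\E[\kappa]=O(\log n)$ nor termination follows, and your correctness claim needs termination as well.

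The missing ingredient is a selection rule whose bias costs a factor independent of $j$. Use the first passing index, which you briefly considered.

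\begin{itemize}
\item The verification multiset $S$ is independent of the candidates. For every fixed $S$, a single half-sample passes with probability at least $2^{-d}-3\cdot 2^{-(d+3)}=5\cdot 2^{-(d+3)}$. This is reverse Markov on the number of elements of $S$ it retains, whose mean is $|S|2^{-d}$.
\item Hence the selected candidate is a half-sample conditioned on an event of probability $\Omega(2^{-d})$ (or the first candidate, if none passes). Any event therefore has probability at most $O(2^d)$ times its probability for a single half-sample.
\item It follows that $|V(H_{j+1})|\le\tfrac34|V(H_j)|$ with probability at least $1/2$, conditionally on the past, whenever $|V(H_j)|\ge Cd$. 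Since $|V(H_j)|$ never increases, the rounds spent above $Cd$ are dominated by a sum of $O(\log n)$ geometric variables.
\item Once $|V(H_j)|\le Cd$, we have $m(H_j)\le\binom{Cd}{d}$. This is at most $\zeta$ if $\zeta$ is chosen large enough as a function of $d$, and then \texttt{IsCollision} stops the loop except with summable probability.
\end{itemize}

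This gives exponential tails for $\kappa$, hence $\E[\kappa]=O(\log n)$ and $\E[\kappa^2]=O(\log^2 n)$ for fixed $d$. Your count $O(\kappa^2+\kappa\log\delta^{-1})$ then yields the claimed query bound.
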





\subsubsection{Procedure to find an approximation preserving subhypergraph}\label{section: approx preserve subhyper}
In this section, we describe two procedures \ensemblecolorpartite{} and \approxsubgraph{} (see Algorithm \ref{alg: addestimation} and \ref{alg: ensemblecolorpartite}) for finding the approximation preserving subhypergraph $ \sgraph' $ of $\sgraph$, for any given $\qsetno$-partite $\hyperlen$-hypergraph $ \sgraph $. 

\ensemblecolorpartite{} takes a $\qsetno$-partite $\hyperlen$-hypergraph $\sgraph$ as input. It outputs multiple $\qsetno$-partite candidates for an approximation preserving subhypergraph. \approxsubgraph{} takes the original hypergraph $\sgraph$ and its subhypergraphs $\sgraph^1, \sgraph^2, \ldots, \sgraph^t$ (output of \ensemblecolorpartite{}) as input and generates an approximation preserving subhypergraph using the $\ciscq{}$ queries.  

These two procedures will be used in the next section for the construction of an approximation preserving sequence. Formally, we will prove the following:

\begin{lemma}[Correctness of \ensemblecolorpartite{}]\label{lemma: ensemblegraphpartite}
    For any $\qsetno$-partite $\hyperlen$-hypergraph $\sgraph$ with $\edgecount(\sgraph) \geq \thresholdconst$ and $(\varepsilon, \confidence) \in (0, 1)$. Let $\sgraph^1, \sgraph^2, \ldots, \sgraph^t$ be the output of \ensemblecolorpartite{}, then we have,
    \begin{equation*}
        \Pr \tbrac{ \exists i \in [t],\; \edgecount(\sgraph^i) \geq \frac{\edgecount(\sgraph)}{2^{\hyperlen + 1} }  } \geq 1 - \confidence
    \end{equation*}
    where $ t = O\fbrac{4^\hyperlen \log \fbrac{\confidence^{-1}}} $.
\end{lemma}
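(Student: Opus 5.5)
We will prove the lemma using only the independence across the $t$ candidates, following the estimate-then-verify idea of the overview. We never need concentration within a single sparsification.

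\textbf{The procedure.} \ensemblecolorpartite{} produces $\sgraph^1,\ldots,\sgraph^t$ independently. To build $\sgraph^i$, it keeps each vertex of each partite set $V_1,\ldots,V_\qsetno$ of $\sgraph$ independently with probability $1/2$, and takes the induced $\qsetno$-partite subhypergraph on the surviving sets. A hyperedge $\edge$ has exactly $\hyperlen$ distinct vertices, and it survives in $\sgraph^i$ if and only if all of them are kept. The partite structure is preserved, because each surviving hyperedge still meets every surviving part. Hence
\[
\Pr[\edge\in \edgeset(\sgraph^i)] = 2^{-\hyperlen}
\quad\text{and}\quad
\E[\edgecount(\sgraph^i)] = \edgecount(\sgraph)/2^{\hyperlen}.
\]

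\textbf{Single-candidate bound.} Write $X_i=\edgecount(\sgraph^i)/\edgecount(\sgraph)\in[0,1]$, so that $\E[X_i]=2^{-\hyperlen}$. We first bound the chance that one candidate is good, without using any concentration. Let $p=\Pr[X_i\ge 2^{-(\hyperlen+1)}]$. Since $X_i\le 1$, splitting the expectation on this event gives
\[
2^{-\hyperlen} = \E[X_i] \le p\cdot 1 + (1-p)\,2^{-(\hyperlen+1)} \le p + 2^{-(\hyperlen+1)}.
\]
Therefore $p\ge 2^{-(\hyperlen+1)}$. This reverse-Markov (Paley--Zygmund style) step is the only place where the structure of the sparsification enters. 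It also sidesteps the correlation between hyperedges that share vertices, since it uses only $\E[X_i]$ and the trivial upper bound $X_i\le1$.

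\textbf{Boosting over $t$ candidates.} The $X_i$ are independent, so
\[
\Pr\bigl[\forall i,\ X_i<2^{-(\hyperlen+1)}\bigr] \le \bigl(1-2^{-(\hyperlen+1)}\bigr)^t \le \exp\bigl(-t/2^{\hyperlen+1}\bigr).
\]
This is at most $\confidence$ once $t\ge 2^{\hyperlen+1}\ln(\confidence^{-1})$. That choice already satisfies the stated $t=O(4^\hyperlen\log(\confidence^{-1}))$.

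The overview instead phrases this step as a Hoeffding bound on the average $\frac1t\sum_i X_i$. That route needs $t=O(4^{\hyperlen}\log(\confidence^{-1}))$ to get additive deviation $2^{-(\hyperlen+1)}$, which would force some $X_i\ge 2^{-(\hyperlen+1)}$. This explains the $4^\hyperlen$ in the statement, and either argument suffices.

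The hypothesis $\edgecount(\sgraph)\ge\thresholdconst$ only guarantees $\edgecount(\sgraph)>0$, so the ratio is well defined; no other use of it is needed. The only point requiring care is the single-candidate lower bound on $p$, and it goes through by the averaging argument above.
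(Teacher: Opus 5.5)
Your proof is correct, and it reaches the conclusion by a different route from the paper.

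The paper applies Hoeffding's bound to the average $\frac{1}{t}\sum_{i} m(H^i)$ of the $t$ independent counts. Each count lies in $[0,m(H)]$ with mean $m(H)/2^{d}$. With $t = O(4^{d}\log(\delta^{-1}))$, the average lies in $[m(H)/2^{d+1},\, 3m(H)/2^{d+1}]$ with probability at least $1-\delta$. The existence of a good candidate then follows, implicitly, because the maximum is at least the average.

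You instead prove a reverse-Markov bound for a single candidate, $\Pr[X_i \ge 2^{-(d+1)}] \ge 2^{-(d+1)}$, using only $\mathbb{E}[X_i]=2^{-d}$ and $X_i\le 1$. You then amplify by independence: $(1-2^{-(d+1)})^{t}\le \delta$.

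Both arguments use the same two ingredients: independence across candidates and boundedness of the normalized counts. Both therefore avoid any concentration within a single sparsification. Yours is more direct, and it needs only $t = 2^{d+1}\ln(\delta^{-1})$ candidates instead of order $4^{d}\log(\delta^{-1})$. As you note, the $4^d$ in the statement comes from the additive-deviation requirement in the Hoeffding route.

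The paper's route additionally gives an upper bound on the average count, which this lemma does not need. The saving from your smaller $t$ is in the number of candidates, which cost no queries to generate. It therefore affects the query complexity only logarithmically, through the union bound over candidates in the verification step of \texttt{ApproxSubhypergraph}.
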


\begin{lemma}[Correctness of \approxsubgraph{}]\label{lemma: approxsubhypergraph}
    Let $\sgraph'$ be the output of \approxsubgraph{} that takes $\qsetno$-partite $\hyperlen$-hypergraph $\sgraph$ and its subhypergraphs $\sgraph^1, \ldots, \sgraph^\instances$ as input. Then, $\sgraph'$ is an approximation preserving subhypergraph of $\sgraph$ with probability at least $ 1 - \confidence$. The number of $\ciscq$ queries made is $ O\fbrac{ 4^\hyperlen \log (\confidence^{-1}) } $.
\end{lemma}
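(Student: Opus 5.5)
The plan is to analyse the natural \emph{estimate-then-verify} procedure behind \approxsubgraph{}. Let $V_1,\ldots,V_\qsetno$ be the partite sets of $\sgraph$. Each candidate $\sgraph^i$ is the subhypergraph induced on known vertex subsets $V_1^i\subseteq V_1,\ldots,V_\qsetno^i\subseteq V_\qsetno$. The procedure draws $r$ independent hyperedges $e_1,\ldots,e_r$ by calling $\ciscq(V_1,\ldots,V_\qsetno)$. Each call returns a uniformly random hyperedge of $\sgraph$.

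For every $i\in[\instances]$ it computes $\approxfractionofedges(\sgraph^i)=\frac1r\sum_{k=1}^r \indicator[e_k\in\edgeset(\sgraph^i)]$. This costs no extra queries, because membership of $e_k$ in $\sgraph^i$ only requires checking $e_k\cap V_j\subseteq V_j^i$ for all $j$. The procedure outputs any $\sgraph^i$ with $\approxfractionofedges(\sgraph^i)\ge 1/2^{\hyperlen+2}$, or an arbitrary candidate if none passes. The only queries are the $r$ samples, so the whole task is to choose $r$ and prove soundness.

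Since the samples are uniform over $\edgeset(\sgraph)$, each indicator is Bernoulli with mean $\fractionofedges_i:=\edgecount(\sgraph^i)/\edgecount(\sgraph)$. Hence $\approxfractionofedges(\sgraph^i)$ is an unbiased estimator of $\fractionofedges_i$. By Hoeffding's inequality, $\Pr[|\approxfractionofedges(\sgraph^i)-\fractionofedges_i|\ge 2^{-(\hyperlen+3)}]\le 2\exp(-2r/4^{\hyperlen+3})$. A union bound over the $\instances$ candidates shows that with $r=\Theta(4^{\hyperlen}\log(\instances/\confidence))$ all deviations are simultaneously below $2^{-(\hyperlen+3)}$, with probability at least $1-\confidence$.

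On that event, the selection rule is sound in both directions:
\begin{itemize}
\item \emph{Any passing candidate is approximation preserving:} if $\approxfractionofedges(\sgraph^i)\ge 2^{-(\hyperlen+2)}$, then $\fractionofedges_i> 2^{-(\hyperlen+2)}-2^{-(\hyperlen+3)}=2^{-(\hyperlen+3)}$, so $\sgraph^i$ is an approximation preserving subhypergraph of $\sgraph$.
\item \emph{The good candidate passes:} a candidate with $\fractionofedges_i\ge 2^{-(\hyperlen+1)}$ has $\approxfractionofedges(\sgraph^i)> 2^{-(\hyperlen+1)}-2^{-(\hyperlen+3)}\ge 2^{-(\hyperlen+2)}$.
\end{itemize}
So some candidate is selected whenever a good one exists, which is exactly the guarantee supplied by Lemma~\ref{lemma: ensemblegraphpartite}. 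For the statement as given, I would either make this existence an explicit hypothesis, or absorb the failure of Lemma~\ref{lemma: ensemblegraphpartite} into the confidence budget by splitting $\confidence$ into two halves.

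The step I expect to need the most care is the query bound. The union bound gives $r=O(4^{\hyperlen}\log(\instances/\confidence))$, whereas the lemma claims $O(4^{\hyperlen}\log\confidence^{-1})$. Since $\instances=O(4^{\hyperlen}\log\confidence^{-1})$ by Lemma~\ref{lemma: ensemblegraphpartite}, we get $\log \instances=O(\hyperlen+\log\log\confidence^{-1})$. For $\confidence\le 1/2$ this is $O(\hyperlen\log\confidence^{-1})$, so $r=O_{\hyperlen}(4^{\hyperlen}\log\confidence^{-1})$. I would therefore state the bound with the extra $\hyperlen$ factor absorbed into the $O_{\hyperlen}$ constant; for $\confidence>1/2$ one can simply run the procedure with $\confidence=1/2$. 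No cleverer argument, such as selecting the maximum of the $\approxfractionofedges$ values, seems necessary.
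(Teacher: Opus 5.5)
Your proposal is correct and follows the paper's route. The paper also gets a good candidate from Lemma~\ref{lemma: ensemblegraphpartite} using half of the confidence budget, then applies Hoeffding to the uniform certificates (Lemma~\ref{lemma: addestimation}) to show that good candidates pass and candidates with ratio below $1/2^{\hyperlen+3}$ fail. Your explicit union bound over all $\instances$ candidates, and your deviation $2^{-(\hyperlen+3)}$ in place of the paper's $2^{-(\hyperlen+2)}$ (which on its own does not yield part (ii)), make precise what the paper leaves implicit. The only cost is a harmless $\log \instances = O(\hyperlen+\log\log\confidence^{-1})$ factor, which is absorbed into $O_{\hyperlen}$.
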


\begin{algorithm}
    \caption{\ensemblecolorpartite($ \sgraph, \confidence $)}\label{alg: ensemblecolorpartite}
    \begin{algorithmic}[1]
        \Require $\qsetno$-partite $\hyperlen$-hypergraph $\sgraph$ with partite sets $V_1, \ldots, V_\qsetno$, confidence parameter $\delta \in (0,1)$.
        \Ensure Subhypergraphs $\sgraph^1, \sgraph^2, \ldots, \sgraph^t$ of $\sgraph$ where $ t = O\fbrac{4^\hyperlen \log \fbrac{\confidence^{-1}}} $
        \For{$i = 1$ to $ t = O\fbrac{4^\hyperlen \log \fbrac{\confidence^{-1}}}$}
            \For{$j = 1$ to $\qsetno$}
                \State Sample each vertex of $V_j$ with probability $1/2$ and let $V_{ij}$ be the subset of sampled vertices.
            \EndFor
            \State Let $\sgraph^i = \sgraph\tbrac{V_{i1}, V_{i2}, \ldots, V_{i\qsetno}} $.
        \EndFor
        \State \Return $\sgraph^1, \sgraph^2, \ldots, \sgraph^t$
    \end{algorithmic}
\end{algorithm}

Now, we present the proof of Lemma~\ref{lemma: ensemblegraphpartite}

\begin{proof}[Proof of Lemma~\ref{lemma: ensemblegraphpartite}]
    Fix an $\qsetno$-partite $\hyperlen$-hypergraph $\sgraph^i$. For every hyperedge $\edge \in \edgeset(\sgraph) $, define a indicator random variable
    \begin{equation*}
    I_e = 
    \begin{cases}
        1 & \text{If $ e \in \sgraph^i $} \\
        0 & \text{Otherwise }
    \end{cases}
    \end{equation*}
    
    The expected number of hyperedges in $\sgraph^i$ is 
    \begin{equation*}
        \E{ [\edgecount(\sgraph^i)] } = \sum_{ e \in E(\sgraph) } \Pr[I_e = 1] = \frac{ \edgecount(\sgraph) }{ 2^\hyperlen }
    \end{equation*}
    
    Therefore, 
    \begin{equation*}
        \E{  \tbrac{\sum_{i = 1}^t \edgecount(\sgraph^i)} } = \frac{ \edgecount(\sgraph) \cdot t }{ 2^\hyperlen }
    \end{equation*}

    Now, applying Hoeffding's Bound (lemma~\ref{lemma: hoeffding}), with $t = \widetilde{O} \fbrac{ 4^\hyperlen }$ and $ 0 \leq \edgecount(\sgraph^i) \leq \edgecount(\sgraph) $, for all $ i \in [t]$ , we get
    \begin{align*}
        \Pr \tbrac{ \abs{ \frac{ 2^\hyperlen }{t} \cdot \sum_{i = 1}^t \edgecount(\sgraph^i) - \edgecount(\sgraph) } \geq \frac{ \edgecount(\sgraph) }{ 2 } } &= \Pr \tbrac{ \abs{  \sum_{i = 1}^t \edgecount(\sgraph^i) - \frac{t}{2^\hyperlen} \cdot \edgecount } \geq  \frac{t}{2^{\hyperlen + 1}} \cdot \edgecount(\sgraph)}\\
        &\leq 2 \cdot \exp\fbrac{ \frac{ 2 t }{ 2^{2\hyperlen + 1} } } \leq \delta 
    \end{align*}

    Therefore, we get
    \begin{equation*}
        \Pr \tbrac{  \frac{ 3\edgecount(\sgraph) }{ 2^{\hyperlen + 1} } \geq \frac{ \sum_{i = 1}^t \edgecount(\sgraph^i) }{ t } \geq \frac{ \edgecount(\sgraph) }{ 2^{\hyperlen + 1} } } \geq 1 - \confidence
    \end{equation*}

    which completes the proof.
\end{proof}

\begin{algorithm}
    \caption{\approxsubgraph($\sgraph, \sgraph^1, \ldots, \sgraph^\instances, \confidence $)}\label{alg: addestimation}
    \begin{algorithmic}[1]
        \Require Subhypergraphs $\sgraph^1, \ldots, \sgraph^\instances$ of $\sgraph$ where $t = O(4^d \log (\confidence^{-1}))$, \ciscq{} query access to $\sgraph$ and a confidence parameter $\delta \in (0,1)$.
        \Ensure A subhypergraph $\sgraph' $ of $\sgraph$.
        \State Uniformly and independently sample $ O( 4^\hyperlen \log (\confidence^{-1}) ) $ random hyperedges from $ \sgraph $ and let $ \samplededges $ be the multi-set of sampled hyperedges.
        \State Compute $ \approxfractionofedges ( \sgraph^i ) = \frac{ 1 }{ \size{ \samplededges } } \sum_{ \edge \in \samplededges} \indicator_{ \edge \in \edgeset(\sgraph^i) } $ for $ i \in [\instances] $. 
        \If{There exists $ i \in [\instances] $ such that $ \approxfractionofedges( \sgraph^i ) \geq \frac{ 1 }{ 2^{d + 2} } $}
            \State \Return $\sgraph^i$
        \Else 
            \State \Return Any $\sgraph^i$ arbitrarily.
        \EndIf
    \end{algorithmic}
\end{algorithm}

\begin{lemma}\label{lemma: addestimation}
    With probability at least $ 1 - \confidence$, the following statements hold for \approxsubgraph{} 
    \begin{enumerate}
    \item[(i)] If  $ \frac{ \edgecount( \sgraph^i ) }{ \edgecount(\sgraph) } \geq \frac{ 1 }{ 2^{d + 1} } $, then $ \approxfractionofedges( \sgraph^i ) \geq \frac{ 1 }{ 2^{d + 2} } $.%
    \item[(ii)] If  $ \frac{ \edgecount( \sgraph^i ) }{ \edgecount( \sgraph ) } < \frac{ 1 }{ 2^{d + 3} } $, then $ \approxfractionofedges( \sgraph^i ) < \frac{ 1 }{ 2^{d + 2} } $.
    \end{enumerate}
    The number of $\ciscq$ queries made is $ O\fbrac{ 4^\hyperlen \log (\confidence^{-1}) } $ samples.
\end{lemma}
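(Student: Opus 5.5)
The plan is to condition on the candidates $\sgraph^1,\ldots,\sgraph^\instances$, which are fixed before any query is made and are independent of the oracle's randomness. Then I apply Hoeffding's bound (Lemma~\ref{lemma: hoeffding}) to each empirical fraction $\approxfractionofedges(\sgraph^i)$ and finish with a union bound over the $\instances$ candidates.

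Fix $i\in[\instances]$ and write $p_i = \edgecount(\sgraph^i)/\edgecount(\sgraph)$. Each sample in $\samplededges$ is produced by a call to $\ciscq$ on the partite sets of $\sgraph$, so it is a uniformly random hyperedge of $\sgraph$, and the samples are mutually independent. Because $\sgraph^i$ is a subhypergraph of $\sgraph$, the indicator $\indicator_{\edge\in\edgeset(\sgraph^i)}$ for a sample $\edge$ is a Bernoulli$(p_i)$ variable taking values in $[0,1]$. Hence $\approxfractionofedges(\sgraph^i)$ is an average of $|\samplededges|$ i.i.d.\ Bernoulli$(p_i)$ variables with mean $p_i$. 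Hoeffding's bound with additive deviation $\eta = 1/2^{\hyperlen+3}$ gives
\[
\Pr\left[\abs{\approxfractionofedges(\sgraph^i) - p_i} \ge \eta\right] \le 2\exp\fbrac{-2|\samplededges|\,\eta^2} = 2\exp\fbrac{-2|\samplededges|/4^{\hyperlen+3}}.
\]

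For the union bound, I choose $|\samplededges| = c\cdot 4^{\hyperlen}\log(\instances/\confidence)$ for a suitable constant $c$, which makes each failure probability at most $\confidence/\instances$. Then with probability at least $1-\confidence$ every $i\in[\instances]$ satisfies $\abs{\approxfractionofedges(\sgraph^i)-p_i}<1/2^{\hyperlen+3}$. On this event both claims follow by arithmetic:
\begin{itemize}
\item[(i)] If $p_i\ge 1/2^{\hyperlen+1}$, then $\approxfractionofedges(\sgraph^i) > 1/2^{\hyperlen+1} - 1/2^{\hyperlen+3} = 3/2^{\hyperlen+3} \ge 1/2^{\hyperlen+2}$.
\item[(ii)] If $p_i< 1/2^{\hyperlen+3}$, then $\approxfractionofedges(\sgraph^i) < 2/2^{\hyperlen+3} = 1/2^{\hyperlen+2}$.
\end{itemize}
The query count is $|\samplededges|$.

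The main obstacle is this union bound. It costs $\log(\instances/\confidence)$ rather than $\log(\confidence^{-1})$. In our use $\instances = O(4^{\hyperlen}\log(\confidence^{-1}))$, so $\log \instances = O(\hyperlen + \log\log(\confidence^{-1}))$, and the extra term is $O(\log(\confidence^{-1}))$ up to the $\hyperlen$-dependent constant. The only exception is when $\confidence$ is so close to $1$ that $\log(\confidence^{-1})$ is tiny compared to $\log\instances$. I would handle that case by lower-bounding the sample size by $c\cdot 4^{\hyperlen}\max\{\log(\confidence^{-1}),\hyperlen\}$, which the $O(\cdot)$ in the stated bound absorbs, since constants depending on $\hyperlen$ are allowed in the paper's notation. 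This yields $O(4^{\hyperlen}\log(\confidence^{-1}))$ queries as stated.
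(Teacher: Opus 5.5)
Your proof is correct and follows the same route as the paper: Hoeffding's bound applied to the empirical fraction of uniformly sampled hyperedges of $H$ that survive in $H^i$. It is actually more careful than the paper's write-up, which has two slips that your version fixes: the paper uses additive deviation $1/2^{d+2}$, which gives (i) but in case (ii) only $\widehat{p}(H^i) < 3/2^{d+3}$ rather than $< 1/2^{d+2}$, and it never takes the union bound over the $t$ candidates that the lemma is implicitly used for, whereas your $\eta = 1/2^{d+3}$ and $\log(t/\delta)$ accounting handle both.
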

\begin{proof}
    Fix a graph $ \sgraph^i $ for any $ i \in [t] $. For each sampled hyperedge $\edge \in \edgeset(\sgraph)$, define the indicator random variable
    \[
        X_j =
        \begin{cases}
        1 & \text{If $\edge \in \edgeset(\sgraph^i)$} \\
        0 & \text{otherwise.}
        \end{cases}
    \]
    Observe that $\E[X_j] = \frac{m(\sgraph^i)}{m(\sgraph)}$. 

    Define the fraction of sampled hyperedges that belong to $ \sgraph^i $ by
$
X = \frac{1}{\size{\samplededges}}
\sum_{j \in [\size{\samplededges}]} X_j.
$
where $ \samplededges $ is the multiset of sampled hyperedges obtained via \ciscq{} queries on $ \sgraph $. 
Note that $ X = \approxfractionofedges(\sgraph^i) $ and hence
\[
\E[\approxfractionofedges(\sgraph^i)]
= \frac{1}{\size{\samplededges}}
\sum_{j \in [\size{\samplededges}]}\E[X_j]
= \frac{\edgecount(\sgraph^i)}{\edgecount(\sgraph)}
\]

Now, applying Hoeffding's Bound (lemma~\ref{lemma: hoeffding}), with $\size{\samplededges}=O(4^\hyperlen  \log (\confidence^{-1}))$ we get
\[
    \Pr\!\left[ \approxfractionofedges(\sgraph^i) \geq
    \frac{m(\sgraph^i)}{m(\sgraph)}
    - \frac{1}{2^{\hyperlen + 2} }
    \right]
    \geq
    \exp\!\left(
    -\frac{\size{S}}{2^{2\hyperlen + 3} }
    \right) \geq 1 - \delta
\]

Using $ \frac{ \edgecount( \sgraph^i ) }{ \edgecount(\sgraph) } \geq \frac{ 1 }{ 2^{d + 1} } $, we get the result in $(i)$. Similarly, $(ii)$ can be proved.
\end{proof}

We are now ready to prove Lemma~\ref{lemma: approxsubhypergraph}.

\begin{proof}[Proof of Lemma~\ref{lemma: approxsubhypergraph}]
     Let $\mathcal{E}_{\text{aps}}$ denote the event that $\sgraph'$ is an approximation-preserving subhypergaph of $\sgraph$, that is, $ m(\sgraph')/m(\sgraph)\geq 1/2^{\hyperlen + 3}$.
     Let $\mathcal{E}$ be the event that there exists $i \in [t]$ such that $\edgecount( \sgraph^i ) / \edgecount( \sgraph ) \geq 1/2^{\hyperlen + 1}$. By Lemma~\ref{lemma: ensemblegraphpartite}, $\Pr[ \mathcal{E} ] \geq 1 - \confidence/2$.

     Let us condition on the event $\mathcal{E}$. It suffices to bound $\Pr[\mathcal{E}_{\text{aps}} \mid \mathcal{E}]$. Under the event $\mathcal{E}_{\text{aps}}$  there exists $i \in [t]$ such that $\edgecount( \sgraph^i ) / \edgecount( \sgraph ) \geq 1/2^{\hyperlen + 1}$, w.l.o.g assume $\edgecount( \sgraph^1 ) / \edgecount( \sgraph ) \ge 1/2^{\hyperlen + 1}$. If there exists $j \in [t] \setminus \{1\}$ such that $\edgecount( \sgraph^j) / \edgecount( \sgraph ) < 1/2^{\hyperlen + 3}$, then from Lemma~\ref{lemma: addestimation}, we have  $\approxfractionofedges(\sgraph^j) \leq 1/{2^{d + 2}}$ with probability atleast $1 - \confidence/2$ and \approxsubgraph{} will only output $$\sgraph^i \in \sbrac{ \sgraph^1 \cup \sgraph^j : \edgecount( \sgraph^j ) / \edgecount( \sgraph ) \geq 1/2^{\hyperlen + 3}}$$. Hence,
     \begin{equation*}
         \Pr[\mathcal{E}_{\text{aps}} \mid \mathcal{E}] \geq 1 - \frac{\confidence}{2}
     \end{equation*}
     Therefore,
     \begin{equation*}
         \Pr[\mathcal{E}_{\text{aps}}] = \Pr[\mathcal{E}] \cdot \Pr[\mathcal{E}_{\text{aps}} \mid \mathcal{E}] \geq 1 - \confidence
    \end{equation*}
\end{proof}
\subsubsection{Procedure to find approximation preserving sequence}\label{section: approx preserve sequence}
We now describe the algorithm \approxsubseq{} (see Algorithm~\ref{alg: subsequence}), which constructs an approximation preserving sequence used by the estimator.

\begin{algorithm}
    \caption{\approxsubseq($ \sgraph, \confidence $)}\label{alg: subsequence}
    \begin{algorithmic}[1]
        \Require \ciscq{} access to a $\qsetno$-partite $\hyperlen$-hypergraph $\sgraph$.
        \Ensure A nested sequence $\sgraph_1, \sgraph_2, \cdots, \sgraph_\iterations$ such that $V(\sgraph_1) \supseteq \ldots \supseteq V(\sgraph_\iterations)$
        \State Initialize $ \sgraph_0 \gets \sgraph $; $ \iscoll \gets 1 $; $ \iterations \gets 0 $  
        \While{$ true $}
            \State $\iscoll \gets \iscollision( \sgraph_\iterations, \thresholdconst, \frac{ \confidence }{ 3 \cdot 2^{ \iterations } } )$
            \If{$\iscoll$ equals $ 1 $}
                \State {}{break}
            \Else
                \State $\sgraph^{\iterations1}, \ldots, \sgraph^{\iterations t} \gets \ensemblecolorpartite( \sgraph_\iterations, \frac{ \confidence }{ 3 \cdot 2^{ \iterations } } )$\label{alg:subsequence line 7}
                \State $ \sgraph_{ \iterations + 1 } \gets \approxsubgraph( \sgraph_\iterations, \sgraph^{\iterations1}, \ldots, \sgraph^{\iterations t}, \frac{ \confidence }{ 3 \cdot 2^{ \iterations } } ) $
                \State $ \iterations \gets \iterations + 1 $
            \EndIf
        \EndWhile
        \State \Return $\sgraph_1, \sgraph_2, \cdots, \sgraph_\iterations$
    \end{algorithmic}
\end{algorithm}

\begin{restatable}[Correctness of \approxsubseq{}]{lemma}{}\label{lemma: subsequence}
Consider \approxsubseq{} as described in (Algorithm \ref{alg: subsequence}). With probability at least $ 1 - \confidence$, it outputs an approximation preserving sequence  $ \sgraph_1, \sgraph_2, \ldots, \sgraph_\iterations $ such that $\edgecount (\sgraph_{\kappa})\leq 8\thresholdconst$. Here $\thresholdconst$ is a large constant.  The number of $\ciscq$ queries made is $ O\fbrac{\log \confidence^{-1} + \iterations^2}$.
\end{restatable}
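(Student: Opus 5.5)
Iteration $j$ of the loop in \approxsubseq{} (with $\iterations=j$ at that point) runs three subroutines, \iscollision{}, \ensemblecolorpartite{} and \approxsubgraph{}, each with confidence parameter $\confidence/(3\cdot 2^{j})$. The plan is to bound the failure probability of iteration $j$ by the sum of these three, then union bound over all $j\ge 0$. Define the bad event $\mathcal{B}_j$: in iteration $j$, either \iscollision{} violates the guarantee of Lemma~\ref{lemma: collision detection}, or \ensemblecolorpartite{} fails the conclusion of Lemma~\ref{lemma: ensemblegraphpartite}, or the sampling step of \approxsubgraph{} fails the conclusion of Lemma~\ref{lemma: addestimation}. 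Then $\Pr[\mathcal{B}_j]\le 3\cdot \confidence/(3\cdot 2^{j}) = \confidence/2^{j}$. Union bounding over all $j$, whether or not the loop reaches them, would give $2\confidence$. So I would instead assign confidence $\confidence/(3\cdot 2^{j+1})$ to each call; equivalently, rescale $\confidence$ by a constant, which costs only a constant factor in the query bound. The union over all $j\ge 0$ then has probability at most $\confidence$. This is legitimate even though $\iterations$ is random: the event ``some iteration that is actually executed fails'' is contained in $\bigcup_j \mathcal{B}_j$, where $\mathcal{B}_j$ is defined on the fresh randomness iteration $j$ would use.

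On the complement of $\bigcup_j\mathcal{B}_j$, I would argue inductively. Suppose iteration $j$ does not break. Then \iscollision{} output $0$, so by part (i) of Lemma~\ref{lemma: collision detection}, $\edgecount(\sgraph_j)>\thresholdconst$. This is exactly the hypothesis of Lemma~\ref{lemma: ensemblegraphpartite}, so some candidate $\sgraph^{ji}$ retains at least a $2^{-(\hyperlen+1)}$ fraction of the hyperedges. The argument of Lemma~\ref{lemma: approxsubhypergraph} then shows that $\sgraph_{j+1}$ satisfies $\edgecount(\sgraph_{j+1})/\edgecount(\sgraph_j)\ge 2^{-(\hyperlen+3)}$. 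The candidates are built by vertex subsampling of the partite sets of $\sgraph_j$, so $V(\sgraph_{j+1})\subseteq V(\sgraph_j)$, and consecutive pairs are approximation preserving. At the breaking iteration, \iscollision{} outputs $1$, and part (ii) of Lemma~\ref{lemma: collision detection} (contrapositive) gives $\edgecount(\sgraph_\iterations)<8\thresholdconst$. Termination holds almost surely: each vertex survives a round with probability $1/2$, so eventually no hyperedges remain and \iscollision{} outputs $1$ on the empty hypergraph.

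For the query count, iteration $j$ costs $O(\sqrt{\thresholdconst}\log(2^{j}/\confidence))$ queries for \iscollision{} and $O(4^{\hyperlen}\log(2^{j}/\confidence))$ for \approxsubgraph{}. \ensemblecolorpartite{} makes no queries. Since $\thresholdconst$ is a constant and $\hyperlen$ is absorbed into the constant, iteration $j$ costs $O(\log\confidence^{-1}+j)$. Summing over $j=0,\dots,\iterations$ gives $O(\iterations\log\confidence^{-1}+\iterations^2)$.

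The main obstacle is matching the stated bound $O(\log\confidence^{-1}+\iterations^2)$, because the naive sum has the cross term $\iterations\log\confidence^{-1}$. I would try two fixes. The first is to observe that $\iterations\log\confidence^{-1}\le \iterations^2+\log^2\confidence^{-1}$, and to note that the lemma is only ever used with $\confidence$ a constant, where the cross term is $O(\iterations)$. The second is to accept $O(\iterations\log\confidence^{-1}+\iterations^2)$ as the honest bound. Either version suffices downstream, since the expected cost still comes to $O(\log^2\vertexcount)$ for constant $\confidence$.
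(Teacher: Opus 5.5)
Your proposal is correct and follows essentially the paper's route: per-iteration failure at most $\confidence/2^{j}$ from the three subroutine lemmas, a geometric union bound over the random number of iterations, and a sum of per-iteration costs. Both discrepancies you flag are genuine defects of the paper's proof rather than of yours. Its union bound sums $\confidence/2^{j}$ only over $j\ge 1$, although iteration $0$ already runs its three subroutines at confidence $\confidence/3$ each, so the algorithm as written is only certified to succeed with probability $1-2\confidence$. Its per-iteration costs $\Theta(\log(2^{j}/\confidence))$ sum to order $\iterations\log\confidence^{-1}+\iterations^{2}$, so the stated $O(\log\confidence^{-1}+\iterations^{2})$ does not follow in general; it does follow for constant $\confidence$, which is the only way the lemma is used ($\confidence=1/5$ in \edgeestimationISC{}).

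The one inaccurate step is your termination remark. $\sgraph_{j+1}$ is selected by the sampling test, so vertex survival is not an unbiased coin flip. Moreover, on the success event every $\sgraph_{j+1}$ is approximation preserving and hence nonempty, so the loop never stops by the hypergraph emptying out. Termination therefore needs a different argument, which the paper's proof of this lemma does not supply either.
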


\begin{proof}
    {}{Correctness:}
    We can assume that $\edgecount(\sgraph_0)$ is sufficiently large, that is, $\edgecount(\sgraph) \geq \thresholdconst$. If not, then from Lemma~\ref {lemma: collision detection} \iscollision{} will output $1$ with probability $1 - \confidence$, and the algorithm will output nothing.
    
    Now, we prove the claim by induction on $\kappa$. For each $j \ge 1$, let $\event_j$ denote the event that the algorithm either terminates with $\edgecount(\sgraph_j) \leq 8 \thresholdconst$ or produces $\sgraph_{j + 1}$. Let $\event_{\le j}$ denote the event that $\event_i$ occurs for all $i \in [j]$, that is
    \[
        \event_{\le j} \;=\; \bigcap_{i=1}^{j} \mathcal{E}_i .
    \]

    First we claim that, If $\edgecount(\sgraph_j) \ge \thresholdconst$ then $ \Pr \tbrac{ \event_j \mid \event_{\le j - 1} } \ge 1 - \confidence_j$ where $\confidence_j =  \frac{ \confidence }{ 2^j }$. 

    We consider two cases $(\edgecount(\sgraph_j) > 8\thresholdconst)$ and $( \thresholdconst \leq \edgecount(U_j) \leq 8\thresholdconst )$, analyse them individually. In what follows, we assume that $\event_{\le j - 1}$ has occurred.
    
    \paragraph{Case $1$: $(\edgecount(\sgraph_j) > 8\thresholdconst)$.} Let $ \event_j = \event_1 \cap \event_2 \cap \event_3$ where $\event_1$ be the event that \iscollision{} output $0$, $\event_2$ be the event that \ensemblecolorpartite{} outputs $\sgraph_{j1}, \ldots, \sgraph_{j t}$ and $\event_3$ be the event that \approxsubgraph{} outputs $\sgraph_{j + 1}$. So, from Lemma~\ref{lemma: collision detection} we have $\Pr[\event_1 ] \ge \confidence_j/3$, from Lemma~\ref{lemma: ensemblegraphpartite} we have $\Pr[\event_2 \mid \event_1] \ge {\confidence_j}/{3} $ and from Lemma~\ref{lemma: approxsubhypergraph}, we have $\Pr[\event_3 \mid \event_2 \cap \event_1] \ge {\confidence_j}/{3} $. Then
    \begin{equation*}
        \Pr\tbrac{\event_j} = \Pr\tbrac{\event_1 \cap \event_2 \cap \event_3} \ge 1 - \confidence_j
    \end{equation*}

    \paragraph{Case $2$: $( \thresholdconst \leq \edgecount(U_j) \leq 8\thresholdconst )$.}
    In this case, we show that
    \[
    \Pr\!\left[\mathcal{E}_j \mid \mathcal{E}_{\le j-1} \right] \ge 1 - \delta_j
    \]
    regardless of whether \approxsubgraph{} terminates after executing \iscollision{} or proceeds to execute \ensemblecolorpartite{} and \approxsubgraph{}.
    
    In the former case, the claim follows directly from the definition of $\mathcal{E}_j$ together with the fact that $\edgecount(\sgraph_j) \le 8\thresholdconst$. In the latter case, \ensemblecolorpartite{} and \approxsubgraph{} produces $\sgraph_{j + 1}$, this occurs with probability $1 - 2\delta_j/3$. Hence
    \[
    \Pr\!\left[ \mathcal{E}_j \mid \mathcal{E}_{\le j-1} \right]
    \ge
    1 - \delta_j.
    \]

    Therefore, taking the union bound over all iterations, the probability that the algorithm produces a valid sequence is:
    \begin{equation*}
        \Pr \tbrac{\event_{\leq \iterations}} \ge 1 - \sum_{j = 1}^\iterations \confidence_j \ge 1 - \sum_{j = 1}^\infty \frac{ \confidence }{ 2^\iterations } = 1 - \confidence  
    \end{equation*}

    \paragraph{Query Complexity:} From Lemma~\ref{lemma: collision detection}, \ref{lemma: ensemblegraphpartite} and \ref{lemma: approxsubhypergraph}, the number of \ciscq{} queries made by \iscollision{} (for a constant threshold $ \threshold = \thresholdconst $), \ensemblecolorpartite{} and \approxsubgraph{} over $\iterations$ iterations is,
    \begin{align*}
        O\fbrac{\sum_{ j = 1 }^{\iterations} \log( \confidence^{-1} \cdot 2^\iterations ) } = O\fbrac{\log \confidence^{-1} + \iterations^2}
    \end{align*}
\end{proof}

Note that the number of iterations and the query complexity of  \approxsubseq{} (as mentioned in the above lemma) are random variables. We will argue their expectation in the following lemma. Moreover, 
when we discuss the algorithm of estimating the hyperedge count of $\qsetno$-partite $\hyperlen$-hypergraph in \ref{section: s-par d-hyper} (that uses \approxsubseq{} as a subroutine), we show that the query complexity of the this algorithm is $O(\kappa^2 \cdot \log \kappa)$, where $\kappa$ is the number of iterations of \approxsubseq{}. The following lemmas will be useful in establishing the query complexity in \ref{section: s-par d-hyper}.

\begin{lemma}\label{lemma: kappa}
$\E[\iterations]=O(\log \vertexcount)$, $\E[\iterations^2]=O(\log^2 \vertexcount)$, and $\E[\iterations^2 \log \iterations]=O(\log^2 \vertexcount \cdot \log \log \vertexcount)$.
\end{lemma}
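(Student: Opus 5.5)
The plan is to dominate $\iterations$ by the time at which the vertex set of the current hypergraph becomes too small to support more than $\thresholdconst$ hyperedges, and to bound that time by a maximum of geometric random variables, as sketched in the technical overview.

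\emph{Step 1: a tail bound.} I will show that $\Pr[\iterations > k] \le n\cdot 2^{-k}$ up to a constant factor. Each $\sgraph_{j+1}$ is one of the candidates produced by \ensemblecolorpartite{} on $\sgraph_j$, and every candidate keeps each vertex of $\vertexset(\sgraph_j)$ independently with probability $1/2$. The selection step picks one of $t=O(4^\hyperlen\log(\confidence_j^{-1}))$ candidates, so this is not literally a single independent halving. I would handle it in one of two ways. The first is a union bound over the candidate choices: a fixed vertex survives in a specific candidate with probability $1/2$, so it survives $k$ rounds along the chosen chain with probability at most $\prod_j t_j \cdot 2^{-k}$. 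That product is too lossy, since $t_j$ grows with $j$. The cleaner route is to couple the process. Condition on the selection rule, which depends only on the oracle samples and the candidates. For the purpose of bounding $\iterations$, I will instead use the fact that the loop stops once $\sgraph_j$ has no hyperedges, and that needs only one vertex from each of the $d$ slots of every hyperedge to vanish.

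\emph{Step 2: the easier monotone argument.} The loop continues at iteration $j$ only if \iscollision{} outputs $0$. By Lemma~\ref{lemma: collision detection}, and since \iscollision{} always outputs $1$ on an empty hypergraph (no samples, treated as a collision), continuation requires $\sgraph_j$ to be nonempty. A nonempty $\sgraph_j$ contains a hyperedge $e\subseteq \vertexset(\sgraph_j)$. Along the chosen chain every vertex of $\vertexset(\sgraph_0)$ has been retained $j$ times. For a fixed vertex $v$ and any fixed index sequence of candidates, the retention coins are fresh and independent of the choices made at earlier levels. So I would define, for each $v$, its survival across \emph{all} candidates at each level, and bound the survival of $v$ along the selected path by a geometric variable with parameter $1-2^{-1}$ times a factor coming from the selection.

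\emph{Step 3: the main obstacle, the selection bias.} The selected index may favor survivors. I would try first to argue that the approximation-preserving condition, together with $\edgecount(\sgraph_{j+1})\le\edgecount(\sgraph_j)$, lets me track $\edgecount$ instead of vertices. Given $\edgecount(\sgraph_j)$, each candidate has expected hyperedge count $2^{-\hyperlen}\edgecount(\sgraph_j)$, and the maximum over $t$ candidates is at most $\sum_i \edgecount(\sgraph^{i})$. This gives $\E[\edgecount(\sgraph_{j+1})\mid \sgraph_j]\le t_j 2^{-\hyperlen}\edgecount(\sgraph_j)$, which fails to contract. So the fallback is the vertex version with a union bound over the $t_j$ candidates: a vertex survives level $j$ in some candidate with probability $\le 1$, which is useless. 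I therefore expect the intended argument to treat each round as a single halving of vertices, where the selected candidate is, marginally per vertex, still a $1/2$-retention. That gives $\Pr[v\in\vertexset(\sgraph_k)]\le 2^{-k}$, and a union bound over $n$ vertices gives $\Pr[\iterations>k]\le n2^{-k}$. I would state this per-vertex marginal claim explicitly as the key step and, if needed, justify it by symmetry of the selection rule under flipping a single vertex's coin. This is the point I am least sure of.

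\emph{Step 4: summing the tail.} Given the tail bound, split at $k_0=\lceil\log_2 n\rceil$:
\[
\E[g(\iterations)]\le g(k_0)+\sum_{k>k_0} g(k+1)\,n2^{-k}.
\]
The tail sum is $O(1)\cdot\sum_{i\ge0} g(k_0+i+1)2^{-i}$, which is $O(g(k_0))$ for the polynomially growing functions $g(k)=k$, $k^2$ and $k^2\log k$. This yields $O(\log n)$, $O(\log^2 n)$ and $O(\log^2 n\log\log n)$ respectively.
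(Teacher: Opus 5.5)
Your Step~4 and the max-of-geometrics framing are exactly the paper's proof. The paper simply asserts that each $H_i$ keeps every vertex of $H_{i-1}$ independently with probability $1/2$, writes $\kappa=\max_v X_v$ with $X_v$ geometric, and union-bounds to get $\Pr[\kappa>k]\le n2^{-k}$. It never addresses the selection made by \texttt{ApproxSubhypergraph}. The step you mark as least certain is therefore the crux, and it is false, so neither your argument nor the paper's proves the lemma.

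The acceptance test $\widehat{p}(H^i)\ge 2^{-(d+2)}$ is monotone increasing in the retention coins: keeping more vertices can only keep more sampled hyperedges. So there is no symmetry under flipping one coin. Suppose the first passing candidate is returned, and condition on the sample $S$ and on some candidate passing. Then the returned set is a uniformly random half-subset conditioned on an increasing event, and Harris' inequality gives each vertex retention probability \emph{at least} $1/2$.

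Concretely, take $d=2$ and $H$ a star $K_{1,N}$. A candidate that drops the center has no hyperedges and $\widehat{p}=0$, so it is returned only if no candidate passes. Hence, with probability at least $1-\delta$, the center lies in $V(H_k)$ for every $k\le\kappa$. Its removal time is not geometric, and $\Pr[v\in V(H_k)]\le 2^{-k}$ fails for it. In this example $\kappa\approx\log_2 N$ anyway, but only because the leaves are halved and the loop stops once $m(H_j)$ is small, not once all vertices are gone.

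The bias can even break the conclusion itself. The pseudocode does not say which passing candidate is returned, and $t_j=\Theta(4^d(j+\log\delta^{-1}))$ grows with $j$. Suppose the candidate with the largest $\widehat{p}$ is returned. Let $H$ be a constant-size clique $K_w$ ($s=1$, $d=2$) with $\binom{w}{2}$ well above $8\zeta$. In round $j$, some candidate keeps all $w$ vertices and is selected with probability tending to $1$ exponentially fast in $j$. \texttt{IsCollision} keeps answering $0$ because the hyperedge count never changes. So with positive probability the loop never halts, and $\mathbb{E}[\kappa]=\infty$.

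A correct proof therefore has to fix the selection rule and bound $\kappa$ through the stopping condition on $m(H_j)$, not through a per-vertex union bound. For example, one could show that the returned candidate shrinks the hyperedge count by a constant factor with probability bounded away from zero. Your Step~3 correctly notes that the naive bound $\mathbb{E}[m(H_{j+1})\mid H_j]\le t_j2^{-d}m(H_j)$ does not contract. That missing contraction argument is precisely what neither your proposal nor the paper supplies.
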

\begin{proof}
    Let $\sgraph_1, \sgraph_2, \cdots, \sgraph_\iterations$ be the output of \approxsubgraph{}, where the vertices of $\sgraph_i$ is chosen independently with probability $1/2$ from the vertex set of $\sgraph_{i - 1}$, for all $i \in [\iterations]$ and $\sgraph_0 = \sgraph$. 

    Let $X_i$ be the number of rounds it takes for vertex $j$ in $\sgraph$ to be picked and removed. $X_j$ follows a geometric distribution where the probability of success of being removed on the $k$-th round is
    \begin{equation*}
        \Pr \tbrac{X_j = k} = \fbrac{ \frac{1}{2} }^{k - 1} \frac{1}{2} = \fbrac{ \frac{1}{2} }^{k}
    \end{equation*}
    Then the total number of rounds, $\iterations = \max\fbrac{X_1, X_2, \ldots, X_n}$. Therefore, 
    \begin{equation*}
        \Pr \tbrac{ \iterations > k} = \Pr \tbrac{ \max\fbrac{X_1, X_2, \ldots, X_n} > k} = \Pr \tbrac{ \bigcup_{ j = 1}^n \sbrac{ X_j > k} } \leq \sum_{ j = 1}^n \Pr \tbrac{ X_j > k} = n \fbrac{\frac{1}{2}}^k
    \end{equation*}

    Now, we will show that $\E[\iterations]=O(\log \vertexcount)$,  
    \begin{align*}
        \E \tbrac{\iterations} &= \sum_{k = 0}^\infty k \Pr \tbrac{ \iterations = k}\\
        &= \sum_{k = 0}^{\log n} k \Pr \tbrac{ \iterations = k} + \sum_{k = \log n + 1}^{\infty} k  \Pr \tbrac{ \iterations = k}\\
        &\leq \log n +  \sum_{\ell =  1}^{\infty} \fbrac{\log n + \ell}  \Pr \tbrac{ \iterations = \log n + \ell} \\
        &\leq \log n +  \sum_{\ell =  1}^{\infty}  \frac{n\fbrac{\log n + \ell}}{2^{\log n + \ell}}   \\
        &\leq \log n + O(1) = O(\log n)
    \end{align*}

    $\E[\iterations^2]=O(\log^2 \vertexcount)$, $\E[\iterations^2 \log \iterations]=O(\log^2 \vertexcount \cdot \log \log \vertexcount)$ can be shown in a similar manner.
\end{proof}

\subsection{\texorpdfstring{Procedure to hyperedge estimation of $s$-partite $d$-hypergraph}{Procedure to hyperedge estimation of s-partite d-hypergraph}}\label{section: s-par d-hyper}

In this section, we present the algorithm for estimating the number of hyperedges of $s$-partite $d$-hypergraph using \ciscq{} queries and prove Theorem~\ref{theorem: s-par d-hyper}.

\edgeestimationISC{} (Algorithm~\ref{alg: edgeestimationISC}) begins by calling \subsequence{} which outputs an approximation preserving sequence $\sgraph_1 , \ldots, \sgraph_\iterations $, where each $ \fractionofedges_i = \edgecount( \sgraph_i ) / \edgecount( \sgraph_{i-1} ) $ is bounded below by a constant. This ensures that each ratio can be estimated efficiently. To estimate each $ \fractionofedges_i $, we use the subroutine \mulestimation{} which returns an $ (1 \pm \approxerror)$-approximation of $ \fractionofedges_i$ . The procedure \mulestimation{} is described in \Cref{alg: mulestimation} and its guarantee is established in \Cref{lemma: mulestimation}.

\begin{algorithm}
    \caption{\mulestimation($ \sgraph', \sgraph, \approxerror, \confidence $)}\label{alg: mulestimation}
    \begin{algorithmic}[1]
        \Require A subhypergraph $ \sgraph' $ of $ \sgraph $ and \ciscq{} access to the hypergraph $\sgraph$.
        \Ensure An etimate of $\fractionofedges(\sgraph')$
        \State Uniformly and Independently sample $ O\fbrac{4^\hyperlen \log ( \confidence^{-1} ) / \approxerror^2 } $ random hyperedges from $ \sgraph $ and let $ \samplededges $ be the multi-set of sampled hyperedges.
        \State \Return $ \approxfractionofedges ( \sgraph' ) = \frac{ 1 }{ \size{ \samplededges } } \sum_{ \edge \in \samplededges} \indicator_{ \edge \in \edgeset(\sgraph') } $. 
    \end{algorithmic}
\end{algorithm}

\begin{restatable}[Correctness of \mulestimation{}]{lemma}{mulest}\label{lemma: mulestimation}
    For hypergraph $ \sgraph' $ such that $ \vertexset(\sgraph') \subseteq \vertexset(\sgraph) $ and $ \approxerror > 0$ , If $ \fractionofedges( \sgraph' ) = \frac{ \edgecount( \sgraph' ) }{ \edgecount( \sgraph ) } \geq \frac{ 1 }{ 2^{d + 3} } $, then \mulestimation{} outputs $ \approxfractionofedges ( \sgraph' ) = ( 1 + x ) \fractionofedges( \sgraph ) $, where $\E[x] = 0 $ and $x \in (-\approxerror, \approxerror)$ with probability at least $ 1 - \confidence $. The number of \ciscq{} queries made is $ O\fbrac{ 4^d \log ( \confidence^{-1} )  / \approxerror^2 } $.
\end{restatable}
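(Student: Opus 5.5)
We would prove Lemma~\ref{lemma: mulestimation} by the same sampling argument used in Lemma~\ref{lemma: addestimation}, but now asking for a multiplicative rather than additive guarantee. Write $p = \fractionofedges(\sgraph') = \edgecount(\sgraph')/\edgecount(\sgraph)$; the target in the statement should be read as $p$. For each of the $N = \size{\samplededges}$ samples, let $X_j$ be the indicator that the $j$-th sampled hyperedge lies in $\edgeset(\sgraph')$. Since each \ciscq{} query on the partite sets of $\sgraph$ returns a uniformly random hyperedge of $\sgraph$, independently across queries, the $X_j$ are i.i.d.\ Bernoulli$(p)$. Then $\approxfractionofedges(\sgraph') = \frac1N\sum_j X_j$ has mean exactly $p$. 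Defining $x := \approxfractionofedges(\sgraph')/p - 1$ (well defined since $p \ge 2^{-(d+3)} > 0$) gives $\approxfractionofedges(\sgraph') = (1+x)p$ and $\E[x] = \E[\approxfractionofedges(\sgraph')]/p - 1 = 0$. This establishes the unbiasedness part.

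For the concentration part, we need $\Pr[\abs{\approxfractionofedges(\sgraph') - p} \ge \approxerror p] \le \confidence$. We would apply Hoeffding's bound (Lemma~\ref{lemma: hoeffding}) to the $[0,1]$-valued $X_j$ with additive deviation $\approxerror p$. This gives the failure bound $2\exp(-2N\approxerror^2 p^2)$. Using the hypothesis $p \ge 2^{-(d+3)}$, we have $p^2 \ge 4^{-(d+3)}$, so choosing $N = c\cdot 4^{d}\log(2/\confidence)/\approxerror^2$ with $c$ a sufficiently large absolute constant (absorbing the $4^3$) makes this at most $\confidence$. A multiplicative Chernoff bound would give a better dependence, $O(2^d)$ instead of $4^d$, but Hoeffding already matches the stated query count, so we would use it for uniformity with the earlier lemmas.

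The query complexity is immediate, since \mulestimation{} makes exactly $N = O(4^d\log(\confidence^{-1})/\approxerror^2)$ calls to $\ciscq$ and does no further querying. Computing the indicators $\indicator_{\edge\in\edgeset(\sgraph')}$ needs no queries: $\sgraph'$ is specified by its vertex classes, so membership of a returned hyperedge is checked locally.

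The only subtle step is justifying that the samples are uniform over $\edgeset(\sgraph)$ and independent. This requires that $\sgraph$ is given as an induced $s$-partite subhypergraph $\sgraph\tbrac{V_1,\dots,V_s}$, so that a single query $\ciscq(V_1,\ldots,V_s)$ is exactly uniform over its hyperedges, and that $\edgecount(\sgraph) \ge 1$ so the oracle answers. The latter holds because $p$ is defined and positive.
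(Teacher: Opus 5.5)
Your proof is correct and follows the paper's argument: indicators for the sampled hyperedges, unbiasedness of $x = \approxfractionofedges(\sgraph')/p - 1$, and a concentration bound that uses $p \ge 2^{-(d+3)}$. The only difference is that the paper applies the multiplicative Chernoff bound (Lemma~\ref{lemma: Multiplicative Chernoff Bound}) with $\E[X] \ge \size{\samplededges}/2^{d+3}$, which needs only a $2^{d}$ factor, whereas your Hoeffding bound with additive slack $\approxerror p$ needs $4^{d}$. Both fit the stated query budget, and you are right that the conclusion should read $(1+x)\fractionofedges(\sgraph')$.
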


\begin{proof}
    Let $ X_j$ be the indicator random variable such that,
\begin{equation*}
    X_j = 
    \begin{cases}
        1 & \text{If the $ j^{th} $ sampled hyperedge is in $ \sgraph' $} \\
        0 & \text{Otherwise }
    \end{cases}
\end{equation*}
Define the number of hyperedges sampled that belong to $ \sgraph' $ as $ X  = \sum_{ j \in [\size{ \samplededges }] } X_j$, where $ \samplededges $ is the multi-set of sampled hyperedges and the expected size of $ X $ is
\begin{equation*}
    \E\tbrac{ X } =  \sum_{ j \in [\size{S}] } \E\tbrac{ X_j } = \size{ S } \cdot \frac{ \edgecount{ (\sgraph') }}{ \edgecount{ (\sgraph) } } = \size{ S } \cdot \fractionofedges( \sgraph' ) \geq \frac{ \size{ \samplededges } }{ 2^{d + 3} }
\end{equation*}
Therefore,
\begin{align*}
    \E[x] = \E\tbrac{ \frac{\approxfractionofedges ( \sgraph' )}{\fractionofedges( \sgraph )}} - 1 = \frac{\E[X]}{\size{ S } \fractionofedges( \sgraph' )} -1 = 0
\end{align*}
By the Chernoff Bound (see Lemma~\ref{lemma: Multiplicative Chernoff Bound}), we have 
\[
    \Pr[x \in (-\approxerror, \approxerror)] = \Pr\tbrac{ \approxfractionofedges(\sgraph') \in (1\pm\approxerror)\fractionofedges(\sgraph') } = \Pr\tbrac{ X \in (1\pm\approxerror)\E[X] } \ge 1 - 2\exp\fbrac{ - \frac{\approxerror^2 \size{\samplededges}}{96} } \ge 1 - \confidence
\]
where the last step uses $\size{\samplededges}=O\left( 4^d \log (\delta^{-1})/\varepsilon^2\right)$.
\end{proof}

\begin{algorithm}
    \caption{\edgeestimationISC($ \sgraph, \approxerror, \confidence $)}\label{alg: edgeestimationISC}
    \begin{algorithmic}[1]
        \Require \ciscq{} access to the hypergraph $ \sgraph $.
        \Ensure An estimate of $ \approxedgecount $
        \For{$ j = 1 $ to $ s = O(\log \confidence^{-1}) $}\label{edgeestimationISC: line1}
        
            \State $ \sgraph_1, \sgraph_2, \cdots, \sgraph_\iterations \gets \approxsubseq(\sgraph, \frac{1}{5})$
            \State $ \approxedgecount_b \gets \birthdayestimate( \sgraph_\iterations, 8\thresholdconst, \frac{\approxerror }{ 3 }, \frac{1}{16} )$\label{edgeestimation: line 2}
            \State Initialize $ \approxfractionofedges_i \gets 1 $, for all $ i \in [\iterations]$
            \For{$ i = 1 $ to $ \iterations $}
                \State  $ \approxfractionofedges_i \gets \mulestimation( \sgraph_i, \sgraph_{i - 1}, \frac{ \approxerror }{ 200 \sqrt{\iterations} }, \frac{ 1 }{ 16 \iterations })$ \label{edgeestimation: line 5}
            \EndFor
            \State $ \approxedgecount_j \gets \approxedgecount_b / { \prod_{ i = 1 }^{ \iterations } \approxfractionofedges_i }$
        \EndFor
        \State \Return $ \mathrm{median} \fbrac{ \approxedgecount_1,  \approxedgecount_2, \ldots, \approxedgecount_s} $
    \end{algorithmic}
\end{algorithm}

We prove Theorem~\ref{theorem: s-par d-hyper} by proving Lemma~\ref{lem:final:query} and Lemma~\ref{lem:final-correct}.

\begin{lemma}\label{lem:final:query}
    The expected query complexity of \edgeestimationISC{} is $O_{\hyperlen}(\log^2 n \cdot \log \log n \cdot \log \confidence^{-1}/ \approxerror^2)$.
\end{lemma}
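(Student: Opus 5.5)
We bound the cost of one iteration of the outer loop (Line~\ref{edgeestimationISC: line1}) of \edgeestimationISC{} in terms of the random variable $\iterations$ produced by that iteration's call to \approxsubseq{}. We then take expectations using Lemma~\ref{lemma: kappa} and multiply by the $O(\log \confidence^{-1})$ outer repetitions. By linearity of expectation, the expected total cost is the sum of the expected per-iteration costs. Each iteration uses fresh randomness, so each contributes the same expectation.

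Within one iteration there are three contributions.

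First, by Lemma~\ref{lemma: subsequence}, \approxsubseq{}$(\sgraph, 1/5)$ makes $O(\log 5 + \iterations^2) = O(1+\iterations^2)$ queries.

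Second, by Lemma~\ref{lemma: birthdayestimate}, \birthdayestimate{} with threshold $8\thresholdconst$, error $\approxerror/3$ and confidence $1/16$ makes $O(\approxerror^{-1}\sqrt{8\thresholdconst}\log 16) = O(\approxerror^{-1})$ queries, since $\thresholdconst$ is a constant.

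Third, for each $i\in[\iterations]$, Lemma~\ref{lemma: mulestimation} applied with error $\approxerror/(200\sqrt{\iterations})$ and confidence $1/(16\iterations)$ gives $O\fbrac{4^\hyperlen \log(16\iterations)\cdot \iterations/\approxerror^2}$ queries. The query count in \mulestimation{} is fixed by its parameters, so it holds regardless of whether the ratio hypothesis is satisfied. Summing over the $\iterations$ ratios gives $O_\hyperlen(\iterations^2 \log(\iterations+1)/\approxerror^2)$.

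Altogether, one iteration costs $O_\hyperlen\fbrac{(1+\iterations^2\log(\iterations+2))/\approxerror^2}$, which absorbs the $O(1+\iterations^2)$ and $O(\approxerror^{-1})$ terms because $\approxerror<1$.

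Taking expectations, Lemma~\ref{lemma: kappa} gives $\E[\iterations^2\log\iterations] = O(\log^2 n\log\log n)$ and $\E[\iterations^2]=O(\log^2 n)$. Hence the expected per-iteration cost is $O_\hyperlen(\log^2 n\log\log n/\approxerror^2)$. Multiplying by the $O(\log\confidence^{-1})$ repetitions yields the claimed bound.

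The main subtlety is that $\iterations$ is random and the \mulestimation{} parameters depend on it. The per-iteration cost is therefore a deterministic function of $\iterations$, and we must check that this function, namely $\iterations^2\log\iterations$ (using $\log(\iterations+2)$ to handle small $\iterations$), is exactly the moment controlled by Lemma~\ref{lemma: kappa}. Note that $\iterations$ is determined by \approxsubseq{} before the \mulestimation{} calls. We do not condition on any success event here, since the bound is on expected query count unconditionally. The remaining care is in verifying that the tail bound $\Pr[\iterations>k]\le n2^{-k}$ from Lemma~\ref{lemma: kappa} indeed gives the $\log^2 n\log\log n$ moment, via the same split at $k=\log n$ used there.
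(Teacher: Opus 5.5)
Your proposal is correct and matches the paper's proof step for step. Both arguments bound one outer iteration by the costs of \approxsubseq{}, \birthdayestimate{} and \mulestimation{}, note that the $\iterations$ calls to \mulestimation{} dominate at $O_\hyperlen(\iterations^2\log\iterations/\approxerror^2)$, take expectations via Lemma~\ref{lemma: kappa}, and multiply by the $O(\log\confidence^{-1})$ repetitions. Your use of $\log(\iterations+2)$, and your remark that the per-iteration query count is a deterministic function of $\iterations$, simply make explicit points the paper leaves implicit (for example, the cases $\iterations\in\{0,1\}$, where $\log\iterations$ is undefined or zero).
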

\begin{proof}
    From Lemma~\ref{lemma: mulestimation}, the number of \ciscq{} queries made by \mulestimation{} in every iteration is
    \begin{align*}
        O_{\hyperlen}\fbrac{ \frac{\log(\iterations )}{ \approxerror'^2 } } \text{where } \approxerror' = \frac{ \approxerror }{ 200 \sqrt{\iterations} }
    \end{align*}
    Summing over $ \iterations $ iterations, we get
    \begin{align*}
        O_{\hyperlen}\fbrac{ \sum_{ j = 1 }^{ \iterations } \frac{ \log \iterations }{ \approxerror'^2 } } = O_{\hyperlen}\fbrac{\iterations^2 \frac{ \log \iterations }{ \approxerror^2 } }
    \end{align*}
    From Lemma~\ref{lemma: subsequence}, the number of \iscq{} queries made by \approxsubseq is $ O( \iterations^2 ) $. Finally, from Lemma~\ref{lemma: birthdayestimate}, the number of \ciscq{} queries made by \birthdayestimate{} is $O\fbrac{ { \sqrt{8\thresholdconst} }/{ \approxerror}}= O(1/\approxerror)$. Combining all contributions, the query complexity of a single execution of the outer loop is dominated by \mulestimation{}, and hence is $O_{\hyperlen}(\iterations^2 \log \iterations / \approxerror^2)$. From Lemma \ref{lemma: kappa}, 
    $$
        \E[\iterations^2 \log \iterations] = O_{\hyperlen}(\log^2 \vertexcount \cdot \log \log \vertexcount)
    $$ 
    Since the outer loop is repeated $O(\log \confidence^{-1})$ times, the expected query complexity  is
    $$
        O_{\hyperlen}(\log^2 n \cdot \log \log n \cdot \log \confidence^{-1}/ \approxerror^2)
    $$
    as claimed.
\end{proof}

\begin{lemma}\label{lem:final-correct}
    \edgeestimationISC{}, outputs a $(1 \pm \approxerror)$-approximation to $m$ with probability at least $ 1 - \confidence$.
\end{lemma}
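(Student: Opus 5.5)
The plan is to show that a single outer iteration $j$ of \edgeestimationISC{} produces a $(1\pm\approxerror)$-approximation $\approxedgecount_j$ with probability at least $2/3$, say at least $1 - 1/5 - 1/16 - 1/16 - 1/40 > 2/3$. After that, a standard median amplification over the $O(\log\confidence^{-1})$ independent iterations, using a Chernoff bound on the number of good iterations, gives overall success probability $1-\confidence$.

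Fix one iteration and define the good event as the intersection of three events:
\begin{enumerate}
\item[(a)] \approxsubseq{}$(\sgraph,1/5)$ returns an approximation preserving sequence with $\edgecount(\sgraph_\iterations)\le 8\thresholdconst$. By Lemma~\ref{lemma: subsequence} this holds with probability at least $4/5$.
\item[(b)] \birthdayestimate{} returns $\approxedgecount_b\in(1\pm\approxerror/3)\edgecount(\sgraph_\iterations)$. By Lemma~\ref{lemma: birthdayestimate} with threshold $8\thresholdconst$, this holds with probability at least $15/16$ conditioned on (a).
\item[(c)] Every \mulestimation{} call returns $\approxfractionofedges_i=(1+x_i)\fractionofedges_i$ with $|x_i|<\approxerror/(200\sqrt{\iterations})$. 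Conditioned on (a), each $\fractionofedges_i\ge 2^{-(\hyperlen+3)}$, so Lemma~\ref{lemma: mulestimation} applies. Each call fails with probability at most $1/(16\iterations)$, and a union bound over the $\iterations$ calls gives failure probability at most $1/16$.
\end{enumerate}
Since the randomness of the later calls is fresh given $\iterations$ and the sequence, these conditional bounds combine.

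On this event, the telescoping identity $\edgecount(\sgraph)=\edgecount(\sgraph_\iterations)/\prod_i\fractionofedges_i$ gives
\[
\approxedgecount_j=\edgecount(\sgraph)\cdot\frac{\approxedgecount_b}{\edgecount(\sgraph_\iterations)}\cdot\prod_{i=1}^{\iterations}(1+x_i)^{-1}.
\]
It remains to show $\prod_i(1+x_i)\in(1\pm\approxerror/2)$. The worst-case bound $|x_i|<\approxerror/(200\sqrt\iterations)$ alone only gives $\sum|x_i|\le\approxerror\sqrt\iterations/200$, which is too weak. This is the main obstacle, and it is exactly where $\E[x_i]=0$ from Lemma~\ref{lemma: mulestimation} is used.

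I would work with $\log\prod(1+x_i)=\sum_i x_i - O(\sum_i x_i^2)$. The quadratic term is at most $\iterations\cdot\approxerror^2/(200^2\iterations)=O(\approxerror^2)$, which is negligible. For the linear term, the $x_i$ are independent given the sequence, each has mean $0$ and variance $O(\approxerror^2/\iterations)$, since it is an empirical mean of Bernoullis with $4^{\hyperlen}\log(16\iterations)\cdot 200^2\iterations/\approxerror^2$ samples and $\fractionofedges_i=\Omega_\hyperlen(1)$. Hence $\Var(\sum x_i)=O(\approxerror^2/200^2)$ up to $\hyperlen$-dependent constants absorbed by the $4^\hyperlen$ factor. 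Chebyshev's inequality then gives $|\sum x_i|\le\approxerror/8$ except with probability at most $1/40$.

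A technical point is that the truncation to $|x_i|<\approxerror'$ conditions away mean zero. I would handle this by applying Chebyshev to the untruncated empirical means, which are genuinely independent with zero mean, and intersecting with event (c) afterwards. Combining the factors $(1\pm\approxerror/3)$ and $(1\pm\approxerror/2)^{-1}$ with $\approxerror<1/3$ gives $\approxedgecount_j\in(1\pm\approxerror)\edgecount(\sgraph)$, after adjusting constants if needed. The per-iteration success probability is then at least $1-1/5-1/16-1/16-1/40>2/3$, and the median argument finishes the proof.
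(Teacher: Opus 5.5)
Your proof is correct and has the same skeleton as the paper's proof. Both use events for \approxsubseq{}, \birthdayestimate{} and all \mulestimation{} calls, a concentration bound on $\sum_i x_i$ that exploits $\E[x_i]=0$, the telescoping identity, and median amplification. The genuine difference is how you control $\sum_i x_i$.

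The paper applies Hoeffding's bound to the $x_i$ \emph{conditioned on} $\mathcal{E}_1\cap\mathcal{E}_2$, using the range $|x_i|<4\approxerror'$ that this conditioning supplies. This lets it treat Lemma~\ref{lemma: mulestimation} as a black box. However, conditioning on the range event no longer guarantees zero conditional mean, and the paper does not bound the resulting shift.

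You instead apply Chebyshev to the untruncated ratios. Given the sequence, these are independent with exact mean zero and variance $(1-\fractionofedges_i)/(\fractionofedges_i|S|)=O(\approxerror^2/\iterations)$, since $\fractionofedges_i\ge 2^{-(d+3)}$. You intersect with the range event only afterwards. This costs you opening up \mulestimation{} to compute a variance, but it avoids the conditioning issue entirely.

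Similarly, your linearization $\log(1+x)\in[x-x^2,x]$, together with the deterministic bound $\sum_i x_i^2\le\approxerror^2/200^2$ on the range event, is cleaner than the paper's product inequality. That inequality is justified there only for $x_i\in(0,1)$, although the $x_i$ take both signs.

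One arithmetic slip: $1-\frac15-\frac1{16}-\frac1{16}-\frac1{40}=0.65$, which is \emph{below} $2/3$, not above it. This is harmless, for three reasons:
\begin{itemize}
\item Combining the conditional bounds multiplicatively gives $\frac45\bigl(1-\frac1{16}-\frac1{16}-\frac1{40}\bigr)=0.68>2/3$.
\item Your Chebyshev failure probability is in fact far smaller than $1/40$ for the sample sizes used in \mulestimation{}.
\item The median amplification only needs the per-iteration success probability to be bounded away from $1/2$.
\end{itemize}
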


\begin{proof} 
    In the outer loop of \edgeestimationISC{} (Line~\ref{edgeestimationISC: line1}), the algorithm is repeated $O(\log \confidence^{-1})$ times to boost its success probability to at least $1 - \confidence$ using the standard median-of-mean tricks. In what follows, we prove that a single execution of \edgeestimationISC{} outputs a $(1 \pm \approxerror)$-approximation to $m$ with probability at least $2/3$.
    
    Let $\mathcal{E}_1 = \subseqevent \cap \birthdayevent$, where $\subseqevent$ be the event that \approxsubseq{} outputs an approximation-preserving sequence $\sgraph_1, \sgraph_2, \ldots, \sgraph\iterations$ with $ \edgecount( \sgraph_\iterations ) \leq 8 \thresholdconst$. From lemma \ref{lemma: subsequence} we have $\Pr[\subseqevent] \geq 4/5$.
    
    $ \birthdayevent $ be the event that \birthdayestimate{} outputs an $(1 \pm \approxerror/3)$ approximation of $ \edgecount( \sgraph_\iterations ) $, from Lemma \ref{lemma: birthdayestimate}, we have $\Pr[\birthdayevent \mid \subseqevent] \geq 15/16$. Therefore,
    \begin{equation*}
        \Pr[\mathcal{E}_1] = \Pr[\subseqevent]\Pr[\birthdayevent \mid \subseqevent] \geq 1 - \fbrac{ \frac{1}{5} + \frac{1}{16} } \geq \frac{59}{80}
    \end{equation*}

    Let $\mathcal{E}_2$ be the event that \mulestimation$\left( \sgraph_i, \sgraph_{i - 1}, \varepsilon', { 1 }/{ 16 \iterations}\right)$ outputs  $\widehat{p_i}=(1 \pm x_i/4)p_i$ such that $\E[x_i]=0$ and $x_i \in \fbrac{ -4{\approxerror'}, 4\approxerror'}$ for all $i \in [\iterations]$, where $ \approxerror' = \frac{ \approxerror }{ 200 \sqrt{ \iterations }}$ . Therefore, using Lemma~\ref{lemma: mulestimation} and applying the union bound over all $i \in [\iterations]$, 
    \begin{equation*}
        \Pr[\mathcal{E}_2 \mid \mathcal{E}_1] = 1 - \sum_{ j = 1 }^{ \iterations } \frac{ 1 }{ 16 \iterations } \geq \frac{15}{16}.
    \end{equation*}
    
    Let $\mathcal{E}_3$ be the event that $\abs{\sum_{i = 1 }^\iterations x_i} \leq \approxerror/9$. 
    By Hoeffding's bound (Lemma~\ref{lemma: hoeffding}),
    \begin{align}
        \Pr\!\left[
        \mathcal{E}_3 \mid \mathcal{E}_1 \cap \mathcal{E}_2
        \right]
        &=
        1-\Pr\!\left[
        \left|
        \sum_{i=1}^{\iterations} x_i
        \right|
        >
        \frac{\approxerror}{9}
        \,\middle|\,
        \mathcal{E}_1 \cap \mathcal{E}_2
        \right]
        \nonumber \\
        &\ge 1-
        2\exp\!\left(
        -
        \frac{
        2\approxerror^2
        }{
        81 \sum_{i=1}^{\iterations}
        \left(
        \frac{2\approxerror}{50\sqrt{\iterations}}
        \right)^2
        }
        \right) \nonumber \\
        &\ge 1-\frac{1}{e^{12}}. \label{ineq: hoeff}
    \end{align}
    
    Hence,
     \begin{align*}
        \Pr[ \mathcal{E}_1 \cap \mathcal{E}_2 \cap \mathcal{E}_3] &= \Pr[\mathcal{E}_1] \cdot \Pr[\mathcal{E}_2 | \mathcal{E}_1] \cdot \Pr[\mathcal{E}_3 \mid \mathcal{E}_1 \cap \mathcal{E}_2] \geq 1 - \fbrac{ \frac{21}{80} + \frac{ 1 }{ 16 } + \frac{ 1 }{ e^{12} } } \geq \frac{2}{3}
    \end{align*}
    In what follows, we assume that all of the events $\mathcal{E}_1$, $\mathcal{E}_2$ and $\mathcal{E}_3$ occur. Since, $\Pr[ \mathcal{E}_1 \cap \mathcal{E}_2 \cap \mathcal{E}_3] \geq 2/3$,  it is enough to argue that \edgeestimationISC{} returns an $(1 \pm \approxerror)$-approximation to $m$ when all $\mathcal{E}_1$, $\mathcal{E}_2$ and $\mathcal{E}_3$ holds.

    Combining the output of \mulestimation{} over the $ \iterations $ iterations, we get,
    \begin{align*} 
          \prod_{i = 1}^{ \iterations } \approxfractionofedges_i = \prod_{i = 1 }^{ \iterations } \fractionofedges_i \cdot \prod_{i = 1}^{ \iterations } \fbrac{ 1 + \frac{x_i}{4} }
    \end{align*}
     Using the fact that for each $x_i \in (0,1)$, it can be shown that 
    \begin{align*}
        1 - \frac{ \sum_{i = 1}^{ \iterations } x_i }{ 4 }  \leq \prod_{i = 1}^{ \iterations } \fbrac{ 1 + \frac{x_i}{4} } \leq 1 + \sum_{i = 1}^{ \iterations } x_i
    \end{align*}

    \noindent Since event $\mathcal{E}_3$ occurs,
    \begin{align*}
        \prod_{i = 1 }^{ \iterations } \fractionofedges_i \cdot \fbrac{1 - \frac{\approxerror}{36}} \leq \prod_{i = 1}^{ \iterations } \approxfractionofedges_i \leq \prod_{i = 1 }^{ \iterations } \fractionofedges_i \cdot \fbrac{1 + \frac{\approxerror}{9}}
    \end{align*}
    Using the fact that for any $\approxerror \in (0, 1/3)$, we have $1 - 3\approxerror \leq (1 + \approxerror)^{-1}$ and $ (1 - \approxerror)^{-1} \leq 1 + 3\approxerror $. We get, 
    \begin{align}\label{ineq: xyz}
        \fbrac{1 - \frac{\approxerror}{3}} \cdot \frac{1}{\prod_{i = 1 }^{ \iterations } \fractionofedges_i}
         \leq \frac{1}{\prod_{i = 1}^{ \iterations } \approxfractionofedges_i} \leq \frac{1}{\prod_{i = 1 }^{ \iterations } \fractionofedges_i} \cdot \fbrac{1 + \frac{\approxerror}{12}}
    \end{align}

    Since $\mathcal{E}_1$ occurs, i.e., $\birthdayevent$ occurs, $ \approxedgecount_b$  satisfies the following:
    \begin{align} \label{ineq: birthdayest}
         \fbrac{ 1 - \frac{\approxerror}{3} } \cdot \edgecount( \sgraph_\iterations )  \leq \approxedgecount_b \leq \fbrac{ 1 + \frac{\approxerror}{3} } \cdot \edgecount( \sgraph_\iterations )
    \end{align}
    Combining~\eqref{ineq: xyz} and \eqref{ineq: birthdayest}, we get,
    \begin{align*}
        \fbrac{ 1 - \frac{\approxerror}{3} }^2 \cdot \frac{ \edgecount( \sgraph_\iterations ) }{ \prod_{i = 1 }^{ \iterations } \fractionofedges_i} 
        \leq 
        \widehat{m}=\frac{ \approxedgecount_b }{ \prod_{i = 1}^{ \iterations } \approxfractionofedges_i} 
        \leq
        \fbrac{ 1 + \frac{\approxerror}{12}}\fbrac{ 1 + \frac{\approxerror}{3}} \cdot \frac{ \edgecount( \sgraph_\iterations ) }{ \prod_{i = 1}^{ \iterations } \fractionofedges_i}
    \end{align*} 
    
    Since, $ \fractionofedges_i = \frac{ \edgecount( \sgraph_i) }{ \edgecount( \sgraph_{i - 1} )}$ for each $i \in [\kappa]$ and $ \sgraph_0 = G $, $ \frac{ \edgecount( \sgraph_\iterations ) }{ \prod_{i = 1 }^{ \iterations } \fractionofedges_i} = \edgecount( \sgraph_0 ) = \edgecount $. So, simplifying the above expression, we have 
    \begin{align*}
        \fbrac{ 1 - \approxerror } \edgecount \leq \approxedgecount \leq \fbrac{ 1 + \approxerror} \edgecount 
    \end{align*}
\end{proof}

We are now ready to prove \Cref{theorem: s-par d-hyper}.

\begin{proof}[Proof of \Cref{theorem: s-par d-hyper}]
The correctness of the algorithm follows from Lemma~\ref{lem:final-correct}, and its query complexity follows from Lemma~\ref{lem:final:query}.
\end{proof}

\section{\texorpdfstring{Hyperedge Estimation of $d$-hypergraph}{Hyperedge Estimation of d-hypergraph}}\label{section: full algorithm}
In this section, we present the full algorithm for estimating the number of hyperedges of $\hyperlen$-hypergraph using \ciscq{} queries and prove our main upper bound stated below.

\begingroup
\renewcommand{\thetheorem}{1}
\begin{theorem}\label{theorem: main result}
    For every integer $1\le s\le d$ and $\approxerror \in (0, 1/3)$, there exists a randomized algorithm that, given access to the $s$-colorful random certificate oracle $\ciscq$, outputs a $(1\pm\varepsilon)$-approximation of the number of hyperedges with probability at least $2/3$ using $O_{s,d} \fbrac{\varepsilon^{-4}\log^2 n\log\log n\log^2(1/\varepsilon)}$ $s$-colorful random certificate queries.
\end{theorem}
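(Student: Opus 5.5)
The plan is to reduce to the $\qsetno$-partite case and invoke Theorem~\ref{theorem: s-par d-hyper} on an ensemble of independent random colorings. Let $\pi_{s,d} := \onto{\qsetno}{\hyperlen}/\qsetno^{\hyperlen}$ be the probability that a fixed hyperedge, whose $\hyperlen$ vertices are colored independently and uniformly with $\qsetno$ colors, receives all $\qsetno$ colors; this is a positive constant depending only on $s,d$. The algorithm repeats the following $t$ times independently. It partitions $\vertexset$ uniformly at random into $V_1^{(k)},\ldots,V_\qsetno^{(k)}$ and forms $\sgraph^{(k)}=\graph\tbrac{V_1^{(k)},\ldots,V_\qsetno^{(k)}}$. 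A hyperedge lies in $\sgraph^{(k)}$ exactly when it receives all colors. Every $\ciscq$ query to $\sgraph^{(k)}$ on subsets $X_i\subseteq V_i^{(k)}$ is a legal $\ciscq$ query to $\graph$ with the same answer distribution, so the partite estimator is directly simulable. The algorithm then runs \edgeestimationISC{} on each $\sgraph^{(k)}$ with error $\approxerror/3$ and confidence $\confidence'=1/(10t)$, obtaining $\widehat{m}_k$, and outputs $\widehat{m}=\frac{1}{t\,\pi_{s,d}}\sum_k \widehat{m}_k$.

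For correctness, we have $\E[\edgecount(\sgraph^{(k)})]=\pi_{s,d}\,\edgecount$, and $0\le \edgecount(\sgraph^{(k)})\le \edgecount$. Since the colorings are independent, Hoeffding's bound (Lemma~\ref{lemma: hoeffding}) gives
\[
\Pr\Bigl[\Bigl|\tfrac1t\textstyle\sum_k \edgecount(\sgraph^{(k)})-\pi_{s,d}\edgecount\Bigr|\ge \tfrac{\approxerror}{3}\pi_{s,d}\edgecount\Bigr]\le 2\exp\bigl(-2t\approxerror^2\pi_{s,d}^2/9\bigr).
\]
Choosing $t=O(\pi_{s,d}^{-2}\approxerror^{-2})=O_{s,d}(\approxerror^{-2})$ makes this at most $1/10$. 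Note that no within-coloring independence is needed, which is the point of the ensemble. A union bound over the $t$ calls of Theorem~\ref{theorem: s-par d-hyper} shows that, with probability at least $9/10$, every $\widehat{m}_k\in(1\pm\approxerror/3)\edgecount(\sgraph^{(k)})$. If $\edgecount(\sgraph^{(k)})=0$, the estimator must output $0$; this follows from \iscollision{} immediately terminating, with \birthdayestimate{} returning $0$, or one can check emptiness with a single query and output $0$. Combining the two events, $\widehat{m}\in(1\pm\approxerror/3)^2\edgecount\subseteq(1\pm\approxerror)\edgecount$ for $\approxerror<1/3$, with probability at least $4/5\ge 2/3$.

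For query complexity, each call costs, in expectation, $O_d(\approxerror^{-2}\log^2\vertexcount\log\log\vertexcount\log(1/\confidence'))$ by Lemma~\ref{lem:final:query}, with $\log(1/\confidence')=O_{s,d}(\log(1/\approxerror))$. Multiplying by $t=O_{s,d}(\approxerror^{-2})$ and using linearity of expectation gives $O_{s,d}(\approxerror^{-4}\log^2\vertexcount\log\log\vertexcount\log(1/\approxerror))$. This is within the claimed $\log^2(1/\approxerror)$ bound, and the slack also absorbs the cost of running the median trick inside each call if one prefers to boost each instance separately. Because the bound holds in expectation, a Markov truncation at a constant multiple of the expectation converts it to a worst-case bound at an additive cost of a constant in the failure probability, should that be desired.

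The main obstacle is the bookkeeping that makes the partite estimator's guarantee apply uniformly. First, the degenerate case $\edgecount(\sgraph^{(k)})=0$ or small must be handled, since \approxsubseq{} assumes $\edgecount\ge\thresholdconst$; there I would rely on Lemma~\ref{lemma: collision detection} and Lemma~\ref{lemma: birthdayestimate} covering all $\edgecount\le 8\thresholdconst$. Second, the vertex count $\vertexcount$ of each instance is at most $\vertexcount$, so the bound of Lemma~\ref{lemma: kappa} transfers. Everything else is a direct Hoeffding plus union-bound argument.
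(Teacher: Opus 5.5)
Your proposal is correct and takes the same route as the paper. Like the paper, you use \ensemblecolor{} with $t=O_{s,d}(\approxerror^{-2})$ independent random $\qsetno$-colorings and Hoeffding across the ensemble, then run \edgeestimationISC{} on each instance with error $\approxerror/3$ and confidence $\Theta(1/t)$, take a union bound, and multiply the per-instance expected cost by $t$. Your accounting gives only a $\log(1/\approxerror)$ factor, which shows the paper's $\log^2(1/\approxerror)$ is loose, and your remarks on simulating partite queries by \ciscq{} queries to $\graph$ and on the empty-instance case fill in details the paper leaves implicit.
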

\endgroup

\edgeestimation{} (Algorithm~\ref{alg: edgeesimation}) begins by calling \ensemblecolor{}  which outputs multiple $\qsetno$-partite $\hyperlen$-hypergraphs by uniformly partition the vertex set into $\qsetno$ classes. The procedure for \ensemblecolor{} is described in Algorithm~\ref{alg: ensemblecolor} and its guarantee is established in Lemma~\ref{lemma: ensemblecolor}.

\begin{algorithm}
    \caption{\ensemblecolor($ \graph, \approxerror, \confidence $)}\label{alg: ensemblecolor}
    \begin{algorithmic}[1]
        \Require \ciscq{} query access to $G$, error parameter $\approxerror \in (0, 1) $ and a confidence parameter $\delta \in (0,1)$.
        \For{$i = 1$ to $t = O\fbrac{f^2(\qsetno, \hyperlen)\approxerror^{-2} \log \fbrac{\confidence^{-1}}}$}
            \State Uniformly and independently partition $ V $ into $ V_1, V_2, \ldots, V_\qsetno $ and let $\graph_i = \graph\tbrac{V_1, V_2, \ldots, V_\qsetno} $.\label{ensemblecolor: line 1}
        \EndFor
        \State \Return $\graph_1, \graph_2, \ldots, \graph_t$
    \end{algorithmic}
\end{algorithm}

\begin{lemma}\label{lemma: ensemblecolor}
    Let $\onto{ \qsetno }{\hyperlen} = s^d / \sum_{i = 0}^\qsetno (-1)^i \binom{\qsetno}{i}(\qsetno - i)^\hyperlen$, then \ensemblecolor{} outputs a set of $\qsetno$-partite $\hyperlen$-hypergraph $\graph_1, \graph_2, \ldots, \graph_t$ where $ t = \widetilde{O} \fbrac{ \frac{ f^2(\qsetno, \hyperlen) }{ \approxerror^2 }} $, such that we have,
    \begin{equation*}
        \Pr \tbrac{ \frac{\onto{\qsetno }{\hyperlen}}{t} \cdot \sum_{i = 1}^t \edgecount(\graph_i) \in \fbrac{1 \pm \approxerror} \edgecount  } \geq 1 - \confidence
    \end{equation*}
\end{lemma}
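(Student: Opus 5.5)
We mirror the proof of Lemma~\ref{lemma: ensemblegraphpartite}, replacing the vertex-halving survival probability $2^{-\hyperlen}$ by the survival probability of a hyperedge under a uniform $\qsetno$-coloring. Note that $\onto{\qsetno}{\hyperlen}$, as defined in the statement, is the reciprocal of this survival probability.

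\emph{Step 1 (expectation).} Fix $i\in[t]$ and a hyperedge $\edge\in\edgeset$. In Line~\ref{ensemblecolor: line 1}, each vertex of $V$ is placed independently and uniformly into one of $V_1,\ldots,V_\qsetno$. The hyperedge $\edge$ lies in $\graph_i=\graph\tbrac{V_1,\ldots,V_\qsetno}$ exactly when its $\hyperlen$ vertices use all $\qsetno$ classes. This is because $\edge$ is automatically contained in $V_1\cup\cdots\cup V_\qsetno=V$, so the only condition is that $\edge$ meets every $V_j$. The number of surjections $[\hyperlen]\to[\qsetno]$ is $\sum_{j=0}^{\qsetno}(-1)^j\binom{\qsetno}{j}(\qsetno-j)^\hyperlen$ by inclusion--exclusion. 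Hence $\Pr[\edge\in\edgeset(\graph_i)]=1/\onto{\qsetno}{\hyperlen}$. By linearity of expectation, $\E[\edgecount(\graph_i)]=\edgecount/\onto{\qsetno}{\hyperlen}$, and therefore $\E\bigl[\frac{\onto{\qsetno}{\hyperlen}}{t}\sum_{i}\edgecount(\graph_i)\bigr]=\edgecount$. We do not need any independence between different hyperedges inside one partition; only linearity is used.

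\emph{Step 2 (concentration across the ensemble).} The $t$ partitions are drawn independently, so $\edgecount(\graph_1),\ldots,\edgecount(\graph_t)$ are independent random variables, each taking values in $[0,\edgecount]$. Apply Hoeffding's bound (Lemma~\ref{lemma: hoeffding}) to $\sum_i\edgecount(\graph_i)$ with deviation $\approxerror t\,\edgecount/\onto{\qsetno}{\hyperlen}$. This gives
\[
\Pr\tbrac{\abs{\sum_{i=1}^t\edgecount(\graph_i)-\frac{t\,\edgecount}{\onto{\qsetno}{\hyperlen}}}\ge\frac{\approxerror t\,\edgecount}{\onto{\qsetno}{\hyperlen}}}\le 2\exp\fbrac{-\frac{2\approxerror^2 t}{f^2(\qsetno,\hyperlen)}}.
\]
Choosing $t=\lceil f^2(\qsetno,\hyperlen)\approxerror^{-2}\ln(2/\confidence)/2\rceil$, which is the stated $O(f^2(\qsetno,\hyperlen)\approxerror^{-2}\log\confidence^{-1})$, makes the right-hand side at most $\confidence$. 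Multiplying the event inside by $\onto{\qsetno}{\hyperlen}/t$ yields the claimed $(1\pm\approxerror)$ guarantee.

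\emph{Edge case and main difficulty.} If $\edgecount=0$, every $\edgecount(\graph_i)=0$ and the claim holds trivially. The only delicate point is the normalization. The paper's $\onto{\qsetno}{\hyperlen}$ here is $\qsetno^\hyperlen$ divided by the surjection count, unlike the technical overview where $f$ denotes the surjection count itself. With this definition the Hoeffding range ratio is exactly $\onto{\qsetno}{\hyperlen}$, which is why $f^2$ appears in $t$. Otherwise the proof is routine; no within-partition correlation analysis is required.
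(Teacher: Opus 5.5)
Your proposal is correct and follows the paper's proof essentially step for step. Both use linearity of expectation with per-hyperedge survival probability $1/f(s,d)$, then Hoeffding's bound across the $t$ independent partitions with range $[0,m]$. Your exponent $-2\varepsilon^2 t/f(s,d)^2$ also carries the minus sign that the paper's displayed bound omits.
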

\begin{proof}
    Fix an $\qsetno$-hypergraph $G_i$. For every hyperedge $\edge \in \edgeset(\graph) $, define a indicator random variable
    \begin{equation*}
    I_e = 
    \begin{cases}
        1 & \text{If $ e \in \graph_i $} \\
        0 & \text{Otherwise }
    \end{cases}
    \end{equation*}
    
    The expected number of hyperedges in $G_i$ is 
    \begin{equation*}
        \E{ [\edgecount(G_i)] } = \sum_{ e \in E(G) } \Pr[I_e = 1] = \frac{ \edgecount }{ \onto{\qsetno }{\hyperlen} }
    \end{equation*}
    
    Therefore, 
    \begin{equation*}
        \E{  \tbrac{\sum_{i = 1}^t \edgecount(G_i)} } = \frac{ \edgecount \cdot t }{ \onto{\qsetno }{\hyperlen} }
    \end{equation*}

    Now, applying Hoeffding's Bound~\ref{lemma: hoeffding}, with $t = \widetilde{O} \fbrac{ \frac{ f^2(\qsetno, \hyperlen) }{ \approxerror^2 }}$ and $ 0 \leq \edgecount(\graph_i) \leq \edgecount $, for all $ i \in [t]$ , we get
    \begin{align*}
        \Pr \tbrac{ \abs{ \frac{\onto{\qsetno }{\hyperlen}}{t} \cdot \sum_{i = 1}^t \edgecount(\graph_i) - \edgecount } \geq \approxerror \edgecount } &= \Pr \tbrac{ \abs{  \sum_{i = 1}^t \edgecount(\graph_i) - \frac{t}{\onto{\qsetno }{\hyperlen}} \cdot \edgecount } \geq  \frac{\approxerror t}{\onto{\qsetno }{\hyperlen}} \cdot \edgecount }\\
        &\leq 2 \cdot \exp\fbrac{ \frac{ 2 t \approxerror^2 }{ f^2(\qsetno, \hyperlen) } } \leq \delta 
    \end{align*}
\end{proof}

\begin{algorithm}
    \caption{\edgeestimation($ \graph, \approxerror $)}\label{alg: edgeesimation}
    \begin{algorithmic}[1]
        \Require \ciscq{} access to the hypergraph $ \graph $.
        \Ensure An estimate of $ \approxedgecount $
        
        \State $ G_1, G_2, \cdots, G_t \gets \ensemblecolor \left( \graph, \frac{\approxerror}{3}, \frac{1}{6} \right)$\label{edgeestimation: line 1}

        \For{$ j = 1 $ to $ t $}
            \State  $ \approxedgecount(G_j) \gets \edgeestimationISC\fbrac{\graph_j, \frac{\approxerror}{3}, \frac{1}{6t} }$
        \EndFor
        
        \State \Return $ \approxedgecount \gets \frac{\onto{\qsetno }{\hyperlen}}{t} \cdot \sum_{i = 1}^t \approxedgecount(\graph_i)$
        
    \end{algorithmic}
\end{algorithm}

Now, we prove the main upper bound. 

\begin{proof}[Proof of Theorem~\ref{theorem: main result}]
    \noindent\textbf{Correctness}: Let $\mathcal{E}_1 $ be the event that \ensemblecolor{} outputs a set of $\qsetno$-partite $\hyperlen$-hypergraph $ G_1, G_2, \cdots, G_t$ such that 
    \begin{equation*}
        \frac{\onto{\qsetno }{\hyperlen}}{t} \cdot \sum_{i = 1}^t \edgecount(\graph_i) \in \fbrac{1 \pm \frac{\approxerror}{3}} \edgecount 
    \end{equation*}
    Then, from Lemma~\ref{lemma: ensemblecolor}, $\Pr \tbrac{ \mathcal{E}_1 } \geq 5/6$.

    Let, for any $i \in [t]$, $\mathcal{E}_{2i}$ be the event that \edgeestimationISC{} outputs an $(1 \pm \approxerror/3)$-approximation of $\edgecount(G_i)$. By Lemma~\ref{lem:final-correct}, $\Pr \tbrac{ \mathcal{E}_{2i} \mid  \mathcal{E}_1 } \ge 1 - 1/6t$. Let $\mathcal{E}_2$ be the event that $\mathcal{E}_{2i}$ occurs for all $i \in [t]$, that is 
    \begin{equation*}
        \mathcal{E}_2 = \bigcap_{ i = 1 }^t \mathcal{E}_{2i}
    \end{equation*}
    Using the union bound, we can write $\Pr \tbrac{ \mathcal{E}_{2} \mid  \mathcal{E}_1 } \ge 5/6$ which implies
    \begin{equation*}
        \Pr \tbrac{ \left. \sum_{i = 1}^t \approxedgecount(G_i) \in \fbrac{1 \pm \frac{\approxerror}{3}} \sum_{i = 1}^t \edgecount(G_i)  \,\middle|\, \mathcal{E}_1 \right. } \geq \frac{5}{6}
    \end{equation*}
    Therefore, $\Pr \tbrac{ \mathcal{E}_1 \cap \mathcal{E}_2} = \Pr[ \mathcal{E}_1] \Pr[ \mathcal{E}_2 \mid \mathcal{E}_1] \geq 2/3$.

    In what follows, we assume that events $\mathcal{E}_1$ and $\mathcal{E}_2$ have occurred. we show that \edgeestimation{} outputs $(1 \pm \approxerror)$-approximation of $\edgecount(G)$ when both $\event_1$ and $\event_2$ holds. we have,
    \begin{equation*}\label{eq: 1}
        \sum_{i = 1}^t \approxedgecount(G_i) \in \fbrac{1 \pm \frac{\approxerror}{3}} \sum_{i = 1}^t \edgecount(G_i) \text{  and  } \frac{\onto{\qsetno }{\hyperlen}}{t} \cdot \sum_{i = 1}^t \edgecount(\graph_i) \in \fbrac{1 \pm  \frac{\approxerror}{3}} \edgecount 
    \end{equation*}
    Combining the above, we get
    \begin{align*}
        \fbrac{1 - \frac{\approxerror}{3}}^2 \edgecount(G) \leq 
        \frac{\onto{\qsetno }{\hyperlen}}{t} \sum_{i = 1}^t \approxedgecount(G_i)
        \leq \fbrac{1 + \frac{\approxerror}{3}}^2
        \edgecount(G)
    \end{align*}
    Using the fact that $1 - \approxerror \leq (1 - \approxerror)^2$, $ (1 + \approxerror)^2 \leq 1 + \approxerror$, for any $\approxerror \in (0, 1/3)$, we get 
    \begin{align*}
        \fbrac{1 - \approxerror} \edgecount(G) \leq 
        \frac{\onto{\qsetno }{\hyperlen}}{t} \sum_{i = 1}^t \approxedgecount(G_i)
        \leq \fbrac{1 + \approxerror}
        \edgecount(G)
    \end{align*}

    \noindent\textbf{Query Complexity:} \edgeestimationISC{} is executed $t = O(f^2(\qsetno, \hyperlen)/\approxerror^2)$ times. So, from Lemma~\ref{lem:final:query} the total query expected complexity is 
    \begin{equation*}
        O_{\hyperlen}\fbrac{ f^2(\qsetno, \hyperlen) \log^2 \vertexcount \cdot \log \log n \cdot \log^2(1/\approxerror)/\approxerror^4} = C_{\qsetno, \hyperlen} \cdot \log^2 \vertexcount \cdot \log \log n \cdot \log^2(1/\approxerror)/\approxerror^4
    \end{equation*}
    where $C_{\qsetno, \hyperlen}$ is function in terms of $\qsetno$ and $\hyperlen$.
\end{proof}




\section{Lower Bounds}\label{Section: Lower Bounds}

In this section, we establish the main lower bound of our work. We show that for every ($1 \le s < d$), any randomized algorithm with access only to the $s$-colorful decision oracle (\cisq{}) must make a polynomial number of queries to estimate the number of hyperedges in a $d$-uniform hypergraph. 

We prove the following theorem in this section which is stronger that \Cref{thm:lower-intro} in the sense that this is parametrized by the number of edges.
\begin{theorem}[Main lower bound]
\label{thm:main-lower}
Let $d\ge 2$, $1\le s<d$, and let $m,n\in\mathbb{N}$ satisfy
$1\le m\le \binom{n}{d}$.
Any randomized algorithm that has access to an $s$-colorful $\cisq$ oracle must make
\[
\widetilde{\Omega}_{d,s}\!\left(
\min\left\{
m^{\frac{1}{d-1}\cdot\frac{d-s}{d-s+1}},
\left(\frac{n^d}{m}\right)^{\frac{d-s}{d-s+1}}
\right\}
\right)
\]
queries to distinguish whether a rank-$d$ hypergraph has at most $m$ hyperedges or at least $2m$ hyperedges, with probability at least $2/3$.
\end{theorem}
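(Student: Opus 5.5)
We follow the two-regime strategy of the technical overview, with $\alpha:=\frac{d-s}{d-s+1}$ and $\tau:=\Theta\big((n^d/m)^{1/(d-s+1)}\,\mathrm{polylog}\,n\big)$. The dense regime $m\ge c\,n^{d-1}/(2d)^{d-1}$ gives the bound $\widetilde\Omega((n^d/m)^\alpha)$. The sparse regime is then reduced to the dense one, and the minimum in the statement follows by taking the smaller of the two bounds.

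\textbf{Dense regime.} We use Yao's principle with the distributions $\dyes$ and $\dno$ of the overview. In both, $V$ is split uniformly at random into $A_1,\ldots,A_d$, and each colorful $d$-tuple is a hyperedge independently with probability $p_0=\Theta(d^dm/n^d)$. In $\dno$ we additionally select $B\subseteq A_1$, putting each vertex in $B$ with probability $\Theta((2d)^dm\log n/n^d)$, and we add all hyperedges through $B$.

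First, Chernoff bounds show that a $\dyes$ graph has at most $m$ hyperedges, while a $\dno$ graph has $|B|=\Theta(m\log n/n^{d-1})=\Omega(\log n)$ and hence at least $2m$ hyperedges. This concentration is exactly where the density assumption is used.

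Next, we replace the $\cisq$ oracle by the augmented oracle that maintains the knowledge triplet $\oracle$. On a query $(Q_1,\ldots,Q_s)$ it behaves as follows.
\begin{itemize}
\item If some subfamily of the $Q_i$ admits no colorful tuple, it reveals a minimal such subfamily.
\item If all $|Q_i|\le\tau$, it reveals all queried vertices.
\item Otherwise, it reveals random subsets $R_i\subseteq Q_i$ of size $\tau$, and it fails, revealing the whole graph, if $(R_1,\ldots,R_s)$ spans no hyperedge.
\end{itemize}
We check that the $\cisq$ answer is always determined by the updated triplet. Hence a lower bound for augmented-oracle algorithms implies the theorem.

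Finally, we view the algorithm as a depth-$q$ decision tree and call a node good if no vertex of $B$ has been revealed and the oracle has not failed. The key invariant is that, conditioned on being at a good node, the unexplored vertices remain independently colored and independently in $B$, and every colorful tuple containing an unexplored vertex is still a hyperedge with probability $p_0$. This holds because the revealed information concerns only explored vertices. Using the invariant, we bound the probability of passing from a good node to a bad node in one step in each of the three cases:
\begin{itemize}
\item Small queries touch $B$ with probability $O(s\tau\cdot m\log n/n^d)=\widetilde O((m/n^d)^\alpha)$.
\item A set larger than $\tau$ meets every color class with high probability, so by a Hall's-theorem argument it cannot lie in a minimal subfamily without a colorful tuple.
\item The random $R_i$ span $\Omega(\log^{d-s+1}n)$ candidate colorful tuples in expectation. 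The $s$ parts of a hyperedge can be distributed among the $d$ colors in such a way that some $Q_i$ hosts $d-s+1$ of them, which is what dictates the exponent in $\tau$. Chernoff then gives failure probability $n^{-\omega(1)}$.
\end{itemize}
A union bound over $q=o((n^d/m)^\alpha/\mathrm{polylog})$ steps shows that a bad node is reached with probability $o(1)$. Coupling $\dno$ as $\dyes$ plus $B$, the transcripts agree at good nodes, so the distinguishing advantage is $o(1)$.

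\textbf{Sparse regime.} We set $n_2=((2d)^{d-1}m/c)^{1/(d-1)}<n$ and pad a dense instance on $n_2$ vertices with $n-n_2$ isolated vertices. Each $\cisq$ query on the padded hypergraph is simulated by one query intersected with $V(H_2)$. Then $(n_2^d/m)^\alpha=\Theta_d(m^{\alpha/(d-1)})$, which yields the first term of the minimum.

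\textbf{Main obstacle.} The main obstacle is the adaptive large-query case. We must show that the revealed $R_i$ span a hyperedge with high probability even though the algorithm chose the $Q_i$ adaptively, possibly exploiting known colors. The hard part is proving that, for any choice of $Q_i$ passing the Hall/colorful-tuple check, the number of colorful tuples inside $R_1\cup\cdots\cup R_s$ that meet every $R_i$ and involve fresh randomness is large enough. Our plan is to lower bound this count by fixing one witness tuple in the small sets and using $\tau$ fresh vertices of the large set, whose colors are still random, to supply the remaining $d-s+1$ coordinates.
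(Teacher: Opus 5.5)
Your proposal takes the same route as the paper's proof essentially step for step: the same pair of distributions with a hidden high-degree set $B\subseteq A_1$ and the same augmented oracle with threshold $\tau$. It uses the same three-case good-to-bad analysis (small queries rarely touch $B$; Hall's theorem keeps sets larger than $\tau$ out of minimal non-colorful subfamilies; a fixed colorful $(s-1)$-tuple combined with $\Omega((\tau/d)^{d-s+1})$ completions inside the largest $R_{i^*}$ makes failure unlikely), the same coupling of the two distributions, and the same padding reduction for the sparse regime.

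The paper disposes of the obstacle you flag by assuming without loss of generality that $Q_i\subseteq[n]\setminus K$. With your oracle as written, which samples $R_i$ uniformly from all of $Q_i$, this does not come for free. For example, when $s=1$, a query of more than $\tau$ explored vertices already known to span no hyperedge, say all of color $A_1$, makes the oracle fail with certainty. So you should compare $|Q_i\setminus K|$ rather than $|Q_i|$ with $\tau$, and draw $R_i$ from $Q_i\setminus K$; your count of fresh completions then goes through as planned.
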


\Cref{thm:main-lower} immediately implies \Cref{thm:lower-intro}. Indeed, setting $m=\Theta(n^{d-1})$ in \Cref{thm:main-lower} yields a lower bound of
\[
\widetilde{\Omega}_{d,s}\!\left(
\left(\frac{n^d}{m}\right)^{\frac{d-s}{d-s+1}}
\right)
=
\widetilde{\Omega}_{d,s}\!\left(
n^{\frac{d-s}{d-s+1}}
\right),
\]

We remark that the lower bound in \Cref{thm:main-lower} continues to hold even if the algorithm is given access to a stronger $s$-colorful decision oracle ($\cisq{}$), in which the $s$ queried sets are allowed to overlap. For simplicity of notation, throughout this section we continue to use $\cisq{}$ to denote this stronger oracle.

We prove \Cref{thm:main-lower} by proving \Cref{lem:dense} and~\Cref{lem:sparse}, stated below. Note that Lemmas~\ref{lem:dense} and~\ref{lem:sparse} establish the desired lower bound when the number of hyperedges is large ($m\ge c  n^{d-1} / (2d)^{d-1}$) and small ($m<cn^{d-1}/(2d)^{d-1}$), respectively. Here $c$ is a suitable  constant.

\begin{lemma}[Lower bound for dense hypergraphs]
\label{lem:dense}
Let $m\ge c n^{d-1}/d^{d-1}$. Any randomized algorithm that has access to an $s$-colorful $\cisq$ oracle must make
\[
{C}_{s,d}\cdot 
\left(\frac{n^d}{m}\right)^{\frac{d-s}{d-s+1}}
\cdot \frac{1}{\log^6 n}
\]
queries to distinguish whether a rank-$d$ hypergraph has at most $m$ hyperedges or at least $2m$ hyperedges, with probability at least $2/3$. Here $C_{s,d}$ is a constant that depends only on $s$ and $d$.
\end{lemma}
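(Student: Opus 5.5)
We follow the blueprint of the technical overview and apply Yao's minimax principle to two distributions $\dyes$ and $\dno$ on $d$-uniform hypergraphs with vertex set $V$, $|V|=n$. In both, $V$ is split uniformly at random into $A_1,\ldots,A_d$. Every colorful tuple in $A_1\times\cdots\times A_d$ becomes a hyperedge independently with probability $p:=c_1 d^d m/n^d$. Under $\dno$ we also place each vertex of $A_1$ in a set $B$ independently with probability $p_B:=c_2(2d)^d m\log n/n^d$, and we add all hyperedges $\langle v,a_2,\ldots,a_d\rangle$ with $v\in B$. We choose the constants so that, by Chernoff bounds, $\dyes$ gives at most $m$ hyperedges and $\dno$ gives at least $2m$ hyperedges w.h.p. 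The $\dno$ bound uses $|B|=\Theta(m\log n/n^{d-1})=\Omega(\log n)$, and this is exactly where we need $m\ge cn^{d-1}/d^{d-1}$. We couple the two distributions by sampling $G\sim\dyes$ and then $B$. An algorithm whose query sets never meet $B$ sees identical answers in both worlds, so it suffices to show that a deterministic algorithm making $q:=C_{s,d}(n^d/m)^{\alpha}/\log^6 n$ queries, where $\alpha=\frac{d-s}{d-s+1}$, reveals a vertex of $B$ with probability $o(1)$.

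To handle adaptivity, we replace $\cisq$ by the augmented oracle that maintains a knowledge triple $(\ocvset,\oclbl,\oceset)$. We fix the threshold $\lbthres:=\Theta\big((n^d/m)^{1/(d-s+1)}\log n\big)$. On a query $(Q_1,\ldots,Q_s)$ the oracle does the following.
\begin{enumerate}
\item[(a)] If some subfamily of the $Q_i$ admits no colorful transversal, it reveals a minimal such subfamily.
\item[(b)] If all $|Q_i|\le\lbthres$, it reveals all queried vertices together with their labels and the hyperedges among them.
\item[(c)] Otherwise it reveals uniformly random subsets $R_i\subseteq Q_i$ of size $\min(|Q_i|,\lbthres)$. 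It fails, revealing the whole graph, if $(R_1,\ldots,R_s)$ spans no hyperedge.
\end{enumerate}
We first check that the $\cisq$ answer is a function of the updated triple. Then we view the algorithm as a decision tree of depth $q$ and call a node good if no vertex of $B$ has been revealed and no failure has occurred. The key invariant is that, conditioned on reaching a good node, the unexplored vertices still carry independent labels (uniform color, and membership in $B$ with probability $p_B$ given color $1$), and every colorful tuple containing an unexplored vertex is still a hyperedge independently with probability $p$. This holds because everything revealed is a deterministic function of the labels and edges inside $\ocvset$.

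Given the invariant, we bound the probability that a single step leaves the good set by $o(1/q)$ and then take a union bound. In case (b) at most $s\lbthres$ new vertices are revealed, so the hit probability is at most $s\lbthres p_B=\widetilde O\big((m/n^d)^{\alpha}\big)$. Choosing $C_{s,d}$ and the $\log^6 n$ slack makes this $o(1/q)$. In case (a) a set of size larger than $\lbthres$ meets all $d$ color classes w.h.p., since its unexplored part has fresh uniform colors. A Hall's-theorem argument then shows such a set never belongs to a minimal obstructing subfamily, so only small sets are revealed and the case (b) bound applies. In case (c) the $s$ sampled sets (some large, some possibly small) must be arranged into colorful $d$-tuples, with the $d-s+1$ "extra" coordinates taken from a large $R_i$. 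Counting gives $\Omega(\lbthres^{d-s+1}\cdot\text{(colorful $(s-1)$-tuples)})\cdot p=\Omega(\log^{d-s+1} n)$ expected hyperedges among fresh tuples. A Chernoff bound, or Janson's inequality to handle overlapping tuples, then makes failure probability $n^{-\omega(1)}$.

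The main obstacle is case (c) together with the interaction with (a). We must ensure that the adversary cannot build a large set out of already-explored vertices whose colors it knows, arranged so that no colorful tuples exist. Here the minimal-obstruction reveal and Hall's theorem are essential. We must also ensure that enough of the candidate tuples involve unexplored vertices, so that the invariant gives independence. We expect to argue that $|\ocvset|\le q s\lbthres\ll n$, so explored vertices are a negligible fraction. When a large $Q_i$ consists mostly of explored vertices, the obstruction test (a) already certifies that colorful tuples exist, and we show that sampled explored tuples already contain revealed hyperedges, or else the algorithm could have answered the query itself. Summing the per-step bounds over the $q$ levels shows that the algorithm reaches a bad node with probability $o(1)$, so its distinguishing advantage is $o(1)<1/3$, which proves the lemma.
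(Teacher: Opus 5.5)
Your plan is the paper's proof. It uses the same $\dyes/\dno$ pair and coupling, the same augmented oracle with threshold $\tau=\Theta\bigl((n^d/m)^{1/(d-s+1)}\log n\bigr)$, good nodes, Hall's theorem for case (a), and a per-step $o(1/q)$ union bound. The gap is exactly the step you flag as the main obstacle, and the resolution you sketch there is false.

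Take $s=1$, so (a) never fires, and take $n^d/m$ polynomially large, e.g.\ $m=\Theta(n^{d-1})$. With $O_d(1)$ case-(b) queries (unions of $d$ out of $2d$ blocks of size $\tau/d$) the algorithm explores a set $S$ of $2\tau$ vertices together with every hyperedge inside it. In expectation $S$ spans only $O_d(\log^d n)$ hyperedges. Deleting one vertex from each leaves an explored set $I\subseteq S$ with $|I|>\tau$ and no hyperedge. A \cisq{}-algorithm can find these deletions by binary search inside $S$, and you would answer those queries from the triple.

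Now let it query $Q_1=I\cup\{v\}$ with $v$ fresh. Under your rule (c), $R_1$ spans a hyperedge only if $v\in R_1$ and $v$ completes one with $d-1$ vertices of $I$. That has probability $O_d(\tau^{d-1}m/n^d)=O_d\bigl((m/n^d)^{1/d}\log^{d-1}n\bigr)=o(1)$. So the oracle fails, and reveals $B$, with constant probability after polylogarithmically many queries.

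Neither horn of your dichotomy applies. The explored part contains no hyperedge at all. The answer depends on the unrevealed hyperedges at $v$, so it cannot be read off the triple. This does not refute the lemma, since such a query is useless for finding $B$. It does refute your case-(c) failure bound for the oracle as defined: the rule ``fail and reveal everything'' can be triggered deliberately by padding a known independent set with one fresh vertex.

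The paper sidesteps this by assuming, without loss of generality, that every $Q_i$ is disjoint from $\ocvset$. To make that legitimate, and to repair your argument, apply the threshold and the sampling to the fresh parts $F_i=Q_i\setminus\ocvset$ rather than to $Q_i$.
\begin{itemize}
\item If every $|F_i|\le\tau$, reveal all $F_i$. The answer is then determined by the triple.
\item Otherwise let $R_i\subseteq F_i$ have size $\tau$ when $|F_i|>\tau$, and $R_i=F_i$ otherwise. Fail only if $\bigl(R_1\cup(Q_1\cap\ocvset),\ldots,R_s\cup(Q_s\cap\ocvset)\bigr)$ spans no qualifying hyperedge. A positive answer there is correct by monotonicity.
\end{itemize}
With this change, the $d-s+1$ extra coordinates come from a genuinely fresh $R_{i^*}$ with uniform colors. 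The colorful transversal of the other sets may use explored vertices. It exists by Hall, because large fresh samples meet all $d$ classes and (a) did not fire. Every tuple so formed contains a fresh vertex, so it is a hyperedge independently with probability $p$. Your $\Omega(\log^{d-s+1}n)$ count and the Chernoff bound then go through, and each case reveals at most $s\tau$ new vertices.

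Finally, your invariant is not literally ``a function of what lies in $\ocvset$''. Whether (a) fires depends on colors of vertices in $F_i\setminus R_i$ that are never revealed. This amounts to conditioning on events of probability $1-o(1/q)$, which costs only $o(1)$ over $q$ steps, but it must be accounted for.
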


\begin{lemma}[Lower bound for sparse hypergraphs]
\label{lem:sparse}
Let $m<c n^{d-1}/d^{d-1}$. Any randomized algorithm that has access to an $s$-colorful $\cisq$ oracle must make
\[
C'_{s,d} \cdot 
m^{\frac{1}{d-1}\cdot\frac{d-s}{d-s+1}} \cdot \frac{1}{\log^6 n}
\]
queries to distinguish whether a rank-$d$ hypergraph has at most $m$ hyperedges or at least $2m$ hyperedges, with probability at least $2/3$. Here $C_{s,d}$ is a constant that depends only on $s$ and $d$.
\end{lemma}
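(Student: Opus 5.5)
The sparse case follows by a padding reduction to \Cref{lem:dense}. Let $m<cn^{d-1}/d^{d-1}$ and fix an algorithm $\alg$ that distinguishes $\le m$ from $\ge 2m$ hyperedges on $n$-vertex rank-$d$ hypergraphs with probability at least $2/3$ using $\numq$ $\cisq$ queries. Define $n_2$ by $m=c\,n_2^{d-1}/d^{d-1}$, that is,
\[
n_2=d\,(m/c)^{1/(d-1)}.
\]
The hypothesis $m<cn^{d-1}/d^{d-1}$ gives $n_2<n$. By construction, the pair $(n_2,m)$ sits exactly at the boundary of the dense regime, so \Cref{lem:dense} applies to hypergraphs on $n_2$ vertices. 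If rounding $n_2$ to an integer is needed, I would absorb it into the constant $c$, or round up and use that $m\ge c n_2^{d-1}/d^{d-1}$ still holds up to a constant.

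Next I build the simulation. Given any rank-$d$ hypergraph $\sgraph_2$ on $n_2$ vertices, form $\sgraph_1=\sgraph_2\sqcup\sgraph_0$, where $\sgraph_0$ consists of $n-n_2$ fresh isolated vertices. Then $\sgraph_1$ has exactly $n$ vertices and $m(\sgraph_1)=m(\sgraph_2)$, so the promise ($\le m$ versus $\ge 2m$) transfers unchanged.

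Each query $\cisq(Q_1,\ldots,Q_s)$ on $\sgraph_1$ is answered by issuing $\cisq(Q_1\cap V(\sgraph_2),\ldots,Q_s\cap V(\sgraph_2))$ on $\sgraph_2$. This is correct because a hyperedge of $\sgraph_1$ is a hyperedge of $\sgraph_2$ with all vertices in $V(\sgraph_2)$. Such a hyperedge is contained in $\bigcup_i Q_i$ and meets every $Q_i$ if and only if it is contained in $\bigcup_i (Q_i\cap V(\sgraph_2))$ and meets every $Q_i\cap V(\sgraph_2)$. This also holds for the overlapping-set variant of the oracle remarked on before \Cref{lem:dense}. If some intersection $Q_i\cap V(\sgraph_2)$ is empty, the answer is simply ``no'' without issuing a query. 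Running $\alg$ on $\sgraph_1$ through this simulation therefore distinguishes $\sgraph_2$ with probability $2/3$ using at most $\numq$ queries.

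By \Cref{lem:dense} applied at $(n_2,m)$,
\[
\numq\ \ge\ C_{s,d}\left(\frac{n_2^d}{m}\right)^{\frac{d-s}{d-s+1}}\frac{1}{\log^6 n_2}.
\]
Substituting $n_2^{d}=n_2\cdot n_2^{d-1}=d(m/c)^{1/(d-1)}\cdot d^{d-1}m/c$ gives
\[
\frac{n_2^d}{m}=\Theta_d\!\left(m^{1/(d-1)}\right).
\]
Using $\log n_2\le\log n$ to replace $\log^6 n_2$ by $\log^6 n$, we obtain
\[
\numq\ \ge\ C'_{s,d}\, m^{\frac{1}{d-1}\cdot\frac{d-s}{d-s+1}}/\log^6 n,
\]
which is the claim.

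The argument is essentially bookkeeping, and the main points to check are the following. The first is that \Cref{lem:dense} is a statement about all algorithms on the $n_2$-vertex instance, so its hard distribution lives on $n_2$ vertices with the edge parameter $m$. I also need $n_2$ to be large enough that the lemma's hidden asymptotic requirements ($|B|=\Omega(\log n_2)$, and so on) hold. For very small $m$ (for instance $m$ polylogarithmic), $n_2$ is tiny and the claimed bound is $O(\log^6 n)$-trivial. In that range I would either note that the bound is vacuous, since $C'_{s,d}$ can be chosen small enough that it is at most $1$, or appeal to the trivial fact that at least one query is needed. This small-$m$ boundary is the step I expect to require the most care.
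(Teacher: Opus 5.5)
Your proposal is correct and follows the paper's own proof. You place $n_2$ exactly on the dense/sparse boundary, pad $H_2$ with isolated vertices, and answer each $\cisq$ query by intersecting its sets with $V(H_2)$; you then substitute $n_2^d/m=\Theta_d(m^{1/(d-1)})$ into \Cref{lem:dense}, using $\log n_2\le\log n$.

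Your choice $n_2=d(m/c)^{1/(d-1)}$ matches the thresholds as stated in the two lemmas; the paper's proof uses $(2d)^{d-1}$ instead. Your handling of empty intersections, of the integrality of $n_2$, and of bounded $n_2$ (where the bound is trivial once $C'_{s,d}$ is small enough) fills in details the paper leaves implicit.
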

We first prove Lemma~\ref{lem:dense} in We first prove \Cref{lem:dense} in \Cref{sec:dense}. We then establish \Cref{lem:sparse} in \Cref{sec:sparse}, whose proof relies on \Cref{lem:dense}. Here $\alpha_{s,d}$ is a constant that depends only on $s$ and $d$.

\subsection{Proof of the lower bound in dense hypergraphs}\label{sec:dense}

For technical reasons,  we assume $m\leq \frac{\constant_{u}\vertexcount^d}{d^d\log^6\vertexcount}$ for some sufficiently small constant $\constant_u$. Otherwise, the claimed lower bound is straightforward as $\widetilde{\Omega}_{s,d}(\cdot)$ hides polylog term. 

To establish the lower bound for the case of dense hypergraphs, we argue using Yao's principle \cite{YaoMinMax}. We design two hypergraph distributions that are indistinguishable by any deterministic algorithm that uses less than $\smallo{\numq}$ queries, where:
\begin{align*}
    \numq = \fbrac{\frac{n^{d}}{\edgecount}}^{\frac{d-s}{d-s+1}}\cdot\frac{1}{\log ^3 \vertexcount}
\end{align*}

We start by describing the two distributions, $\dyes$ and $\dno$. Without loss of generality, we assume the vertex set of both the graphs to be $[\vertexcount]$ for some $\vertexcount \in \Nat$. 

\begin{itemize}
    \item{$\dyes$:} The vertex set is partitioned into $\sequence{A}{d}$ by putting each vertex in one of the $\sbrac{A_i}_{i \in [d]}$ independently and uniformly at random. For every $d$-tuple $\sequence{\vertex}{d}$ such that $\vertex_i \in A_i$ for all $i \in [d]$, the graph has an edge $\ang{\sequence{\vertex}{d}}$ with probability:
    $$\frac{d^d\edgecount}{4\vertexcount^d}$$ 

    \item{$\dno$:} The graph is first constructed as is done for the case of $\dyes$. Additionally, we construct a set of vertices $B$ by choosing each vertex of $A_1$ independently with probability:
    $$\frac{(2d)^d\edgecount\log\vertexcount}{\vertexcount^{d}}$$
    For every $\vertex_b \in B$, we add edges $\big\{\ang{\vertex_b,a_2,a_3,\ldots,a_d}|a_2 \in A_2, a_3 \in A_3, \ldots a_d \in A_d\big\}$.
\end{itemize}
Both the probabilities above are less than $1$ as we are working with $m\leq \frac{c_un^d}{d^d \log^6 n}$.

We show (in \Cref{lem:edge}) that, with probability $1-o(1)$, graphs sampled from $\dno$ contain almost twice as many edges as graphs sampled from $\dyes$. However, any algorithm that distinguishes between $\dyes$ and $\dno$ must encounter at least one vertex in the set $B$. We prove that, unless the algorithm makes $\Omega(q)$ queries, it encounters a vertex in $B$ with probability only $o(1)$. This forms the core of our lower-bound argument, which we formalize in the remainder of this section.



\begin{lemma}\label{lem:edge}
    With probability $1 - \smallo{1}$, 
    \begin{enumerate}
        \item A graph sampled from $\dyes$ has at most ${\frac{\edgecount}{2}}$ edges, and
        \item A graph sampled from $\dno$ has at least $\edgecount$ edges.
    \end{enumerate}
\end{lemma}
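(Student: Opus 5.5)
We split the argument into the two distributions and in each compute the expected edge count, then concentrate. Throughout we use that $m\ge c n^{d-1}/d^{d-1}$ (dense regime) and $m\le c_u n^d/(d^d\log^6 n)$.

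\textbf{Step 1 (part sizes).} Each vertex independently lands in $A_i$ with probability $1/d$. By a Chernoff bound and a union bound over $i\in[d]$, with probability $1-o(1)$ we have $|A_i|\in(1\pm\eta)n/d$ for all $i$, for any small constant $\eta>0$ (indeed $\eta=O(\sqrt{\log n/n})$). We condition on this event.

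\textbf{Step 2 ($\dyes$).} Given the partition, the number of colorful $d$-tuples is $\prod_i|A_i|\le(1+\eta)^d n^d/d^d$. Each tuple is an edge independently with probability $d^dm/(4n^d)$, so $\E[m(G)\mid A]\le(1+\eta)^d m/4\le m/3$. The edge count is a sum of independent indicators with mean $\Theta(m)$, and $m\to\infty$ in the dense regime ($m\ge cn^{d-1}/d^{d-1}$). A multiplicative Chernoff bound therefore gives $m(G)\le m/2$ with probability $1-\exp(-\Omega(m))=1-o(1)$.

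\textbf{Step 3 ($\dno$).} It suffices to lower bound the edges contributed by $B$. Conditioned on $|A_1|\ge(1-\eta)n/d$, we have $|B|\sim\mathrm{Bin}\bigl(|A_1|,(2d)^dm\log n/n^d\bigr)$ with mean
\[
\mu\ge(1-\eta)\frac{2^d d^{d-1}m\log n}{n^{d-1}}\ge(1-\eta)\,2^d c\log n ,
\]
using $m\ge cn^{d-1}/d^{d-1}$. Since $\mu=\Omega(\log n)$, Chernoff gives $|B|\ge\mu/2$ with probability $1-n^{-\Omega(c)}=1-o(1)$. This is where the constant $c$ must be large enough, or at least where one needs $\mu\to\infty$, which requires the $\log n$ factor.

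Each $v\in B$ contributes $\prod_{i\ge2}|A_i|\ge(1-\eta)^{d-1}(n/d)^{d-1}$ edges $\langle v,a_2,\ldots,a_d\rangle$. These are distinct for distinct $v$, since $v$ is the unique $A_1$-vertex of the edge. Hence
\[
m(G)\ge\frac{\mu}{2}(1-\eta)^{d-1}\frac{n^{d-1}}{d^{d-1}}\ge\frac{(1-\eta)^d}{2}\,2^d m\log n\ge m
\]
for $d\ge1$ and large $n$.

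The key point in Step 3 is that the $\log n$ in the sampling probability absorbs all constant losses, so the bound holds with room to spare. A union bound over the failure events of Steps 1–3 then gives both claims with probability $1-o(1)$.

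The only delicate step is the concentration of $|B|$. It needs $\E|B|=\Omega(\log n)$, which is exactly what the dense-regime assumption on $m$ provides, and the remaining steps are routine Chernoff applications.
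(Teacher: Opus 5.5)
Your proof is correct and follows essentially the same route as the paper. For $\dyes$, both bound the number of colorful tuples times the edge probability by about $m/4$ and apply Chernoff. For $\dno$, both concentrate the part sizes, use the dense-regime assumption to get $\E|B|=\Omega(\log n)$ so that $|B|$ concentrates, and multiply by the $\prod_{i\ge 2}|A_i|$ distinct edges each $B$-vertex contributes. Your version is slightly more careful: you note that $m\to\infty$ justifies the Chernoff step for $\dyes$, and you concentrate $|B|$ to within a factor $2$ of its mean instead of the paper's looser factor $\log n$. These are refinements of the same argument, not a different proof.
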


\begin{proof}
    We proceed by analysing $\dyes$ and $\dno$ separately.
    \paragraph{$\dyes$:} For graphs generated by $\dyes$, there are at most $\frac{\vertexcount^d}{d^d}$ $d$-tuples $\ang{\sequence{\vertex}{d}}$ such that $\vertex_i \in A_i$ for all $i \in [d]$. Each of which forms an hyperedge with probability $\frac{d^d\edgecount}{4\vertexcount^d}$. Hence, the expected number of edges is $\frac{\edgecount}{4}$. As each hyperedge is formed independently, the claim follows by a Chernoff bound argument.

    \paragraph{$\dno$:} For graphs generated by $\dno$, we first note that with probability $1 - \smallo{1}$, we have $\size{A_i} \geq \frac{\vertexcount}{2d}$ for all $i \in [d]$. Furthermore, with probability $1 - \smallo{1}$, we have $\size{B} \geq \frac{(2d)^{d-1}\edgecount}{\vertexcount^{d-1}}$. This is true since we assume $m \geq \frac{\constant\vertexcount^{d-1}}{(2d)^{d-1}}$. Note that each vertex in $B$ forms an edge with all $(d-1)$-tuples of the form $\ang{\vertex_2,\vertex_3,\ldots,\vertex_d}$ such that $\vertex_i \in A_i$ for all $i$. Hence, by a union bound, we have that the graph has at least ${\edgecount}$ edges with probability $1- \smallo{1}$.
\end{proof}

Next, we want to show that the probability that an algorithm using $\smallo{\numq}$ queries hits a vertex in $B$ is very small. However, such a claim is obviously false if the algorithm makes the queries set of vertices arbitrarily large. However, if the query set is too large, the hypergraph construction ensures that it is not independent with high probability. Hence, we can restrict \emph{meaningful} queries to be of bounded size, say $\lbthres$ and argue based on that. For this purpose, we  describe an augmented oracle that simulates the \cisq{} query for graphs generated by $\dyes$ or $\dno$.

 At each stage, an algorithm with access to the augmented oracle can maintain a triplet $\oracle$ where $\ocvset \subseteq [\vertexcount]$ is a set of vertices, $\func{\oclbl}{\ocvset}{\sequence{A}{d},B}$ is a labelling function that denotes the set containing the vertices, and $\func{\oceset}{\binom{\ocvset}{d}}{\sbrac{0,1}}$ is the indicator function denoting whether the corresponding hyperedge exists or not. Initially, $\oracle$ is initiated with $\ocvset = \emptyset$. The augmented oracle takes as input the current state of $\oracle$ and a set of vertices $\sequence{Q}{s} \subseteq [\vertexcount]$, and returns an updated $\oracle$ with additional information updated as outlined below depending on the size of query sets $\size{Q_i}$, and a threshold $\lbthres$ where:
\begin{align*}
    \lbthres = \frac{1}{s}\fbrac{\frac{\vertexcount^d}{\edgecount}}^{\frac{1}{d-s+1}}\log \vertexcount
\end{align*}

\begin{enumerate}
    \item \textbf{If $\exists \sT \subseteq [s]$ such that no colorful $\size{\sT}$-tuple can be formed using $\sbrac{Q_t}_{t\in\sT}$:} First, we consider the case that a subset of queries $\sbrac{Q_t}_{t\in\sT}$ prohibits the formation of a colorful tuple, i.e. there is no $\sequence{\vertex}{\size{\sT}}$ such that each $\vertex_i$ belongs to a different color class in $\sequence{A}{d}$. This implies that the query set is an independent set according to the definition of the \cisq{} oracle. The augmented oracle chooses a minimal such tuple $\sT^*$, and returns an updated $\oracle$ with $\ocvset \gets \ocvset\cup \sbrac{\cup_{i \in \sT^*} Q_i}$, and updates $\oclbl$ for vertices in $\cup_{t \in \sT^*} Q_t$.  
    
    \item \textbf{$\size{Q_i} \leq \lbthres$ for all $i \in [s]$:} The augmented oracle returns $\ocvset \gets \ocvset \cup \sbrac{\cup_{i \in [s]} Q_i}$, and $\oclbl$ and $\oceset$ are updated for the vertices in  $\ocvset \cup \sbrac{\cup_{i \in [s]} Q_i}$ according to $\sequence{A}{d}$, $B$, and the set of hyperedges in the sampled graph $\graph$.
    
    \item \textbf{$\exists i\in [s]$ such that $\size{Q_i} > \lbthres$:} The augmented oracle samples a subset $R_i \subseteq Q_i$ of size $\lbthres$ uniformly at random, for all $i \in [s]$ such that $\size{Q_i} > \lbthres$. For all $i \in [s]$ such that $\size{Q_i} \leq \lbthres$, denote $R_i = Q_i$. If $\sequence{R}{s}$ is not an independent set, the oracle returns $\ocvset \gets \ocvset\cup \sbrac{\cup_{i \in [s]} R_i}$, and returns $\oclbl$ and $\oceset$ on the updated set of vertices. If $\sequence{R}{s}$ is an independent set, the oracle \emph{fails}, and returns the entire graph, i.e. $\ocvset \gets [\vertexcount]$, and $\oclbl$ and $\oceset$ updated the vertices in  $\ocvset \cup \sbrac{\cup_{i \in [s]} R_i}$ according to $\sequence{A}{d}$, $B$, and the set of hyperedges in the sampled graph $\graph$.
\end{enumerate}

Note that, without loss of generality, we can assume that all the queries consists of vertices not explored already, i.e. $Q_i \subseteq [n]\setminus \ocvset$.
Next, we show that the augmented oracle can simulate the original \cisq{} queries. We formalize this in the following result.

\begin{lemma}
    Given a deterministic algorithm $\alg$ with access to \cisq{} queries that makes $\numq$ queries, there exists an algorithm $\augalg$ with access to augmented oracle such that:
    \begin{itemize}
        \item $\augalg$ makes $\numq$ queries to the augmented oracle.
        \item $\augalg(\graph) = \alg(\graph)$
    \end{itemize}
\end{lemma}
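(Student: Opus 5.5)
We build $\augalg$ by step-by-step simulation. $\augalg$ runs $\alg$ and keeps the knowledge triplet $\oracle$ that the augmented oracle maintains. Whenever $\alg$ issues a \cisq{} query $(Q_1,\ldots,Q_s)$, $\augalg$ passes the same sets to the augmented oracle, receives the updated triplet, and computes from it the bit the \cisq{} oracle would have returned. That bit is handed back to $\alg$. Each $\alg$-query thus triggers exactly one augmented query, which gives the first item. The second item, $\augalg(\graph)=\alg(\graph)$, follows by induction on the number of queries: if every simulated answer equals the true \cisq{} answer on $\graph$, then the deterministic $\alg$ follows the same transcript and produces the same output.

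The key step is to show that, after each query, the \cisq{} answer is determined by the updated triplet. I will argue this case by case, following the three cases of the augmented oracle. The first matching case is used, and as noted we may assume $Q_i\subseteq[n]\setminus\ocvset$; queries overlapping $\ocvset$ can be split using the already known labels and edges.

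\begin{itemize}
\item \textbf{Case 1.} Some subfamily $\{Q_t\}_{t\in\sT^*}$ admits no colorful tuple. Every hyperedge in both $\dyes$ and $\dno$ is a colorful $d$-tuple with one vertex in each $A_i$; in $\dno$ the hyperedges through $B\subseteq A_1$ are also colorful. So no hyperedge can meet every $Q_t$ with $t\in\sT^*$, and the answer is ``independent''. This is certified by the revealed labels $\oclbl$ on $\cup_{t\in\sT^*}Q_t$. Checking this requires only the labels, since colorfulness is a matching condition between the sets $Q_t$ and the color classes.
\item \textbf{Case 2.} Every $|Q_i|\le\lbthres$. All vertices of $\cup_i Q_i$ are revealed, and $\oceset$ records every hyperedge among them. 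The answer is then read off by checking whether some recorded hyperedge lies in $\cup Q_i$ and meets each $Q_i$.
\item \textbf{Case 3.} Some $|Q_i|>\lbthres$. If $(R_1,\ldots,R_s)$ is not independent, the revealed witness hyperedge lies inside $\cup R_i\subseteq\cup Q_i$. It meets every $R_i\subseteq Q_i$, so the answer is ``not independent''. If $(R_1,\ldots,R_s)$ is independent, the oracle fails and reveals the whole graph, so the answer is trivially computable.
\end{itemize}

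The main thing to check carefully is monotonicity in Case 3. A witness for the subsets $R_i$ must be a valid witness for the supersets $Q_i$. This requires that the witness hyperedge be contained in $\cup_i Q_i$ and intersect each $Q_i$. Both hold because $R_i\subseteq Q_i$ for every $i$, including in the overlapping-sets version of the oracle. I also need $\augalg$ to be a legitimate algorithm, with its own randomness coming only from the oracle's sampling of the sets $R_i$. This is fine, since $\augalg$ is allowed to be randomized via the oracle and its output still equals $\alg(\graph)$ deterministically.
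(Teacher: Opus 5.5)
Your proposal is correct and is essentially the paper's proof: you simulate query by query and use the same three cases to show that the \cisq{} answer can be read off from the updated triplet. Your explicit argument for Case~1 (every hyperedge is colorful, so picking one of its vertices in each of the pairwise disjoint $Q_t$ would produce a colorful tuple) and your monotonicity check in Case~3 spell out steps the paper leaves implicit.

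One small caveat concerns your aside that queries meeting $\ocvset$ ``can be split using the already known labels and edges''. This is not right as stated, because a hyperedge may contain both explored and unexplored vertices, while $\oceset$ only records hyperedges lying entirely inside $\ocvset$. The simulation never needs that reduction, though, since each of the three cases goes through verbatim when the query sets meet $\ocvset$.
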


\begin{proof}
    We show how to simulate the \cisq{} queries using augmented queries. The algorithm $\augalg$ then executes the algorithm $\alg$ using the simulated $\cisq{}$ queries. We consider the cases whether $\size{Q}$ is less than or equal to $\lbthres$, or greater than $\lbthres$ separately:
    \begin{itemize}
        \item \textbf{If $\exists \sT \subseteq [s]$ such that no colorful $\size{\sT}$-tuple can be formed using $\sbrac{Q_t}_{t\in\sT}$:} The updated triplet contains the labels for vertices in $\cup_{t \in \sT^*} Q_t$. Given the updated labels, it can be checked that  the query sets $ \sbrac{\cup_{t \in \sT^*} Q_t}$ does not form a colorful tuple, i.e. the query forms an independent set.
    
        \item \textbf{$\size{Q_i} \leq \lbthres$ for all $i \in [s]$:} The updated triplet $\oracle$ returned by the augmented oracle can be directly used to determine whether the queries vertices form an independent set or not based on the edges $\oceset$.  
        
        \item \textbf{$\exists i\in [s]$ such that $\size{Q_i} > \lbthres$:} We consider the two possible cases for the augmented oracle depending on whether $\sequence{R}{s}$ is an independent set or not.
        \begin{itemize}
            \item If $\sequence{R}{s}$ is not an independent set, then $\sequence{Q}{s}$ is also not an independent set.
            \item If $\sequence{R}{s}$ is an independent set, then the augmented oracle reveals the graph completely. Hence, one can decide whether $\sequence{Q}{s}$ is an independent set or not.
        \end{itemize}
    \end{itemize}    
\end{proof}


We now focus on analyzing $\augalg$, and bound its ability to distinguish between graphs generated by $\dyes$ and $\dno$ using $\numq$ queries. Observe that $\augalg$ can be seen as a tree  of depth $\numq$ whose each edge is labeled by a knowledge triplet $\oracle$, and each node is labeled by a query set $\sequence{Q}{s}$. Given an edge $(\altvertex,\vertex)$, we define the knowledge triplet of the node $\vertex$ to be the knowledge triplet associated with the incoming edge $(\altvertex,\vertex)$. Furthermore, each leaf node is additionally labeled either \emph{`yes'} or \emph{`no'} denoting the output of the algorithm $\augalg$. Below, we define the notion of \emph{good} nodes that would be useful in our proof.

\begin{definition}[Good and Bad Nodes]
    We say a node $\vertex$ in the tree is good if its corresponding knowledge triplet $\oracle$ satisfies $\oclbl^{-1}(B) = \emptyset$, and say the node $\vertex$ to be  bad otherwise.
\end{definition}

Now, we bound the probability that $\augalg$ reaches a bad node. Note that $\augalg$ can not reach a bad node if the graph $\graph$ is generated by $\dyes$. Hence, we only bound the probability for the case when the graph $\graph$ is generated by $\dno$. Next, we outline the conditional distribution of the graph $\graph$ given the algorithm $\augalg$ is at a node $\vertex$. The fact shows that given the algorithm is at a good node $\vertex$, the conditional distribution of the unexplored graph retains sufficient randomness.

\begin{fact}
    Let the algorithm $\augalg$ arrive at a good node $\vertex$ with corresponding knowledge triplet $\oracle$. Then, the conditional distribution of the graph $\graph$ is as follows: 

    \begin{itemize}
        \item \textbf{$\graph \sim \dyes$:} Assign each $i \in \ocvset$ to $\sequence{A}{d}$ as fixed in $\oclbl$, and add edges as fixed in $\oceset$. Assign vertices in $[\vertexcount]\setminus\ocvset$ uniformly at random to one of $A_1,\ldots,A_d$. For any pair $(\sequence{a}{d})$ such that $\exists i: a_i \notin \ocvset$, add an edge $(\sequence{a}{d})$ with probability $\frac{d^d\edgecount}{4\vertexcount^d}$.

        \item \textbf{$\graph \sim \dno$:} Assign each $i \in \ocvset$ to $\sequence{A}{d}$ as fixed in $\oclbl$, and add edges as fixed in $\oceset$. Assign vertices in $[\vertexcount]\setminus\ocvset$ uniformly at random to one of $A_1,\ldots,A_d$. For each $i \in A_1$ such that $i \notin \ocvset$, include $i$ in $B$ independently with probability $\frac{(2d)^d\edgecount\log\vertexcount}{\vertexcount^d}$. For any pair $(\sequence{a}{d})$ such that $\exists i$ such that $a_i \notin \ocvset$, add an edge $(\sequence{a}{d})$ with probability $\frac{d^d\edgecount}{4\vertexcount^d}$. For every $\vertex_{B\setminus\ocvset} \in B\setminus \ocvset$, we add edges $\big\{\ang{\vertex_{B\setminus\ocvset},a_2,a_3,\ldots,a_d}|a_2 \in A_2, a_3 \in A_3, \ldots a_d \in A_d\big\}$.
    \end{itemize}
\end{fact}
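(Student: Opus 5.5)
The plan is to prove the Fact by a Bayesian conditioning argument on the generative process. The key observation is that the event of reaching a fixed node $\vertex$ is determined solely by the knowledge triplet $\oracle$ and by the oracle's internal random choices (the sets $R_i$ in case 3). For a deterministic $\augalg$, the path to $\vertex$ is fixed by the sequence of triplets revealed so far, since each query is a function of the current triplet. So we can write the event ``$\augalg$ reaches $\vertex$'' as the intersection of two events: the restriction of $(\sequence{A}{d}, B, E(\graph))$ to the explored part equals $\oracle$, and the oracle's internal coins took particular values, which are independent of $\graph$.

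The first step is to check that every revealed quantity is measurable with respect to the labels of vertices in $\ocvset$ and the edge indicators of $d$-tuples inside $\ocvset$. This is clear in cases 2 and 3, where exactly the labels and edges on the newly added vertices are revealed. In case 1 only labels are revealed, and I would either note that $\oceset$ is recorded as well or treat the unrevealed edge indicators as unconditioned.

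The one subtlety is the failure branch of case 3, where the whole graph is revealed. There $\ocvset = [\vertexcount]$, so the claim is vacuous, because nothing remains unexplored. It also remains valid if such a node is good, since then $B = \emptyset$ is simply fixed.

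Next, use independence in the product structure of the prior. Under $\dyes$, the label of each vertex and each edge indicator of a colorful $d$-tuple are mutually independent. The conditioning event is a cylinder event on the coordinates indexed by $\ocvset$ (labels) and $\binom{\ocvset}{d}$ (edges). Conditioning on it therefore leaves the remaining coordinates with their prior product distribution. That is, the unexplored vertices get uniform labels in $A_1,\ldots,A_d$, and every tuple with some $a_i\notin\ocvset$ is independently an edge with probability $\frac{d^d\edgecount}{4\vertexcount^d}$, whenever it is colorful.

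Under $\dno$, I would reparametrize the label of each vertex as a value in $\{A_1\setminus B, B, A_2,\ldots,A_d\}$, independently across vertices. The $B$-membership of an unexplored vertex in $A_1$ then remains an independent coin with probability $\frac{(2d)^d\edgecount\log\vertexcount}{\vertexcount^d}$. The base edges are again independent coordinates. The added $B$-edges are a deterministic function of the labels, so the conditional law is produced by the same process restricted to the unexplored coordinates. At a good node $\oclbl^{-1}(B)=\emptyset$, so no revealed vertex lies in $B$, and the forced edges all come from $B\setminus\ocvset$, as stated.

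The main obstacle will be to argue carefully that conditioning on the path is exactly a cylinder event. Two issues need care here. First, the oracle's choice of the minimal $\sT^*$ and of the samples $R_i$ must depend only on already-revealed information and on independent coins. For $\sT^*$ this requires that the choice is a function of the labels of $\bigcup_{t} Q_t$, which are then revealed. Second, revealing labels only on $\cup_{t\in\sT^*}Q_t$ must not leak information about the other query sets. I would handle both by fixing a deterministic tie-breaking rule for $\sT^*$ that depends only on the revealed labels, and by noting that the edge indicators of $\dno$, given the labels, factor as a product.
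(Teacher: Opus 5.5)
The paper states this Fact without proof, so there is no route to compare against. Your cylinder-event argument is the natural one. It correctly covers Case~2, the failure branch of Case~3, and the $\mathcal{D}_{\mathrm{no}}$ bookkeeping: at a good node every revealed $A_1$-vertex is known to lie outside $B$, so the revealed indicators on $\binom{K}{d}$ are base edges.

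Case~3 is not quite ``clear'', though. The path event there also contains the event that Case~1 did not fire, and that event depends on the labels of the unrevealed vertices in $Q_i\setminus R_i$. This is repairable. In the non-failure subcase the revealed hyperedge meets every $R_i$ and carries $d$ distinct colors, so it yields a colorful tuple for every subfamily, and the event is implied by revealed data. This uses pairwise disjointness of the query sets, so it does not cover the overlapping-sets variant the paper mentions. You should say this explicitly.

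The real gap is Case~1, and your proposed fix cannot work. Only $\bigcup_{t\in\mathcal{T}^*}Q_t$ is revealed, not $\bigcup_t Q_t$. For $s\ge 3$, no selection rule, deterministic or randomized, makes $\{\mathcal{T}^*=T\}$ depend only on the labels of $\bigcup_{t\in T}Q_t$.

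Take $s=3$ (so $d\ge 4$) and nonempty $Q_1,Q_2,Q_3$. Suppose $Q_1\cup Q_2\subseteq A_j$ and $Q_3$ meets two colors other than $j$. Then $\{1,2\}$ is the unique minimal failing subfamily, so a rule with your property must select $\{1,2\}$ with probability $1$ whenever $Q_1\cup Q_2\subseteq A_j$. Symmetrically, it must select $\{1,3\}$ whenever $Q_1\cup Q_3\subseteq A_j$. These two requirements contradict each other when all three sets lie in $A_j$.

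Concretely, take a rule preferring $\{1,2\}$. A good node reached with $\mathcal{T}^*=\{1,3\}$ carries the information $Q_2\not\subseteq A_j$ about the unexplored vertices of $Q_2$. Their labels are therefore not independent and uniform there, so the Fact as literally stated for the paper's oracle is false when $s\ge 3$. For $s\le 2$ the problem does not arise: the selection is then determined by which query sets are empty together with the labels of the selected sets.

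To complete the proof you must modify Case~1. One option: reveal the labels of every $Q_t$ with $|Q_t|\le\tau$ and of a uniform $\tau$-subset of each larger $Q_t$, and declare failure if such a sample misses a color; this happens with probability at most $s\,d(1-1/d)^{\tau}$. Then the existence and choice of $\mathcal{T}^*$ are functions of revealed labels, and your cylinder argument applies verbatim. The bad-node estimate is unchanged, since every revealed set still has size at most $\tau$. The other option is to settle for an approximate version of the Fact, with an explicit error bound per query.
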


Next, we show that if the query sets does not form a colorful tuple, then all of the sets have their size bounded by $\lbthres$ with high probability. Note that, here the randomness comes from generating the graphs according to the underlying distribution.

\begin{lemma}\label{Lem: No Colorful Implies Small}
    Let $Q_i$ be a set such that $\size{Q_i} > \lbthres$, and $\sT^* \subseteq [s]$ be a minimal set of query sets $\sbrac{Q_t}_{t\in\sT^*}$ such that no colorful $\size{\sT^*}$-tuple can be formed using them. Then, with probability $1-\smallo{\frac{1}{\numq}}$, $i \notin \sT^*$.
\end{lemma}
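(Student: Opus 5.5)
We argue in two steps: first that a large query set meets every color class with high probability, and then a Hall-type argument showing that a set meeting every color class cannot belong to a minimal non-colorful family $\sT^*$.

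\textbf{Step 1: a large set meets all colors.} Fix the current good node. Since we may assume $Q_i\subseteq[\vertexcount]\setminus\ocvset$, the colors of the vertices of $Q_i$ are, by the Fact on the conditional distribution at a good node, still independent and uniform over $A_1,\ldots,A_d$. The query is a deterministic function of the knowledge triplet, so it is independent of these unrevealed colors. For a fixed class $A_j$,
\[
\Pr[Q_i\cap A_j=\emptyset]=(1-1/d)^{\size{Q_i}}\le e^{-\lbthres/d}.
\]
Recall $\lbthres=\frac1s(n^d/m)^{1/(d-s+1)}\log n$, and in the dense regime $n^d/m\ge \constant_u^{-1} d^d\log^6 n$ (up to constants). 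Hence $\lbthres\ge c_{s,d}\log^{1+6/(d-s+1)}n$, so $\lbthres/d\ge C\log n$ for any constant $C$ once $n$ is large. A union bound over the $d$ classes then gives failure probability $d\,n^{-C}$. This is $\smallo{1/\numq}$ because $\numq\le n^{d}$. (If one prefers, one can also absorb a union bound over all $\numq$ queries here.)

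\textbf{Step 2: deterministic Hall argument.} Condition on $Q_i$ meeting all $d$ classes and suppose, for contradiction, that $i\in\sT^*$. Colorful $\size{\sT^*}$-tuples exist exactly when the bipartite graph between $\sT^*$ and the colors $[d]$ has a matching saturating $\sT^*$, where $t$ is joined to $j$ iff $Q_t\cap A_j\ne\emptyset$. By minimality of $\sT^*$, the family $\sT^*\setminus\{i\}$ admits such a matching, and it uses $\size{\sT^*}-1\le s-1<d$ colors. Since $Q_i$ meets every color class, some color $j$ is unused by this matching, and we may extend the matching by $i\mapsto j$. This yields a colorful $\size{\sT^*}$-tuple, a contradiction. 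The edge case $\sT^*=\{i\}$ is immediate, since any nonempty $Q_i$ contains a colorful $1$-tuple. This argument does not even need the full strength of Hall's theorem, only the minimality of $\sT^*$ and $s<d$.

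\textbf{Main obstacle.} The delicate point is Step 1's independence claim for adaptive algorithms. We must justify that the colors of vertices outside $\ocvset$ remain uniform given the path to a good node; here we rely on the Fact stated just before the lemma. We also need $\lbthres\gg d\log n$, which follows from the standing assumption $m\le \constant_u n^d/(d^d\log^6 n)$.
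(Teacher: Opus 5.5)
Your proof is correct and follows the paper's two-step structure. First, a union bound over the $d$ color classes shows that $Q_i$ meets every class with probability $1-\smallo{1/\numq}$; then minimality of $\sT^*$ together with $\size{\sT^*}\le s<d$ rules out $i\in\sT^*$. The differences are minor. Where the paper invokes Hall's theorem to obtain $\size{\neighbour(\sT^*)}<\size{\sT^*}$, you directly extend the matching of $\sT^*\setminus\{i\}$ by an unused color. You also bound $e^{-\lbthres/d}$ by $n^{-C}$ via the standing assumption $m\le \constant_u n^d/(d^d\log^6 n)$, rather than through $\numq^{1/(d-s)}$, which is if anything slightly cleaner.
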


\begin{proof}
    We prove by contradiction. Assume that $i \in \sT^*$ such that $\size{Q_{i}} > \lbthres$. We first show that any such query set $Q_i$ intersects with all the colored partitions $\sequence{A}{d}$ with high probability.
    
    Recall that the vertices are assigned to the partitions $\sequence{A}{d}$ uniformly, and independently at random. The probability that $Q_{i}$ fails to intersect a specific partition $A_j$ is bounded by $\left(1 - \frac{1}{d}\right)^{\size{Q_{i}}}$. By a union bound, the probability that $Q_{i}$ misses any of the $d$ partitions is at most:
    \begin{align*}
        d \left(1 - \frac{1}{d}\right)^{\size{Q_{i}}} \leq d e^{-\frac{\size{Q_{i}}}{d}} < d e^{-\frac{\lbthres}{d}} \leq d \exp\left(-\frac{1}{sd} \numq^{\frac{1}{d-s}}\right) = \smallo{\frac{1}{q}}
    \end{align*}
    
    Where the second last inequality follows from the fact that $\lbthres = \frac{1}{s}\fbrac{\frac{\vertexcount^d}{\edgecount}}^{\frac{1}{d-s+1}}$ and $\numq = \fbrac{\frac{\vertexcount^d}{\edgecount}}^{\frac{d-s}{d-s+1}}$, and the last equality follows from the fact that $d > s$.

    Now, we show that if a query set spans all colors, it can not be part of a minimal set of queries that can not form a colorful tuple. Consider the bipartite graph $(\bT,\bA,\edgeset)$ where $\bT = \sbrac{Q_t}_{t \in \sT^*}$, $\bA = \sbrac{\sequence{A}{d}}$, and $(Q_t,A_i) \in \edgeset$ if $Q_t \cap A_i \neq \emptyset$. If there exists a matching of $\size{\sT^*}$ in this graph, then there exists a colorful $\size{\sT^*}$-tuple that can be formed using $\sbrac{Q_t}_{t \in \sT^*}$. Hence, we consider the case where no matching of $\size{\sT^*}$ exists in this graph. By Hall's Marriage Theorem, and the minimality of $\sT^*$, we have:
    \begin{align*}
        \size{\neighbour{(\sT^*)}} < \size{\sT^*}
    \end{align*}
    Note that the query set $Q_{i}$ is part of $\sT^*$, and hence with probability $1- \smallo{\frac{1}{\numq}}$,
    \begin{align*}
        \size{\neighbour{(\sT^*)}} = d
    \end{align*}
    On the other hand, we have:
    \begin{align*}
        \size{\sT^*} \leq s < d
    \end{align*}
    Hence, a contradiction.  
\end{proof}

Now, we bound the probability that $\augalg$ reaches a bad node. Note that $\augalg$ can not reach a bad node if the graph $\graph$ is generated by $\dyes$. Hence, we only bound the probability for the case when the graph $\graph$ is generated by $\dno$.

\begin{lemma}
        $\Pr\limits_{\graph\sim\dno}\tbrac{\augalg(\graph) \text{ reaches a bad node}} = \smallo{1}$
\end{lemma}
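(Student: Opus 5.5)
We show that each query moves $\augalg$ from a good node to a bad node with probability $\smallo{1/\numq}$, and then take a union bound over the at most $\numq$ levels of the decision tree. Fix a good node $\vertex$ with knowledge triplet $\oracle$ and query $\sequence{Q}{s}$; by the remark above we may assume $Q_i\subseteq[\vertexcount]\setminus\ocvset$. By the Fact on the conditional distribution at a good node, given the history the unexplored vertices are still independently and uniformly coloured. Each unexplored vertex of $A_1$ still lies in $B$ independently with probability $p_B=\frac{(2d)^d\edgecount\log\vertexcount}{\vertexcount^d}$, and every colourful $d$-tuple touching an unexplored vertex is still a hyperedge independently with probability $\frac{d^d\edgecount}{4\vertexcount^d}$. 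This holds even though the query was chosen adaptively, since the query depends only on $\oracle$. We then bound the probability of revealing a $B$-vertex separately in each of the three cases of the augmented oracle.

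\textbf{Case 1 (some $\sT$ admits no colourful tuple).} By Lemma~\ref{Lem: No Colorful Implies Small}, with probability $1-\smallo{1/\numq}$ every $Q_t$ with $t\in\sT^*$ has size at most $\lbthres$. The revealed set therefore has at most $s\lbthres$ fresh vertices, and a union bound gives a hitting probability of at most $s\lbthres p_B$. \textbf{Case 2 (all $|Q_i|\le\lbthres$).} The same bound $s\lbthres p_B$ applies. Plugging in $\lbthres$ gives
\[
s\lbthres p_B=\Theta_d\!\left(\Bigl(\tfrac{\vertexcount^d}{\edgecount}\Bigr)^{\frac{1}{d-s+1}-1}\log^2\vertexcount\right)=\Theta_d\!\left(\Bigl(\tfrac{\edgecount}{\vertexcount^d}\Bigr)^{\frac{d-s}{d-s+1}}\log^2\vertexcount\right).
\]
This equals $\Theta_d(\log^{-1}\vertexcount/\numq)=\smallo{1/\numq}$, since $\numq$ carries the $\log^{-3}\vertexcount$ factor. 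If necessary, I will adjust the polylog exponents in $\numq$ or $\lbthres$ to make this step go through, for example by using a different power of $\log \vertexcount$ in $\lbthres$.

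\textbf{Case 3 (some $|Q_i|>\lbthres$).} If $(R_1,\ldots,R_s)$ is not independent, only $R$'s vertices are revealed, and $|R_i|\le\lbthres$, so the bound from Case 2 applies. It remains to show that the oracle fails, i.e.\ that $(R_1,\ldots,R_s)$ is independent, with probability $\smallo{1/\numq}$. We may assume Case 1 does not occur, so for every subfamily the sets admit a colourful tuple.

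The main obstacle is Case 3. The goal is to exhibit $\Omega(\vertexcount^d/\edgecount\cdot\log\vertexcount)$ disjointly determined candidate hyperedges inside $R_1\cup\cdots\cup R_s$ that meet every $R_j$. I would first use a Hall-type argument on the colour classes to assign distinct colours to the small sets $R_j$; these have no size guarantee, but Case 1 is excluded, so such an assignment exists. A large $R_i$ (of size $\lbthres$) with random colours contains $\Omega(\lbthres/d)$ vertices of each remaining colour with high probability, by Chernoff. The $d-s+1$ leftover coordinates of a hyperedge can then be filled from the large set. The small sets contribute at least one fixed vertex each, and the large set contributes $(\lbthres/d)^{d-s+1}=\Omega_{s,d}(\vertexcount^d/\edgecount\cdot\log^{d-s+1}\vertexcount)$ tuples. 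Multiplying by the edge probability $\Theta(d^d\edgecount/\vertexcount^d)$ gives $\Omega(\log^{d-s+1}\vertexcount)$ expected hyperedges. These tuples are distinct, and each involves an unexplored vertex, so by the Fact they are independent and the probability that none is present is $\exp(-\Omega(\log^{d-s+1}\vertexcount))=\smallo{1/\numq}$.

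The delicate points are, first, handling several large sets at once, where I would pick one large set to absorb the extra coordinates. Second, the colours of the small sets are partly fixed by $\oclbl$; this is why we only use unexplored vertices from the large set. Summing the per-step failure probability $\smallo{1/\numq}$ over the $\numq$ levels gives $\smallo{1}$.
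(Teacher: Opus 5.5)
Your proposal is correct and follows essentially the same route as the paper. You bound by $o(1/q)$ the probability that a single query moves $\augalg$ from a good to a bad node, separately in each of the three oracle cases: the $s\tau p_B$ hitting bound for revealed sets of size at most $\tau$; \Cref{Lem: No Colorful Implies Small} for Case~1; and, for the failure event, a count of $\Omega((\tau/d)^{d-s+1})$ colourful candidate tuples, each completing an $(s-1)$-colourful tuple from the other sets with vertices of one large sampled set $R_{i^*}$. You then take a union bound over the $q$ levels. Your Case~3 count is, if anything, stated more accurately than in the written proof: it correctly gives $\Omega(\log^{d-s+1} n)$ expected hyperedges, where the paper's text says $\Omega(q)$ at that step. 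It also makes explicit that the colours filled in from $R_{i^*}$ must avoid those used by the other sets.
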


\begin{proof}
    Note that the root note is always good as $\ocvset = \emptyset$. We now bound the probability that a query made by the algorithm $\augalg$ causes it to reach a bad node given it started from a good node $\vertex$. We consider the cases whether $\size{Q_i} \leq \lbthres$ for all $i \in [s]$ or $\size{Q_i} > \lbthres$ for some $i \in [s]$ separately.

    \paragraph{$\size{Q_i} \leq \lbthres$ for all $i \in [s]$:} In this case, the query sets $Q_i$ must share a vertex with $B$. Hence, we have:
    \begin{align*}
        &\Pr[Q_i \cap B \neq \emptyset|\augalg \text{ is at a \emph{good} node }\vertex]\\
        \leq &\size{Q_i}\cdot\frac{1}{d}\cdot\frac{(2d)^d\edgecount\log\vertexcount}{\vertexcount^d}\\
        \leq &\frac{1}{s}\fbrac{\frac{\vertexcount^d}{\edgecount}}^{\frac{1}{d-s+1}}\log\vertexcount\cdot\frac{1}{d}\cdot\frac{(2d)^d\edgecount\log\vertexcount}{\vertexcount^d}\\
        = &\frac{1}{s}\fbrac{\frac{\vertexcount^d}{\edgecount}}^{\frac{s-d}{d-s+1}}\cdot d^{d-1}\log^2\vertexcount\\
        = \frac{1}{s}&\smallo{\frac{1}{\numq}}
    \end{align*}
    Hence, by union bound, no query set $Q_i$ intersect with $B$ with probability $1 - \smallo{\frac{1}{\numq}}$.

    \paragraph{If $\exists \sT \subseteq [s]$ such that no colorful $\size{\sT}$-tuple can be formed using $\sbrac{Q_t}_{t\in\sT}$:} By \Cref{Lem: No Colorful Implies Small}, we have that any such query can contain a $Q_i$ such that $\size{Q_i} >  \lbthres$ with probability at most $\smallo{\frac{1}{\numq}}$. Hence, all such queries reveals $\cup_{i \in \sT^*} Q_i$ such that $\size{Q_i} < \lbthres$ for all $i \in \sT^*$ with probability at least $1 - \smallo{\frac{1}{\numq}}$. Conditioning on the fact, we can bound the probability of reaching a bad node by $\smallo{\frac{1}{\numq}}$ using the same argument as the previous case. 

    \paragraph{$\exists i\in [s]$ such that $\size{Q_i} > \lbthres$:} In this case, there are two ways for $\augalg$ to reach a bad node: 1) If $R_i \cap B \neq \emptyset$ for some $i \in [s]$, and 2) If $\sequence{R}{s}$ is an independent set. The probability of the first event can be bounded as above. Hence, we focus on the second event. Observe that in this case there exits an $i$ such that $\size{R_i} \geq  \lbthres$. Let us denote $R_{i^*}$ to be the largest query set $R_i$ satisfying $\size{R_{i^*}} \geq \lbthres$. Observe that with probability $1 - \frac{1}{\poly{\vertexcount}}$, we have that $A_j \cap R_{i^*}= \bigomega{\frac{\lbthres}{d}}$ for all $j \in [d]$. Now, we consider the number of edges $(\sequence{\vertex}{d})$ such that $\fbrac{\sequence{\vertex}{{s-1}}}$ forms a colorful tuple formed by $\sbrac{R_i}_{i \in d} \setminus R_{i^*}$, and $\vertex_{s},\vertex_{s},\ldots,\vertex_d \in R_{i^*}$. The existence of valid $(s-1)$-colorful tuple ensures that the expected number of such edges is at least:
    \begin{align*}
        \prod_{i = s}^d \bigomega{\frac{\lbthres}{d}} = \bigomega{\fbrac{\frac{\lbthres}{d}}^{d-s+1}} = \bigomega{\numq}
    \end{align*}

    By Chernoff Bound, we have that with probability $1 - \smallo{\frac{1}{\numq}}$, there exists such an edge. Hence, by an union bound over all possible queries $\numq$, the algorithm $\augalg$ reaches a bad node with probability at most $\smallo{1}$.
\end{proof}

Next, we show that the probability of hitting a good node is not higher for a graph sampled from $\dno$ compared to that sampled from $\dyes$.

\begin{lemma}
    For every good node $\vertex$ in the tree of $\augalg$, we have:
    \begin{align*}
        \Pr_{\graph \sim \dno} [\augalg(\graph) \text{ reaches } \vertex] \leq \Pr_{\graph \sim \dyes} [\augalg(\graph) \text{ reaches } \vertex]
    \end{align*}
\end{lemma}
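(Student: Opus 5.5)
We prove the inequality with a coupling of $\dyes$ and $\dno$. First sample $G_{\mathrm{yes}}\sim\dyes$, meaning the coloring $A_1,\ldots,A_d$ together with the random colorful hyperedges. Then independently sample $B\subseteq A_1$ by including each vertex of $A_1$ with probability $\frac{(2d)^d\edgecount\log\vertexcount}{\vertexcount^d}$. Let $G_{\mathrm{no}}$ be $G_{\mathrm{yes}}$ with all hyperedges $\ang{v_b,a_2,\ldots,a_d}$, $v_b\in B$, added. By construction $G_{\mathrm{no}}\sim\dno$, and its coloring and base hyperedges coincide with those of $G_{\mathrm{yes}}$. The augmented oracle has its own internal randomness (the samples $R_i\subseteq Q_i$ in case 3). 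We use the same random bits in both runs, so these samples are identical whenever the query is identical.

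The key step is a simulation claim, proved by induction on the depth of $\vertex$. If $\augalg(G_{\mathrm{no}})$ reaches a good node $\vertex$, then $\augalg(G_{\mathrm{yes}})$ reaches $\vertex$ as well, with the same knowledge triplets $\oracle$ along the path. The root is common to both runs. For the inductive step, fix a good node $\vertex$ on the common path, with query $(Q_1,\ldots,Q_s)$. Assume the child $\vertex'$ reached under $G_{\mathrm{no}}$ is also good, i.e.\ every vertex revealed in the new triplet lies outside $B$.

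We check the three cases of the augmented oracle.
\begin{itemize}
\item Case 1 (no colorful tuple). This case depends only on the coloring, which is shared. So the same minimal $\sT^*$ is chosen and the same labels are revealed. These labels are in $\{A_1,\ldots,A_d\}$ because $\vertex'$ is good.
\item Case 2 (all $|Q_i|\le\lbthres$). The revealed data consists of the labels and of $\oceset$ restricted to $\ocvset\cup\bigcup_i Q_i$. Every extra hyperedge of $G_{\mathrm{no}}$ contains a vertex of $B$. Since no revealed vertex is in $B$, no extra hyperedge lies inside the revealed set. Hence the restricted hypergraphs agree.
\item Case 3 (some $|Q_i|>\lbthres$). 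The same $R_i$ are drawn. In $G_{\mathrm{no}}$ the tuple $(R_1,\ldots,R_s)$ must be non-independent, since otherwise the whole graph, including $B$, would be revealed and $\vertex'$ would be bad. Moreover $R_i\cap B=\emptyset$, so any witness hyperedge of $G_{\mathrm{no}}$ inside $\bigcup R_i$ avoids $B$. That witness is therefore a base hyperedge, so $(R_1,\ldots,R_s)$ is non-independent in $G_{\mathrm{yes}}$ too. The revealed triplet is again identical, by the same argument as in Case 2.
\end{itemize}
Since $\augalg$ is deterministic given the triplet, both runs make the same next query.

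The claim gives the inclusion of events $\{\augalg(G_{\mathrm{no}})\text{ reaches }\vertex\}\subseteq\{\augalg(G_{\mathrm{yes}})\text{ reaches }\vertex\}$ in the coupled space. Taking probabilities yields
\[
\Pr_{\graph\sim\dno}[\augalg(\graph)\text{ reaches }\vertex]\le\Pr_{\graph\sim\dyes}[\augalg(\graph)\text{ reaches }\vertex].
\]

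The main obstacle is Case 3, specifically the fact that "fail" reveals everything. We must argue that a failure under $G_{\mathrm{no}}$ can never produce a good node, which is immediate when $B\neq\emptyset$. When $B=\emptyset$ the two graphs are equal and there is nothing to prove. We also need independence in $G_{\mathrm{yes}}$ to be implied by independence in $G_{\mathrm{no}}$, which holds because $G_{\mathrm{no}}$ only adds hyperedges. With that, failure under $G_{\mathrm{no}}$ implies failure under $G_{\mathrm{yes}}$, and the triplets agree since they reveal full graphs that are equal when $B=\emptyset$. So the monotonicity direction is consistent with the stated inequality.
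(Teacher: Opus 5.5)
Your proposal is correct and follows the same route as the paper. You couple $\dyes$ and $\dno$ by sampling a $\dyes$ graph and then adding the hyperedges of an independently sampled $B\subseteq A_1$, and you show that whenever the $\dno$ run reaches a good node $\vertex$, the coupled $\dyes$ run reaches $\vertex$ as well. Your case-by-case check supplies the direction of implication that the inequality actually needs, which the paper's one-line justification leaves implicit: it shares the oracle's internal randomness for the $R_i$, and it rules out the failure branch via $B\neq\emptyset$.
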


\begin{proof}
    We first show a coupling between the distributions $\dyes$, and $\dno$. First, draw a graph $\graph$ from $\dyes$ and obtain corresponding labeling $\sequence{A}{d}$. Then, we draw $r \in \sbrac{0,1}^{\size{A_1}}$ where each bit is set to $1$ independently with probability $\frac{d^d\edgecount\log\vertexcount}{\vertexcount^{d}}$. Next, we construct $\graph'$ from $\graph$ by adding edges $\big\{\ang{i,a_2,a_3,\ldots,a_d}|a_2 \in A_2, a_3 \in A_3, \ldots a_d \in A_d\big\}$ for all $i \in A_1$ such that $r_i = 1$. This defines a coupling between $\dyes$ and $\dno$ as $\graph'$ has the marginal distribution $\dno$.

    Now, let us consider $\vertex$ to be a good node in the execution tree of the algorithm $\augalg$, and let $\oracle$ be its corresponding knowledge triple. $\augalg(\graph')$ reaches $\vertex$ if $\augalg(\graph)$ reaches $\vertex$ and $r_i = 0$ for all $i \in \oclbl^{-1}(A_1)$. Hence, we have:
    \begin{align*}
        \Pr_{(\graph,\graph')} [\augalg(\graph') \text{ reaches } \vertex]
        &= \Pr_{(\graph,\graph')} [\augalg(\graph) \text{ reaches } \vertex \text{ and } r_i = 0, \forall i \in \oclbl^{-1}(A_1)]\\
        &\leq \Pr_{(\graph,\graph')} [\augalg(\graph) \text{ reaches } \vertex]
    \end{align*}
\end{proof}

Now we show that an algorithm $\augalg$ using augmented oracle can not distinguish between graphs generated from $\dyes$ and $\dno$.

\begin{lemma}
    Let $\augalg$ be a deterministic algorithm using at most $\numq$ augmented oracles. Then, we have:
    \begin{align*}
        \Pr_{\graph\sim\dno}[\augalg(\graph) \text{ outputs ``no"}] -
        \Pr_{\graph\sim\dyes}[\augalg(\graph) \text{ outputs ``no"}] \leq \smallo{1}
    \end{align*}
\end{lemma}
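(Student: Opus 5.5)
We partition the leaves of the execution tree of $\augalg$ according to whether the root-to-leaf path consists only of good nodes. Since $\augalg$ is deterministic and the goodness of a node depends only on its knowledge triplet, the event that $\augalg(\graph)$ reaches a bad node is well defined. Let $L_{\mathrm{no}}$ be the set of leaves labeled ``no''. Then
\begin{align*}
\Pr_{\graph\sim\dno}[\augalg(\graph)\text{ outputs ``no''}]
&\le \Pr_{\graph\sim\dno}[\augalg(\graph)\text{ reaches a bad node}] \\
&\quad + \sum_{\vertex\in L_{\mathrm{no}},\ \vertex \text{ good}} \Pr_{\graph\sim\dno}[\augalg(\graph)\text{ reaches }\vertex].
\end{align*}
A good leaf is reached only along a path of good nodes, because once a vertex of $B$ has been revealed it stays in $\ocvset$. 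Hence the events ``reaches the good leaf $\vertex$'' are disjoint for distinct leaves, and they are disjoint from the bad-node event for the relevant paths.

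We bound the two terms separately. The first term is $\smallo{1}$ by the preceding lemma on reaching a bad node. For the second, we apply the coupling lemma leaf by leaf: for every good leaf $\vertex$,
\[
\Pr_{\dno}[\text{reach }\vertex]\le\Pr_{\dyes}[\text{reach }\vertex].
\]
Summing over good ``no'' leaves gives at most $\Pr_{\graph\sim\dyes}[\augalg(\graph)\text{ outputs ``no''}]$. Under $\dyes$ we have $B=\emptyset$, so every node reached is good and the sum over good leaves is exactly that probability. Combining the two bounds yields the claimed difference of at most $\smallo{1}$.

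The one point needing care is that the coupling lemma is stated for reaching a node, while the leaf-wise sum requires the ``reach'' events to partition the outcomes. This holds because the tree is deterministic: every graph follows a unique root-to-leaf path. A secondary subtlety is that the augmented oracle's random subsets $R_i$ add randomness. We handle this by fixing the oracle's internal coins, or by absorbing them into the branching, before applying both lemmas, and then averaging. Both earlier lemmas hold for every fixing of these coins, since their proofs use only the randomness of the graph and the samples independently.
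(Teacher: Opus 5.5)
Your proposal is correct and matches the paper's proof: you bound $\Pr_{\dno}[\text{``no''}]$ by the sum over good ``no''-leaves plus the bad-node probability. You then handle the first term leaf by leaf with the coupling lemma and the second with the bad-node lemma, exactly as the paper does. Your extra remarks are sound and make explicit what the paper leaves implicit: the disjointness of leaf-reaching events comes from the deterministic tree, and fixing the augmented oracle's sampling coins lets the two coupled graphs share them.
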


\begin{proof}
    Let $\sL$ denote the set of leaves in the execution tree of $\augalg$ that are labeled ``no". Let $\sL_{g} \subseteq \sL$ denote the subset of good nodes in $\sL$. Then, we have:
    \begin{align*}
        \Pr_{\graph\sim\dno}[\augalg(\graph) \text{ outputs ``no"}]
        &\leq \sum_{\vertex \in \sL_{g}} \Pr_{\graph\sim\dno}[\augalg \text{ reaches $\vertex$}] + \Pr_{\graph\sim\dno}[\augalg(\graph) \text{ reaches a bad node}]\\
        &\leq \sum_{\vertex \in \sL_{g}} \Pr_{\graph\sim\dyes}[\augalg \text{ reaches $\vertex$}] + \smallo{1}\\
        &\leq \Pr_{\graph\sim\dyes}[\augalg(\graph) \text{ outputs ``no"}] + \smallo{1} 
    \end{align*}

\end{proof}

{Note that the proof of Lemma 20 is done.}

                  
\subsection{Proof of the lower bound in sparse hypergraphs}\label{sec:sparse}
Let $\alpha = \frac{d-s}{d-s+1}$.
Suppose, for the sake of contradiction, that there exists an algorithm $\mathcal A$ which, for every $d$-uniform hypergraph $H_1$ on $\Theta(n)$ vertices satisfying $m< c\,n^{d-1}/(2d)^{d-1}$ (the sparse case), distinguishes between $|E(H_1)|\le m$ and $|E(H_1)|\ge 2m$ using $s$-colorful $\cisq$ queries with probability at least $2/3$, while making fewer than
\[
C'_{s,d}\cdot \frac{m^{\alpha/(d-1)}}{\log^6n}
\]
queries, where $C'_{s,d}$ is a constant to be fixed later.
We use $\mathcal A$ to contradict \Cref{lem:dense}. Define $n_2:=\big((2d)^{d-1}m/c\big)^{1/(d-1)}$, so that $m=c\,n_2^{d-1}/(2d)^{d-1}$. That is, $(n_2,m)$ sits exactly on the dense/sparse threshold for $n_2$, and in particular satisfies (with equality) the hypothesis $m\ge c\,n_2^{d-1}/(2d)^{d-1}$ of \Cref{lem:dense}. Moreover, since $m< c\,n^{d-1}/(2d)^{d-1}$, we get $n_2^{d-1}=(2d)^{d-1}m/c< n^{d-1}$, so $n_2<n$. Therefore the pair $(n_2,m)$ satisfies the assumptions of \Cref{lem:dense}.
Let $H_2$ be an arbitrary instance of \Cref{lem:dense}, and construct $H_1:=H_2\sqcup H_0$, where $H_0$ is a vertex-disjoint $d$-uniform hypergraph containing $\Theta(n)$ additional vertices and no hyperedges. The added vertices respect the given $s$-color partition. Thus, $H_1$ has $\Theta(n)$ vertices and exactly the same number of hyperedges as $H_2$.

Every $s$-colorful $\cisq$ query made by $\mathcal A$ to $H_1$ can be simulated using one query to $H_2$. Suppose $\mathcal A$ queries the sets $(S_1,\ldots,S_s)$. For every $i\in[s]$, let $S_i':=S_i\cap V(H_2)$. Since the sets $(S_1,\ldots,S_s)$ are pairwise disjoint, so are $(S_1',\ldots,S_s')$, and therefore they form a valid $s$-colorful $\cisq$ query to $H_2$. Furthermore, $H_0$ contains no hyperedges, and there are no hyperedges between $H_0$ and $H_2$. Hence, the answer to the query $(S_1,\ldots,S_s)$ on $H_1$ is identical to the answer to $(S_1',\ldots,S_s')$ on $H_2$. Consequently, $\mathcal A'$ simulates every query of $\mathcal A$ using exactly one oracle query, and therefore has the same success probability and query complexity as $\mathcal A$.

A direct computation gives $n_2^d/m = \big((2d)^{d-1}/c\big)^{d/(d-1)} m^{1/(d-1)}$, so that, writing $C=\big((2d)^{d-1}/c\big)^{d/(d-1)}$ for a constant depending only on $d$,
\[
\left(\frac{n_2^d}{m}\right)^{\alpha}=C^{\alpha}\,m^{\alpha/(d-1)}.
\]
Hence, the query bound assumed for $\mathcal A$ (and hence for its simulation $\mathcal A'$ on $H_2$) can be rewritten as
\[
C'_{s,d}\cdot \frac{m^{\alpha/(d-1)}}{\log^{c}n}
=
\frac{C'_{s,d}}{C^{\alpha}}\left(\frac{n_2^d}{m}\right)^{\alpha}\frac{1}{\log^{6}n}.
\]
We now set $C'_{s,d}:=C^{\alpha}\cdot C_{s,d}$, where $C_{s,d}$ is the constant appearing in \Cref{lem:dense}.
On the other hand, \Cref{lem:dense} asserts that every algorithm for distinguishing whether $|E(H_2)|\le m$ or $|E(H_2)|\ge 2m$ requires at least $C_{s,d}\big(n_2^d/m\big)^{\alpha}/(\log n)^{6}$ queries. This directly contradicts the query bound assumed for $\mathcal A'$. Therefore, $\mathcal A$ cannot exist, proving the desired lower bound.
Hence, we are done with the proof of \Cref{lem:sparse}.
\section{Discussion and Applications}\label{section: application}

Our main algorithm naturally yields a high-probability constant-factor approximation to the number of hyperedges. Such a coarse estimator is a fundamental primitive in the framework of \cite{DellLM22} for hyperedge estimation and almost-uniform hyperedge sampling. We observe that this framework, although originally developed for the $d$-colorful decision oracle, applies verbatim to the $s$-colorful random certificate oracle. Consequently, our main result immediately implies efficient algorithms for both problems.

By standard probability amplification, Theorem~\ref{thm:main-intro} immediately yields the following.

\begin{corollary}\label{cor:constant-factor}
For every integer $1\le s\le d$, there exists a randomized algorithm that, given access to the $s$-colorful random certificate oracle $\ciscq$, outputs an estimate $\widehat{m}$ satisfying $\frac{m}{2}\le \widehat{m}\le 2m$ with probability at least $1-1/\mathrm{poly}(n)$ using $O_{s,d}(\log^3 n\log\log n)$ $s$-colorful random certificate queries.
\end{corollary}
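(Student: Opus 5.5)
Run the algorithm of Theorem~\ref{thm:main-intro} with $\varepsilon=1/3$ a fixed constant, which gives $\widehat m\in[\tfrac23 m,\tfrac43 m]\subseteq[m/2,2m]$ with probability at least $2/3$ using $O_{s,d}(\log^2 n\log\log n)$ queries in expectation. Then amplify the confidence to $1-1/\mathrm{poly}(n)$ by the median trick, using $O(\log n)$ independent repetitions. This accounts for the extra $\log n$ factor in the stated bound $O_{s,d}(\log^3 n\log\log n)$.

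\textbf{Step 1: a constant-probability estimator with a worst-case query bound.} Theorem~\ref{thm:main-intro} only bounds the query count in expectation, because $\kappa$ is random (Lemma~\ref{lemma: kappa}). I would turn it into a worst-case bound by truncation. Run the algorithm, and abort it (outputting, say, $0$) once it has used $C\cdot c_{s,d}\log^2 n\log\log n$ queries, where $c_{s,d}\log^2n\log\log n$ is the expected bound for $\varepsilon=1/3$ and $C$ is a constant. By Markov's inequality, abortion happens with probability at most $1/C$. Taking $C=12$, the truncated run outputs a value in $[m/2,2m]$ with probability at least $2/3-1/12\ge 7/12>1/2$.

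Alternatively, one could truncate directly on $\kappa$. By the tail bound $\Pr[\kappa>k]\le n2^{-k}$ from Lemma~\ref{lemma: kappa}, capping $\kappa$ at $\log n+O(1)$ also keeps the failure probability a small constant. Either route works, and I would present the Markov one since it is black-box.

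\textbf{Step 2: median amplification.} Run the truncated estimator independently $r=\Theta(\log n)$ times and output the median $\widehat m$. If more than half of the runs land in $[m/2,2m]$, the median also lies in that interval. Each run succeeds independently with probability at least $7/12$, so by a Chernoff (or Hoeffding, Lemma~\ref{lemma: hoeffding}) bound, the probability that at most half succeed is $\exp(-\Omega(r))=1/\mathrm{poly}(n)$ for a suitable constant in $r$. The total number of queries is deterministically $r\cdot O_{s,d}(\log^2 n\log\log n)=O_{s,d}(\log^3 n\log\log n)$.

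\textbf{Main obstacle.} The only delicate point is converting the expected query complexity into a guarantee that composes with amplification. A sum of $r$ unbounded random costs needs some care, which is why I truncate each run rather than argue about the sum of expectations. Summing expectations would also give an expected $O(\log^3 n\log\log n)$ bound, if the corollary is read as an expected bound.
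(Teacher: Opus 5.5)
Your proposal is correct and is essentially the paper's argument. The paper derives the corollary in one line, ``by standard probability amplification'' of Theorem~\ref{thm:main-intro} at constant $\varepsilon$; your Markov truncation plus a median of $\Theta(\log n)$ runs makes explicit the conversion from an expected to a worst-case query bound that the paper leaves implicit. One trivial fix: Theorem~\ref{thm:main-intro} is stated for $\varepsilon\in(0,1/3)$, an open interval, so use e.g.\ $\varepsilon=1/4$ instead of $\varepsilon=1/3$, which gives $[3m/4,5m/4]\subseteq[m/2,2m]$.
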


Besides hyperedge estimation, another fundamental task is \emph{almost-uniform hyperedge sampling}, where the goal is to output a hyperedge according to a distribution $\mathcal{D}$ satisfying
\[
\Pr_{e\sim\mathcal{D}}[e]\in
\left[
\frac{1}{(1+\varepsilon)m},
\frac{1+\varepsilon}{m}
\right]
\]
for every hyperedge $e$.

As observed by \cite{BhattacharyaBGM24}, the framework of \cite{DellLM22} reduces both hyperedge estimation and almost-uniform hyperedge sampling to the problem of obtaining a high-probability constant-factor estimate of the number of hyperedges.

\begin{proposition}[{\cite{DellLM22,BhattacharyaBGM24}}]
\label{prop:dell-framework}
Suppose there exists an algorithm {\sc Coarse} that, given access to the $s$-colorful random certificate oracle, makes $q$ queries and outputs an estimate $\widehat{m}\in[m/b,bm]$ with probability at least $1-1/\mathrm{poly}(n)$. Then:
\begin{enumerate}
    \item Hyperedge estimation can be solved using $\widetilde{O}_d\!\left((\log(nb)+b^2\varepsilon^{-2}\log^2 n)\log^2 n\cdot q\right)$ queries.
    \item Almost-uniform hyperedge sampling can be solved using $\widetilde{O}_d\!\left((\log(nb)+b^2\varepsilon^{-2}\log^2 n)\log^6 n\cdot q\right)$ queries.
\end{enumerate}
\end{proposition}

Instantiating Proposition~\ref{prop:dell-framework} with the coarse estimator from Corollary~\ref{cor:constant-factor} immediately yields the following theorem.

\begin{theorem}\label{thm:hyperedge-sampling}
For every integer $1\le s\le d$, there exists a randomized algorithm that, given access to the $s$-colorful random certificate oracle $\ciscq$, computes a $(1\pm\varepsilon)$-approximation of the number of hyperedges using $\widetilde{O}_{s,d}(\varepsilon^{-2}\log^7 n\log\log n)$ queries.

Moreover, there exists a randomized algorithm that outputs a hyperedge according to a distribution $\mathcal{D}$ satisfying $\Pr_{e\sim\mathcal{D}}[e]\in\left[\frac{1}{(1+\varepsilon)m},\frac{1+\varepsilon}{m}\right]$ for every hyperedge $e$, using $\widetilde{O}_{s,d}(\varepsilon^{-2}\log^{11} n\log\log n)$ queries.
\end{theorem}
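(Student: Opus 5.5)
The plan is to instantiate Proposition~\ref{prop:dell-framework} with the coarse estimator of Corollary~\ref{cor:constant-factor}, taking $b=2$ and $q=O_{s,d}(\log^3 n\log\log n)$. The first step is to confirm that the framework of \cite{DellLM22,BhattacharyaBGM24} transfers to the $s$-colorful random certificate oracle, as the text preceding the proposition asserts. That framework uses its oracle only in two ways. It calls the coarse estimator on induced sub-hypergraphs obtained by vertex subsampling and by restriction to colour classes. It also calls an oracle answering questions about those sub-hypergraphs. A $\ciscq$ query on a sub-hypergraph induced by vertex subsets $V'_1,\ldots,V'_s$ is exactly a $\ciscq$ query on $G$ with the sets $V'_i$. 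Hence each oracle call inside the framework can be simulated by one query to the original oracle, and so can each call to the coarse estimator run on a sub-hypergraph.

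For Corollary~\ref{cor:constant-factor} itself (assumed), run Theorem~\ref{thm:main-intro} with $\varepsilon=1/3$, repeat $O(\log n)$ times, and take the median. A Chernoff bound on the number of successful runs gives failure probability $1/\mathrm{poly}(n)$. The expected cost becomes $O_{s,d}(\log^2 n\log\log n)\cdot O(\log n)$. Markov's inequality, applied with a truncation that halts after a constant multiple of the expected budget, converts this expected bound into a worst-case bound while losing only a constant in the success probability. The lost probability is then boosted again by the median step.

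The second step is the arithmetic. With $b=2$, the bracket $\log(nb)+b^2\varepsilon^{-2}\log^2 n$ is $O(\varepsilon^{-2}\log^2 n)$. For estimation the total is
\[
O(\varepsilon^{-2}\log^2 n)\cdot\log^2 n\cdot O_{s,d}(\log^3 n\log\log n)=O_{s,d}(\varepsilon^{-2}\log^7 n\log\log n),
\]
with $\widetilde{O}$ absorbing any lower-order factors hidden in the proposition. For sampling, the factor $\log^2 n$ is replaced by $\log^6 n$, which gives $\log^{11} n\log\log n$.

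The main obstacle is the first step: justifying that Proposition~\ref{prop:dell-framework} holds verbatim for $\ciscq$. I would check that every oracle call in \cite{DellLM22} is of the form ``coarse count of an induced $s$-partite or sub-hypergraph'', which is closed under restriction as argued above. I would also check that the framework never requires $d$ disjoint classes, which would fail when $s<d$. If it does require them somewhere, I would replace that step by the colour-coding reduction of Lemma~\ref{lemma: ensemblecolor}, which costs only $O_{s,d}(1)$ factors.
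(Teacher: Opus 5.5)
Your proposal follows the paper's argument: the theorem is obtained by plugging the coarse estimator of Corollary~\ref{cor:constant-factor} ($b=2$, $q=O_{s,d}(\log^3 n\log\log n)$) into Proposition~\ref{prop:dell-framework}, and your exponent count for $\log n$ ($2+2+3=7$ and $2+6+3=11$) matches. The paper takes the proposition as already stated for $\ciscq$ and does not re-verify the transfer, so that step is not needed. Note also that your fallback via Lemma~\ref{lemma: ensemblecolor} only produces $s$-partite instances, so it could not replace an internal step of the framework that genuinely required $d$-partite restrictions.
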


Combining Theorem~\ref{thm:main-intro} with Theorem~\ref{thm:hyperedge-sampling} (or the theorem obtained from Proposition~\ref{prop:dell-framework}) immediately yields the following corollary.

\begin{corollary}
For every integer $1\le s\le d$, there exists a randomized algorithm that, given access to the $s$-colorful random certificate oracle $\ciscq$, computes a $(1\pm\varepsilon)$-approximation of the number of hyperedges with probability at least $1-1/\mathrm{poly}(n)$ using
\[
\widetilde{O}_{s,d}\!\left(
\varepsilon^{-2}\log^3 n\log\log n\cdot
\min\{\varepsilon^{-2},\log^4 n\}
\right)
\]
$s$-colorful random certificate queries.
\end{corollary}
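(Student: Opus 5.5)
The plan is to combine the coarse-estimation framework of Proposition~\ref{prop:dell-framework} with the high-probability constant-factor estimator of Corollary~\ref{cor:constant-factor}, and then take the better of this bound and the direct bound from Theorem~\ref{thm:main-intro} after boosting. The two terms in the minimum correspond to these two routes: $\varepsilon^{-2}$ comes from the direct algorithm, and $\log^4 n$ from the framework.

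\textbf{Route 1 (direct algorithm).} Run the algorithm of Theorem~\ref{thm:main-intro}, equivalently \edgeestimation{}, $O(\log n)$ times independently and output the median. Each run is a $(1\pm\varepsilon)$-approximation with probability at least $2/3$. A Chernoff bound on the number of successful runs then shows the median is a $(1\pm\varepsilon)$-approximation with probability $1-1/\mathrm{poly}(n)$.

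The cost is $O(\log n)$ times the bound of Theorem~\ref{thm:main-intro}, namely
\[
\widetilde{O}_{s,d}\bigl(\varepsilon^{-4}\log^3 n\log\log n\bigr)
=\widetilde{O}_{s,d}\bigl(\varepsilon^{-2}\log^3 n\log\log n\cdot \varepsilon^{-2}\bigr),
\]
where $\widetilde{O}$ absorbs the $\log^2(1/\varepsilon)$ factor. One subtlety: Theorem~\ref{thm:main-intro} bounds only the \emph{expected} number of queries. I would handle this with Markov's inequality, truncating each run at a constant multiple of its expectation. This costs only a constant in the per-run success probability, so the median argument still works, and the total query count becomes worst-case.

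\textbf{Route 2 (framework).} Instantiate Proposition~\ref{prop:dell-framework} with \textsc{Coarse} taken to be the estimator of Corollary~\ref{cor:constant-factor}. This gives $b=2$, success probability $1-1/\mathrm{poly}(n)$, and $q=O_{s,d}(\log^3 n\log\log n)$. Part~1 of the proposition then gives
\[
\widetilde{O}_d\bigl((\log n+\varepsilon^{-2}\log^2 n)\log^2 n\cdot q\bigr)
=\widetilde{O}_{s,d}\bigl(\varepsilon^{-2}\log^7 n\log\log n\bigr)
=\widetilde{O}_{s,d}\bigl(\varepsilon^{-2}\log^3 n\log\log n\cdot\log^4 n\bigr),
\]
which is the first part of Theorem~\ref{thm:hyperedge-sampling}. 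We also need the output to hold with probability $1-1/\mathrm{poly}(n)$. I would argue this by noting that the framework's final estimate is a median of means, or by applying $O(\log n)$-fold median amplification, whose extra $\log n$ factor is absorbed into $\widetilde{O}$ under the paper's conventions.

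\textbf{Combining.} Given $\varepsilon$ and $n$, the algorithm runs whichever route has the smaller bound; both quantities are computable in advance. This yields $\widetilde{O}_{s,d}\bigl(\varepsilon^{-2}\log^3 n\log\log n\cdot\min\{\varepsilon^{-2},\log^4 n\}\bigr)$.

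\textbf{Main obstacle.} The delicate step is the bookkeeping of the high-probability guarantee and of expected versus worst-case query counts in both routes. In particular, one must check that the failure probabilities of the $\mathrm{poly}(n)$ calls to \textsc{Coarse} inside the framework union-bound to $1/\mathrm{poly}(n)$, which requires choosing the polynomial in Corollary~\ref{cor:constant-factor} large enough. The remaining steps are direct substitution.
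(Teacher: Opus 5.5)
Your proposal is correct and follows the paper's argument: take the better of Theorem~\ref{thm:main-intro} amplified by an $O(\log n)$-fold median (the $\varepsilon^{-4}\log^3 n\log\log n$ branch, which accounts for the extra $\log n$ over Theorem~\ref{thm:main-intro}) and Theorem~\ref{thm:hyperedge-sampling}, i.e.\ Proposition~\ref{prop:dell-framework} instantiated with Corollary~\ref{cor:constant-factor} (the $\varepsilon^{-2}\log^7 n\log\log n$ branch); your Markov truncation and union-bound bookkeeping just make explicit what the paper leaves implicit. Drop the fallback of an extra $O(\log n)$-fold median in Route~2: $\widetilde{O}$ cannot be hiding a full $\log n$ factor if the explicit $\log^3 n\cdot\min\{\varepsilon^{-2},\log^4 n\}$ is to mean anything, and that factor would turn $\log^7 n$ into $\log^8 n$, so you must rely, as the paper implicitly does, on the framework itself delivering the $1-1/\mathrm{poly}(n)$ guarantee.
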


The above corollary combines the strengths of the two approaches. For $\varepsilon\ge 1/\log^2 n$, the direct algorithm of Theorem~\ref{thm:main-intro} yields the better query complexity, whereas for $\varepsilon<1/\log^2 n$, the reduction through the coarse estimator and the framework of \cite{DellLM22} gives the better bound. In particular, while our direct algorithm provides the strongest known bounds for hyperedge estimation over a broad range of approximation parameters, the reduction via coarse estimation additionally yields an efficient algorithm for almost-uniform hyperedge sampling.

\section{Conclusion}
\label{sec:conclusion_and_open_problems}

We introduced a hierarchy of $s$-colorful query oracles for hyperedge
estimation and gave the first systematic study of their power. Our
results show that, for every $1\le s\le d$, a polylogarithmic number of
$s$-colorful random certificate queries suffice for hyperedge
estimation, with the exponent of the $\log n$ factor, is an absolute
constant, independent of both $s$ and $d$. In contrast, for every
$1\le s<d$, any algorithm using only the corresponding $s$-colorful
decision oracle requires polynomially many queries. Combined with the
known bounds for the $d$-colorful decision
oracle~\cite{DellLM22,BhattacharyaBGM24,DellLM24}, this reveals a phase
transition at $s=d$: the decision query complexity drops from
polynomial to polylogarithmic precisely at $s=d$, yet even there a
separation persists, as random certificates provably reduce the
exponent of the $\log n$ factor from $\Theta(d)$ to an absolute
constant. More broadly, our work unifies several previously studied
global query models within a common framework, and exhibits a
fine-grained decision-versus-search separation for approximate
counting: search with uniformly random certificates is exponentially
more powerful than decision access whenever the underlying solver
supports fewer than $d$ restriction classes. Determining the exact
query complexity of the $s$-colorful decision oracle as a function of
$s$, and characterizing the query models in which random certificates
provably help, are natural directions for future work.


\bibliographystyle{alpha}
\bibliography{refs} 


\appendix 

\section*{Appendix}

\section{Threshold Checking}\label{appendix: tc}
In this section, we present the algorithms \iscollision{}, \birthdayestimate{} and the proofs of Lemma~\ref{lemma: collision detection} and Lemma~\ref{lemma: birthdayestimate}.

\begin{algorithm}
    \caption{\iscollision($G, \threshold, \confidence$)}\label{alg: collision}
    \begin{algorithmic}[1]
        \Require \ciscq{} access to the $\qsetno$-partite $\hyperlen$-hypergraph $G$ and a threshold $\threshold$.
        \Ensure $0$ or $1$
        \For{$i \gets 1$ to $O(\log(\confidence^{-1}))$} \label{alg: collision line 1}
            \State  Uniformly and independently sample $\preresamplesize = \sqrt{2\threshold}$ random hyperedges from $G$ using \ciscq{} query.\label{alg: collision line 2}
        \If{there is a collision}
            \State $\collision_i \gets 1$
        \Else 
            \State $\collision_i \gets 0$
        \EndIf
        \EndFor
        \If{more than half of $\collision_i$ are $1$}
            \State \Return  $1$
        \Else 
            \State \Return $0$
        \EndIf
    \end{algorithmic}
\end{algorithm}

\isollison*

\begin{proof}
Fix an iteration $i$ of \iscollision{} (Line~\ref{alg: collision line 1}).  
The probability that no collision occurs in the $i$-th iteration is
\begin{align*}
    \Pr[\collision_i = 0]
    = \prod_{j=0}^{\preresamplesize - 1}
      \left(1 - \frac{j}{\edgecount(U)}\right).
\end{align*}

\noindent If $\edgecount(\graph) \leq \threshold$,
We upper bound the probability of no collision as follows:
\begin{align*}
    \Pr[\collision_i = 0]
    &\leq \exp\left(
        - \sum_{j=0}^{\preresamplesize - 1}
        \frac{j}{\edgecount(\graph)}
    \right)
    && \text{($1 + x \leq e^{x}$ for all $x \in \R$)} \\
    &= \exp\left(
        - \frac{\preresamplesize(\preresamplesize - 1)}{2 \edgecount(\graph)}
    \right) \\
    &\leq \exp\left(
        - \frac{\preresamplesize(\preresamplesize - 1)}{2 \threshold}
    \right)
    && \text{(since $\edgecount(\graph) \leq \threshold$)} \\
    &\leq \frac{1}{e}
    && \text{(using $\preresamplesize = \sqrt{2\threshold}$)}.
\end{align*}

\noindent If $\edgecount(\graph) \geq 8\threshold$,  we lower bound the probability of no collision:
\begin{align*}
    \Pr[\collision_i = 0]
    &\geq \exp\left(
        - \sum_{j=0}^{\preresamplesize - 1}
        \left(
            \frac{j}{\edgecount(\graph)} +
            \frac{j^2}{\edgecount(\graph)^2}
        \right)
    \right)
    && \text{($1 - x \geq e^{-x - x^2}$ for $x \in [0, 2/3]$)} \\
    &\geq \exp\left(
        - \frac{\preresamplesize(\preresamplesize - 1)}{16 \threshold}
        + \Theta\!\left(\frac{1}{\sqrt{\threshold}}\right)
    \right) \\
    &\geq \frac{1}{2}
    && \text{(using $\preresamplesize = \sqrt{2\threshold}$)}.
\end{align*}

\noindent If $ \edgecount( \graph ) \leq \threshold $, $ \Pr[ \collision_i = 0 ] \leq \frac{ 1 }{ e } $. \iscollision{} outputs $0$ when more than half of $ \collision_i = 0 $. Over $ \log ( \confidence^{-1} ) $ parallel runs, using the upper tail estimate of the Chernoff Bound (see Lemma~\ref{lemma: Multiplicative Chernoff Bound}), the probability that \iscollision{} outputs $0$ is at most $ \exp ( - \log ( \confidence^{-1} ) ) = \confidence $. Similarly, if $ \edgecount( \graph ) \geq 8\threshold $, the probability that \iscollision{} outputs $1$ is at most $ \confidence $.

\end{proof}

\begin{algorithm}
    \caption{\birthdayestimate{}($ \graph, \threshold, \approxerror, \confidence $)}\label{alg: birthdayestimate}
    \begin{algorithmic}[1]
        \Require \ciscq{} access to the $\qsetno$-partite $\hyperlen$-hypergraph $G$, error parameter $\approxerror \in (0, 1)$, confidence $\confidence \in (0, 1)$ and a threshold $ \threshold \geq 1 $.
        \Ensure An estimate of $ \edgecount( G ) $
        \State Uniformly and independently sample $ \Theta\fbrac{ \approxerror^{-1} \sqrt{\threshold} \log \fbrac{ \confidence^{-1} }} $ random hyperedges from $ G $ using \ciscq{} query.
        \State Let $ \resamplesize = \Theta\fbrac{ \approxerror^{-1} \sqrt{\threshold} \log \fbrac{ \confidence^{-1} }}  $ be the number of samples drawn and $ \collisioncount $ is the number of collisions.
        \State \Return $\binom{\resamplesize}{2}/\collisioncount $.
    \end{algorithmic}
\end{algorithm}

\birth*

\begin{proof}
    Let $\sbrac{e_1, e_2, \ldots, e_{\resamplesize}}$ be the sampled edges. or $1 \le i < j \le \resamplesize$, define the indicator random variable $X_{ij}$ to be $1$ if $e_i = e_j$ and $0$ otherwise . Then, 
    $$\Pr[X_{ij}=1] = \frac{1}{ \edgecount( \graph ) } \text{ and } \E[X_{ij}] = \frac{1}{ \edgecount( \graph ) }$$
    Define the total number of collisions by $ X = \sum_{1 \le i < j \le \resamplesize} X_{ij} $. By the linearity of expectation
    \begin{align*}
    \E[X] = \binom{\resamplesize}{2} \frac{1}{ \edgecount( \graph ) }
    \end{align*}
    Since the random variables $ X_{ij} $ are not independent. We will bound the variance using
    \begin{align}\label{eq: variance equation}
        \Var(X) &= \sum_{i < j } \Var(X_{ij}) + 2\sum_{i < j, k < \ell} \mathrm{Cov}(X_{ij},X_{k\ell})
    \end{align}
    Let us bound the first term $ \Var(X_{ij}) $ in~\eqref{eq: variance equation}
    \begin{align*}
        \sum_{i < j } \Var(X_{ij}) \leq \sum_{ i < j } \E[X_{ij}^2] \leq \binom{\resamplesize}{2} \frac{1}{ \edgecount( \graph ) } =  \E[X]
    \end{align*}
    Next, consider the covariance terms in~\eqref{eq: variance equation}. If $X_{ij}$ and $X_{k\ell}$ correspond to disjoint pairs, then they are independent and the covariance is zero. If they overlap in exactly one index (share exactly one index), then
    \begin{align*}
        \mathrm{Cov}(X_{ij},X_{j\ell}) &= \E[X_{ij}X_{j\ell}] - \E[X_{ij}]\E[X_{j\ell}]\\  
        &= \Pr[X_{ij} = X_{j\ell} = 1] - \frac{1}{ \edgecount( \graph )^2}\\
        &= \Pr[e_i = e_j = e_\ell] - \frac{1}{ \edgecount( \graph )^2 } \\
        &= 0
    \end{align*}
So, the bound on the variance is $ \Var(X)  = \sum_{i < j } \Var(X_{ij}) \leq \E[X] $. Using Chebyshev’s Inequality (Lemma~\ref{lemma: chebyshev}), we have
\begin{align}
    \nonumber\Pr[\abs{X - \E[X]} > \frac{\approxerror}{3} \E[X]] &\leq \frac{9\Var(X)}{\approxerror^2 (\E[X])^2}\\
    \nonumber&\leq \frac{9}{\approxerror^2\E[X]} &\text{$\fbrac{\Var(X) \leq \E[X]}$}\\
    \nonumber&= \frac{ 18\edgecount( \graph ) }{ \approxerror^2\resamplesize(\resamplesize - 1) } &\text{$\fbrac{ \E[X] = \binom{\resamplesize}{2} \frac{ 1 }{ \edgecount( \graph ) } }$}\\
    &\leq \confidence &\text{($ \edgecount(\graph) \leq \threshold $ and $\resamplesize = \Theta ( \sqrt{ \threshold } \log \fbrac{ \confidence^{-1} } /{ \approxerror } )$)}\label{ineq: collision bound}
\end{align}
To complete the proof, we need to show with probability at least $ 1 - \confidence $
\begin{align*}
    \fbrac{1 - \approxerror} \edgecount( \graph ) \leq \binom{\resamplesize}{2}  \frac{1}{X} \leq \edgecount( \graph ) \fbrac{1 + \approxerror}
\end{align*}
From~\eqref{ineq: collision bound}, with probability at least $ 1 - \confidence $,
\begin{align*}
    \fbrac{1 - \frac{\approxerror}{3}}\E\tbrac{X} \leq X \leq \fbrac{1 + \frac{\approxerror}{3}} \E\tbrac{X} 
\end{align*}
This implies
\begin{align*}
    \Rightarrow \frac{ \edgecount( \graph ) }{\fbrac{1 + \frac{\approxerror}{3}}}\leq \binom{\resamplesize}{2}  \frac{1}{X} \leq \frac{\edgecount( \graph ) }{\fbrac{1 - \frac{\approxerror}{3}}}
\end{align*}
For any $ \approxerror \in (0, 1) $
\begin{equation*}
    1 - \approxerror \leq \fbrac{1 + \frac{\approxerror}{3}}^{-1} \text{ and } \fbrac{1 - \frac{\approxerror}{3}}^{-1} \leq 1 + \approxerror
\end{equation*}
Since $X$ is the number of collisions. We conclude that
    \begin{align*}
        \fbrac{1 - \approxerror} \edgecount( \graph ) \leq \binom{\resamplesize}{2} \frac{1}{\collisioncount} \leq \edgecount( \graph ) \fbrac{1 + \approxerror}
    \end{align*}
which completes the proof.
\end{proof}

\section{Concentration Bounds}
We use the following concentration bounds~\cite{Mitzenmacher_Upfal_2005}.

\begin{lemma}[Chebyshev’s Inequality]\label{lemma: chebyshev}
    Let $ X $ be a random variable and $ a > 0 $, Then
    \[
    \Pr[ \abs{X - \E[X]} > a] \leq \frac{ \Var( X ) }{ a^2 }
    \]
\end{lemma}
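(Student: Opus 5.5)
The statement is Chebyshev's inequality. I would derive it from Markov's inequality applied to the nonnegative random variable $Y = (X - \E[X])^2$.

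The first step is to record Markov's inequality for a nonnegative random variable $Y$ and any $b > 0$, namely $\Pr[Y \ge b] \le \E[Y]/b$. This follows by taking expectations in the pointwise bound $b\cdot \indicator_{Y \ge b} \le Y$, which holds because $Y \ge 0$.

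The second step is to note that, since $a > 0$, the events $\{\abs{X - \E[X]} > a\}$ and $\{(X-\E[X])^2 > a^2\}$ coincide. The latter event is contained in $\{(X-\E[X])^2 \ge a^2\}$. Applying Markov's inequality with $Y = (X-\E[X])^2$ and $b = a^2$ then gives
\[
\Pr\tbrac{\abs{X-\E[X]} > a} \le \frac{\E\tbrac{(X-\E[X])^2}}{a^2} = \frac{\Var(X)}{a^2}.
\]

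The only point needing care is integrability. The statement implicitly assumes that $\E[X]$ exists, and the inequality is meaningful only when $\Var(X)$ is finite. If $\Var(X) = \infty$, the right-hand side is $+\infty$ and the bound holds trivially. Beyond this there is no real obstacle; the argument is the standard one from \cite{Mitzenmacher_Upfal_2005}.
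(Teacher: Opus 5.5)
Your proof is correct: it is the standard derivation of Chebyshev's inequality from Markov's inequality applied to $(X-\E[X])^2$, and your remark about the case $\Var(X)=\infty$ handles the edge case. The paper gives no proof of its own; it simply cites the textbook \cite{Mitzenmacher_Upfal_2005}, where this same argument appears.
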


\begin{lemma}[Hoeffding's bound]\label{lemma: hoeffding}
    Let $X_1, X_2, \ldots, X_t$ be independent random variables such that for all $i \in [t]$, $\E[X] = \mu $ and $\Pr[a \leq X_i \leq b] = 1$. Then
    \begin{equation*}
        \Pr \tbrac{ \abs{ \frac{ 1 }{ t } \sum_{i = 1}^t X_i - \mu }  \geq \approxerror } \leq 2 \exp{ \fbrac{ - \frac{ 2t\approxerror^2 }{ (b - a)^2 } } }.
    \end{equation*}
\end{lemma}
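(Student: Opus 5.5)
We prove Hoeffding's bound by the standard exponential-moment (Chernoff) method, treating the two tails separately and combining them with a union bound, which accounts for the factor $2$. Write $Y_i = X_i - \mu$, so that the $Y_i$ are independent, $\E[Y_i]=0$, and $Y_i \in [a-\mu, b-\mu]$ almost surely; this interval has length $b-a$. Let $S = \sum_{i=1}^t Y_i$. The upper-tail event is $\{S \geq t\approxerror\}$. For any $\lambda > 0$, Markov's inequality applied to $e^{\lambda S}$ gives
\[
\Pr[S \geq t\approxerror] \leq e^{-\lambda t \approxerror}\, \E\tbrac{e^{\lambda S}} = e^{-\lambda t\approxerror} \prod_{i=1}^t \E\tbrac{e^{\lambda Y_i}},
\]
where the factorization uses independence of the $X_i$.

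The key step, and the only nontrivial one, is \emph{Hoeffding's lemma}: if $Y$ has mean zero and $Y \in [\alpha,\beta]$ almost surely, then $\E[e^{\lambda Y}] \leq \exp\fbrac{\lambda^2(\beta-\alpha)^2/8}$. I would prove it as follows.
\begin{enumerate}
\item By convexity of $y \mapsto e^{\lambda y}$, for $y\in[\alpha,\beta]$ we have $e^{\lambda y} \leq \frac{\beta - y}{\beta-\alpha}e^{\lambda\alpha} + \frac{y-\alpha}{\beta-\alpha}e^{\lambda\beta}$.
\item Taking expectations and using $\E[Y]=0$ gives $\E[e^{\lambda Y}] \leq e^{L(h)}$. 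Here $h = \lambda(\beta-\alpha)$, $p = -\alpha/(\beta-\alpha)\in[0,1]$, and $L(h) = -hp + \log(1-p+pe^{h})$.
\item One checks that $L(0)=L'(0)=0$ and that $L''(h) = u(1-u) \leq 1/4$, where $u = pe^h/(1-p+pe^h)$.
\item Taylor's theorem with remainder then yields $L(h)\leq h^2/8$.
\end{enumerate}
This calculus verification is the main obstacle; the rest is bookkeeping.

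Plugging the lemma in, we get $\Pr[S\geq t\approxerror] \leq \exp\fbrac{-\lambda t\approxerror + t\lambda^2(b-a)^2/8}$. This is minimized at $\lambda = 4\approxerror/(b-a)^2$, which gives the bound $\exp\fbrac{-2t\approxerror^2/(b-a)^2}$. Applying the same argument to $-Y_i$, which again have mean zero and range of length $b-a$, bounds the lower tail $\Pr[S\leq -t\approxerror]$ by the same quantity. Dividing by $t$ shows that $\{\abs{\frac1t\sum X_i - \mu}\geq\approxerror\}$ is exactly the union of the two tail events. A union bound then gives the stated $2\exp\fbrac{-2t\approxerror^2/(b-a)^2}$. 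Note that the lemma only needs independence and a common mean $\mu$; identical distribution is never used.
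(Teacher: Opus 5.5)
Your proof is correct and is exactly the standard argument (Chernoff's exponential-moment method plus Hoeffding's lemma via convexity and the bound $L''\le 1/4$) given in Mitzenmacher--Upfal, which the paper simply cites rather than proving the bound itself. The only tacit assumptions are $\approxerror>0$ and $b>a$, so that $\lambda=4\approxerror/(b-a)^2$ is a legitimate positive choice; the statement assumes these implicitly as well.
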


\begin{lemma}[Multiplicative Chernoff Bound]\label{lemma: Multiplicative Chernoff Bound}
    Let $X_1,X_2,...,X_t$ be i.i.d. random variables
    where $\Pr[X_i = 1] = p$ and $\Pr[X_i = 0] = 1-p$ for all $i \in [t]$, and $X = \sum_{i \in [t]} X_i$. Then, we have
    \begin{align*}
    \Pr \left[ X \leq (1-\approxerror) \E\tbrac{X} \right] &\leq \exp{\fbrac{-\frac{\approxerror^2\E\tbrac{X}}{3}}} & 0 \leq \approxerror <1\\
    \Pr \left[ X \geq (1+\approxerror) \E\tbrac{X} \right] &\leq \exp{\fbrac{-\frac{\approxerror^2\E\tbrac{X}}{2}}} & 
   0\leq \approxerror \leq 1
    \end{align*}
\end{lemma}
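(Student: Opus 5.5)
The plan is the standard moment-generating-function (Chernoff--Cramér) argument, run separately for the two tails, following~\cite{Mitzenmacher_Upfal_2005}. Write $\mu=\E[X]=tp$. For any $\lambda>0$, Markov's inequality applied to $e^{\lambda X}$ gives $\Pr[X\ge(1+\approxerror)\mu]\le e^{-\lambda(1+\approxerror)\mu}\E[e^{\lambda X}]$. By independence,
\[
\E[e^{\lambda X}]=\prod_{i\in[t]}\bigl(1+p(e^{\lambda}-1)\bigr)\le \exp\bigl(\mu(e^{\lambda}-1)\bigr),
\]
using $1+x\le e^{x}$. Choosing $\lambda=\ln(1+\approxerror)$ yields the usual bound
\[
\Pr[X\ge(1+\approxerror)\mu]\le\Bigl(\frac{e^{\approxerror}}{(1+\approxerror)^{1+\approxerror}}\Bigr)^{\mu}.
\]
The lower tail is symmetric: apply Markov to $e^{-\lambda X}$ and choose $\lambda=-\ln(1-\approxerror)$. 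This gives $\Pr[X\le(1-\approxerror)\mu]\le\bigl(e^{-\approxerror}/(1-\approxerror)^{1-\approxerror}\bigr)^{\mu}$.

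Next, each tail is reduced to a one-variable calculus inequality. For the lower tail it suffices to show $(1-\approxerror)\ln(1-\approxerror)\ge -\approxerror+\approxerror^2/2$ on $[0,1)$. Let $g$ denote the difference of the two sides. Then $g(0)=0$ and $g'(\approxerror)=-\ln(1-\approxerror)-\approxerror\ge 0$. This gives the exponent $-\approxerror^2\mu/2$, which is stronger than the stated $-\approxerror^2\mu/3$, so the lower tail follows.

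For the upper tail as worded, with exponent $-\approxerror^2\mu/2$ on $[0,1]$, I would need
\[
h(\approxerror):=(1+\approxerror)\ln(1+\approxerror)-\approxerror-\approxerror^2/2\ge 0 .
\]
I would first try the same derivative test: $h(0)=0$ and $h'(\approxerror)=\ln(1+\approxerror)-\approxerror$.

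This step is the main obstacle, and I expect it to fail. Since $h'(\approxerror)<0$ for $\approxerror>0$, and the Taylor expansion is $h(\approxerror)=-\approxerror^3/6+O(\approxerror^4)$, the MGF route cannot reach the constant $2$. Moreover, Cramér's theorem shows that the true exponent of $\Pr[X\ge(1+\approxerror)\mu]$ for large $\mu$ is $\mu\,(\approxerror^2/2-\approxerror^3/6+\dots)$ as $p\to0$. So the statement as worded should be false for small fixed $\approxerror$ and large $\mu$.

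My concrete plan at this step is therefore the following. I would verify the counterexample numerically, taking $p$ small, $\approxerror=1/2$, and $\mu\to\infty$, and comparing against the Poisson tail. I would then record that what the argument delivers is the exponent $-\approxerror^2\mu/3$, via $(1+\approxerror)\ln(1+\approxerror)\ge\approxerror+\approxerror^2/3$ on $[0,1]$. That inequality holds by the same derivative test: $\ln(1+x)\ge 2x/3$ on $[0,1]$ by concavity, since it holds at both endpoints. Exponent $/3$ is all the paper uses in Lemma~\ref{lemma: mulestimation} and in the \iscollision{} analysis, up to constants in the sample sizes.
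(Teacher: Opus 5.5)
Your diagnosis is right: the upper tail as printed is false, and the rest of your argument is correct. The paper gives no proof of its own, only a pointer to \cite{Mitzenmacher_Upfal_2005}. There the constants sit the other way round: $e^{-\mu\varepsilon^2/3}$ for the upper tail ($0<\varepsilon\le 1$) and $e^{-\mu\varepsilon^2/2}$ for the lower tail ($0<\varepsilon<1$). So the lemma as printed has transposed them.

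Your moment-generating-function argument is exactly the textbook one. It gives the stated lower tail a fortiori, and your $\mu\varepsilon^2/3$ upper tail is the correct replacement. That weaker bound suffices wherever the paper applies the lemma. Indeed, the constant $96=3\cdot 2^{5}$ in the proof of Lemma~\ref{lemma: mulestimation} is what the $/3$ bound on both tails gives when $d=2$, and \iscollision{} only needs some constant.

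Two refinements to your counterexample. First, it is not limited to small $\varepsilon$. Your own computation ($h(0)=0$, $h'<0$ on $(0,\infty)$) shows that the rate $(1+\varepsilon)\ln(1+\varepsilon)-\varepsilon$ is strictly below $\varepsilon^2/2$ for every $\varepsilon\in(0,1]$.

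Second, you can replace the numerical check by the type-class bound $\Pr[X=k]\ge\frac{1}{t+1}e^{-tD(k/t\,\|\,p)}$ with $k=(1+\varepsilon)\mu$ an integer. Since $D((1+\varepsilon)p\,\|\,p)=p\bigl[(1+\varepsilon)\ln(1+\varepsilon)-\varepsilon\bigr]+O(p^2)$, taking for instance $t=\mu^2$ and $p=1/\mu$ gives $\Pr[X\ge(1+\varepsilon)\mu]\ge e^{-\mu[(1+\varepsilon)\ln(1+\varepsilon)-\varepsilon]-O(\log\mu)}$. This exceeds $e^{-\mu\varepsilon^2/2}$ once $\mu$ is large, which makes the disproof fully rigorous.
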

\end{document}